\crefname{section}{Sect.}{Sect.}
\Crefname{section}{Section}{Sections}
\crefname{listing}{List.}{List.}
\crefname{listing}{Listing}{Listings}
\Crefname{listing}{Listing}{Listings}
\crefname{lstlisting}{Listing}{Listings}
\Crefname{lstlisting}{Listing}{Listings}
\date{}
\newcommand*{\logsys}[1]{\textnormal{\textsf{#1}}}
\newcommand*{\A}{\Gamma}
\newcommand*{\B}{\Delta}
\def\C{\mathcal{C}} 
\newcommand*{\Var}{\mathcal{V}}
\newcommand*{\Atom}{\mathcal{A}}
\newcommand*{\Nat}{\text{Nat}}
\newcommand*{\coNat}{\text{coNat}}
\newcommand*{\false}{\mathsf{F}}
\newcommand*{\true}{\mathsf{T}}
\newcommand*{\redseq}{\rightarrow}
\newcommand\rmcutpar{\ensuremath{\mathsf{mcut(\iota, \perp\!\!\!\perp)}}\xspace}
\newcommand*{\rmcutparprime}{{\scriptsize\ensuremath{\mathsf{mcut(\iota', \perp\!\!\!\perp')}}}\xspace}
\newcommand*{\rmcutpardouble}{{\scriptsize\ensuremath{\mathsf{mcut(\iota'', \perp\!\!\!\perp'')}}}\xspace}
\newcommand*{\cutrel}{\perp\!\!\!\perp}
\newcommand*{\LK}{\logsys{\ensuremath{\text{LK}}}}
\newcommand*{\LKmod}{\logsys{\ensuremath{\text{LK}_{\Box}}}}
\newcommand*{\LJ}{\logsys{\ensuremath{\text{LJ}}}}
\newcommand*{\LL}{\logsys{\ensuremath{\text{LL}}}}
\newcommand*{\MALL}{\logsys{\ensuremath{\text{MALL}}}}
\newcommand*{\muMALLinf}{\logsys{\ensuremath{\mu \text{MALL}^{\infty}}}}
\newcommand*{\muLLinf}{\logsys{\ensuremath{\mu \text{LL}^{\infty}}}}
\newcommand*{\muLL}{\logsys{\ensuremath{\mu \text{LL}}}}
\newcommand*{\muLKmod}{\logsys{\ensuremath{\mu \text{LK}_{\Box}}}}
\newcommand*{\muLKinf}{\logsys{\ensuremath{\mu \text{LK}^{\infty}}}}
\newcommand*{\muLK}{\logsys{\ensuremath{\mu \text{LK}}}}
\newcommand*{\muLJ}{\logsys{\ensuremath{\mu \text{LJ}}}}
\newcommand*{\muLKmodinf}{\logsys{\ensuremath{\mu \text{LK}_{\Box}^{\infty}}}}
\newcommand*{\muLLmodinf}{\logsys{\ensuremath{\mu \text{LL}_{\Box}^{\infty}}}}
\newcommand*{\muLLmod}{\logsys{\ensuremath{\mu \text{LL}_{\Box}}}}
\newcommand*{\ax}{\text{ax}}
\newcommand*{\exch}{\text{ex}}
\newcommand*{\mcut}{\text{mcut}}
\newcommand*{\cut}{\text{cut}}
\newcommand*{\lkwk}{\text{w}}
\newcommand*{\wk}{\text{w}}
\newcommand*{\lkcontr}{\text{c}}
\newcommand*{\wnwk}{\wn_{\text{w}}}
\newcommand*{\wnde}{\wn_{\text{d}}}
\newcommand*{\wncontr}{\wn_{\text{c}}}
\newcommand*{\wnprom}{\wn_{\text{p}}}
\newcommand*{\ocwk}{\oc_{\text{w}}}
\newcommand*{\ocde}{\oc_{\text{d}}}
\newcommand*{\ocprom}{\oc_{\text{p}}}
\newcommand*{\diacontr}{\lozenge_{\text{c}}}
\newcommand*{\diawk}{\lozenge_{\text{w}}}
\newcommand*{\diaprom}{\lozenge_{\text{p}}}
\newcommand*{\boxcontr}{\Box_{\text{c}}}
\newcommand*{\boxwk}{\Box_{\text{w}}}
\newcommand*{\boxprom}{\Box_{\text{p}}}
\newcommand*{\ocpromloz}{\ocprom^{\lozenge}}
\newcommand*{\wnpromloz}{\wnprom^{\Box}}
\newcommand*{\cntr}{\text{c}}
\newcommand*{\occontr}[1]{\oc_{\text{c}_{#1}}}
\newcommand*{\trans}[1]{{#1}^{\bullet}}
\newcommand*{\sk}[1]{\textsc{SK}(#1)}
\newcommand{\AIC}[1]{\AxiomC{\ensuremath{#1}}}
\newcommand{\UIC}[1]{\UnaryInfC{\ensuremath{#1}}}
\newcommand{\BIC}[1]{\BinaryInfC{\ensuremath{#1}}}
\newcommand{\TIC}[1]{\TrinaryInfC{\ensuremath{#1}}}
\newcommand{\QIC}[1]{\QuaternaryInfC{\ensuremath{#1}}}
\newcommand{\RL}[1]{\RightLabel{\ensuremath{#1}}}
\newcommand{\DP}{\DisplayProof}
\def\proofref#1{
\marginpar{\vspace{-0.2cm}
\colorbox{lightgray}{\begin{minipage}{2cm}
{\scriptsize{#1}}
\end{minipage}}}}
\def\defref#1{
\marginpar{\vspace{-0.2cm}
\colorbox{lightgray}{\begin{minipage}{2cm}
{\scriptsize{#1}}
\end{minipage}}}}
\newtheorem{prop}{Proposition}
\newtheorem{defi}{Definition}
\newtheorem{lem}{Lemma}
\newtheorem{nota}{Notation}
\newtheorem{exa}{Example}
\newtheorem{coro}{Corollary}
\newtheorem{thm}{Theorem}
\newtheorem{rem}{Remark}
\newcommand*{\boxProofs}{\mathcal{C}^\Box}\newcommand*{\ocboxProofs}{\mathcal{C}^{\oc/\Box}}
\newcommand{\orig}[2]{
\begin{tikzpicture}[remember picture]
\node[inner sep=0pt,outer sep=1pt] (#1) {\ensuremath{#2}};\end{tikzpicture}}
\tikzset{%
  remember picture with id/.style={%
    remember picture,
    overlay,
    save picture id=#1,
  },
  save picture id/.code={%
    \edef\pgf@temp{#1}%
    \immediate\write\pgfutil@auxout{%
      \noexpand\savepointas{\pgf@temp}{\pgfpictureid}}%
  },
  if picture id/.code args={#1#2#3}{%
    \@ifundefined{save@pt@#1}{%
      \pgfkeysalso{#3}%
    }{
      \pgfkeysalso{#2}%
    }
  }
}
\def\savepointas#1#2{%
  \expandafter\gdef\csname save@pt@#1\endcsname{#2}%
}
\def\tmk@labeldef#1,#2\@nil{%
  \def\tmk@label{#1}%
  \def\tmk@def{#2}%
}
\newcommand\tikzmark[2][]{%
\tikz[remember picture with id=#2] #1;}
\tikzset{%
  remember picture with id/.style={%
    remember picture,
    overlay,
    save picture id=#1,
  },
  save picture id/.code={%
    \edef\pgf@temp{#1}%
    \immediate\write\pgfutil@auxout{%
      \noexpand\savepointas{\pgf@temp}{\pgfpictureid}}%
  },
  if picture id/.code args={#1#2#3}{%
    \@ifundefined{save@pt@#1}{%
      \pgfkeysalso{#3}%
    }{
      \pgfkeysalso{#2}%
    }
  }
}
\def\savepointas#1#2{%
  \expandafter\gdef\csname save@pt@#1\endcsname{#2}%
}
\def\tmk@labeldef#1,#2\@nil{%
  \def\tmk@label{#1}%
  \def\tmk@def{#2}%
}
\begin{document}

\title{On the cut-elimination of the modal $\mu$-calculus: Linear Logic to the rescue}
\titlerunning{Modal $\mu$-calculus cut-elimination}

 \author{Esaïe B{\scriptsize AUER} \and Alexis S{\scriptsize AURIN}}

 \authorrunning{E. B{\scriptsize AUER} \& A. S{\scriptsize AURIN}}

 \institute{\email{esaie.bauer@irif.fr}\qquad \email{alexis.saurin@irif.fr}\\Université Paris Cité \& CNRS \& INRIA, Pl. Aurélie Nemours,  75013 Paris, France}

%
%

\maketitle

 \begin{abstract}
This paper presents a proof-theoretic analysis of the modal $\mu$-calculus.
More precisely, we prove a syntactic cut-elimination for the non-wellfounded modal $\mu$-calculus, using methods from linear logic and its exponential modalities. 
To achieve this, we introduce a new system, \muLLmodinf{}, which is a linear version of the modal $\mu$-calculus, intertwining the modalities from the modal $\mu$-calculus with the exponential modalities from linear logic. Our strategy for proving cut-elimination involves (i) proving cut-elimination for \muLLmodinf{} and (ii) translating proofs of the modal mu-calculus into this new system via a ``linear translation'', allowing us to extract the cut-elimination result.
\end{abstract}


\section{Introduction}

\paragraph{Eliminability of cuts and the modal $\mu$-calculus.}
Since Kozen's seminal work on the \emph{modal $\mu$-calculus}~\cite{DBLP:journals/tcs/Kozen83}, this logic extending basic modal logic with least and greatest fixed-points has been extremely fruitful for the study of computational systems,
especially reactive systems.
In addition to its wide expressive power, its deep roots in logic also allow for a number of fruitful approaches, be they model-theoretic, proof-theoretic or automata-theoretic.
Still, \emph{cut-elimination} -- a cornerstone of modern proof-theory -- has only received partial solutions~\cite{Bahareh17,BRUNNLER12,Mints12,MintsStuder12,NIWINSKI96}:
\begin{itemize}
\item Either as cut-admissibility statements which are noneffective 
possibly using infinitary proof-systems such as (i) infinitely branching proof systems, allowing the $\omega$-rule~\cite{JAGER2008270} or (ii) non-wellfounded or circular proof systems~\cite{Bahareh17,NIWINSKI96}, allowing proof-trees with infinitely long branches;
\item Or as syntactic cut-elimination results capturing only a {\it fragment} of the calculus in systems with the $\omega$-rule~\cite{Mints12,MintsStuder12}. 
A challenge in describing a \emph{syntactic cut-elimination} for such systems is that the number of applications of a $\mu$-rule must sometimes be determined before knowing how many are needed to match each premises of a $\nu$-rule. In~\cite{BRUNNLER12}, the authors discuss a specific example where syntactic cut-elimination fails.
While there are syntactic cut-elimination results in systems based on the $\omega$-rule~\cite{BRUNNLER12,Mints12,MintsStuder12}, they capture only strict fragments of the modal $\mu$-calculus.
\end{itemize}
In fact, there is no syntactic cut-elimination theorem for the full modal $\mu$-calculus.
The present work establishes such a syntactic cut-elimination theorem for the modal $\mu$-calculus 
in the setting of non-wellfounded sequent calculus. 

\paragraph{On unity and diversity in computational logic.} 
Logic presents at the same time a deep unity and a wide diversity. Miller~\cite{miller-unity-computational-logic} argues that the 
{\it ``universal character [of logic] has been badly fractured in the past few decades''}, due to the wide range of its applications, the various families of logics that have emerged and the different computational tools that are in use, often with little relationship.
Miller thus proposes 
 the following questions as the first of a list of ``challenges'':

\begin{description}
\item[Challenge 1:] {\it Unify a wide range of logical features into a single framework. How best can we explain the many enhancements that have been designed for logic: for example, classical / intuitionistic / linear, fixed points, first-order / higher-order quantification, modalities, and temporal operators? (...)
}
\end{description}

In the present paper,  we partially address Miller's first challenge, providing a common framework for two of the main logics that emerged in the 1980s, Kozen's modal $\mu$-calculus~\cite{DBLP:journals/tcs/Kozen83} and Girard's linear logic (\LL{})~\cite{girard87}.
Working in the setting of circular and non-wellfounded proof systems for the above logics, we propose a so-called {\it linear decomposition} of the modal $\mu$-calculus in linear logic with fixed-points. This proof-theoretic analysis of the modal $\mu$-calculus allows us to do a finer-grained treatment of syntactic cut-elimination.

\paragraph{Cut-admissibility vs. cut-elimination.} 
The treatment of the cut-inference in \linebreak sequent-based proof-systems follows two main traditions: (i) one can consider cut-free proofs as the primitive proof-objects and establish that the cut inference is {\it admissible} (according to this tradition, the cut-inference essentially lives at the metalevel, ensuring compositionality of the logic) or 
(ii) one can consider that the cut inference lives at the object-level and is a fundamental piece of proofs: one thus establishes that the cut inference is \emph{eliminable}, ensuring the sub-formula property (and its numerous important consequences, ranging from consistency to interpolation properties). 

This second tradition may use similar techniques as the first tradition, 
but it also permits the investigation of a syntactic, or effective, approach to cut-elimination, consisting of a cut-reduction relation on proofs, shown to be (at least) weakly normalizing, with the normal forms being cut-free proofs. 
An advantage of such syntactic cut-elimination results is that, in many settings (most notably $\LJ$ and $\LL$~\cite{girard87}), such cut-reductions induce an interesting relation on proofs and have a computational interpretation that is the starting point of the Curry-Howard correspondence built upon sequent calculus~\cite{DBLP:conf/icfp/CurienH00}.

\paragraph{Linear Logic.}
Linear logic (\LL) is often described as a resource-sensitive logic. It is more accurate, though, to view it as a logic designed for analyzing cut-elimination itself.
Indeed, \LL{} comes from an analysis of structural rules, aiming at controlling them rather than weakening them as in substructural logics. This solves some fundamental drawbacks of cut-elimination in classical logic, such as its non-termination or non-confluence. 
%
%
For instance, \LL{} permits the decomposition of
both intuitionistic and classical logic, in a structured and fine-grained manner allowing the refinement of the cut-elimination of those logics as well as their notion of model (allowing the building of a non-trivial denotational model of classical proofs); the prototypical example of such a linear decomposition consists in decomposing the usual \emph{intuitionistic} arrow ({\it i.e.}, the function type of the $\lambda$-calculus), $A \Rightarrow B$, into a replication operator and a linear implication: $\oc A \multimap B$~\cite{DanosJS97,girard87}. 
Further analyses on these exponential modalities led to the discovery of \emph{light logics}, where the complexity of cut-elimination is tamed in a flexible way, usually by considering alternative, 
weaker 
exponential modalities.

The proof theory of \LL{} was extended to \muLLinf{}, that is \LL{} 
with fixed-points in the finitary and non-wellfounded setting~\cite{DBLP:journals/tocl/Baelde12,aminaphd,DBLP:conf/fossacs/Santocanale02,Santocanale13,TABLEAUX23}
and \muLLinf{} allowed for the same kind of linear decomposition for (the non-wellfounded version of) \muLJ{} and \muLK{}. 
A natural question is whether the extensions of \LL{} with fixed-points can also help us achieve syntactic cut-elimination for the modal $\mu$-calculus.

\paragraph{Contributions.}
The discussion of the above paragraph suggests a first question: what would be a linear decomposition of the modal $\mu$-calculus? 
The first contribution of this paper is to provide such a linear decomposition of the modal $\mu$-calculus which is compatible with circular and non-wellfounded proof theory,  \muLLmodinf{}. This \emph{linear-logical modal $\mu$-calculus} will allow us to complete the analysis of cut-elimination for the modal $\mu$-calculus, proving the first syntactic cut-elimination theorem for the full modal $\mu$-calculus (in the non-wellfounded setting).
We therefore adopt the following roadmap  in the body of the paper.
 
 In \Cref{section:definitionsOfKnownSequentCalculi}, we recall the necessary technical background about \muLLinf{} and \muLKmodinf{} proof theory. 
 In \Cref{section:ModalMu}, we motivate and introduce \muLLmodinf{}
and prove its cut-elimination 
in \Cref{section:mullmodinfCutElim}. We then define the linear decomposition of \muLKmodinf{}   into  \muLLmodinf{} from which we conclude to  \muLKmodinf{} cut-elimination theorem in \Cref{section:muLKmodinfcutElim} in the form of an infinitary weak-normalizing cut-reduction system. 

\section{Circular and (non-)wellfounded proof systems}
\label{section:definitionsOfKnownSequentCalculi}

We recall here some basic definitions of both wellfounded \& non-wellfounded systems. 
Some standard definitions of proof theory such as \emph{derivation rules}, \emph{active formula} or \emph{principal formula} are not recalled and can be found in~\cite{BussHandbook}.

\subsection{The Modal $\mu$-calculus}

\subsubsection{Formulas}
Let $\Var$ and $\Atom$ be two disjoint sets of  \emph{fixed-point variables} and of \emph{atoms} respectively.
We define the pre-formulas of the modal $\mu$-calculus, \muLKmodinf{}, as:
{\small $F, G\hspace{0.5em} ::=\hspace{0.5em} 
a
\mid X
\mid \mu X. F \mid \nu X. F \mid \Box F \mid \lozenge F \mid F^\perp \mid F\rightarrow G \mid F \lor G \mid F \land G \mid \false \mid \true\\ (a\in\Atom, X\in\Var)$.}
Knaster-Tarski's theorem guarantees the existence of extremal fixed-points of monotonic functions on complete lattices; monotonicity is reflected syntactically as a \emph{positivity condition} on variables,  defined as:

\begin{defi}[Positive \& negative occurrence of fixed-point variables]
Let $X\in \Var$ be a fixed-point variable, one defines the fact, for $X$, to occur positively (resp. negatively) in a pre-formula by induction on the structure of pre-formulas: 
\begin{itemize}
\item 
 $X$ occurs positively in $X$.
\item
 $X$ occurs positively (resp. negatively) in ${c} (F_1, \dots, F_n)$, if there is some $1\leq i \leq n$ such that $X$ occurs positively (resp. negatively) in $F_i$ for $c\in\{\Box, \lozenge, \lor, \land\}$.
\item
 $X$ occurs positively (resp. negatively) in $F^\perp$ if it occurs negatively (resp. positively) in $F$.
\item 
 $X$ occurs positively (resp. negatively) in $F\rightarrow G$ if $X$ occurs either positively (resp. negatively) in $G$ or negatively (resp. positively) in $F$.
\item 
 $X$ occurs positively (resp. negatively) in $\delta Y. G$ (with $Y\neq X$) if it occurs positively (resp. negatively) in $G$ (for $\delta\in\{\mu, \nu\}$).
\end{itemize}

\end{defi}

We can now define the formulas of \muLKmodinf{}:
\begin{defi}[Formulas]
\label{formulasdef}
A \muLKmodinf{} \emph{formula} $F$ is a closed pre-formula 
such that for any sub-pre-formula of $F$ of the form $\delta X. G$ (with $\delta\in\{\mu, \nu\}$), $X$ does not occur negatively in $G$.
\end{defi}

By considering the $\{\mu,\nu, X\}$-free formulas of this system, we get
\LKmod. By considering the $\{\Box, \lozenge\}$-free formulas of \muLKmodinf{}, we get the $\mu$-calculus.
Finally, the intersection of these two systems, is
propositional classical logic.

\medskip
\subsubsection{Sequent calculus}
We define, here, the sequents, rules and proofs for \muLKmodinf{}.

\begin{defi}[Sequent]
A \emph{sequent} is an ordered pair of two lists of formulas $\A$, the \emph{antecedent}, and $\B$, the \emph{succedent}, that we write $\A\vdash \B$. 
%
\end{defi}

\begin{rem}[Derivation rules \& ancestor relation]
In the structural proof theory literature, inference rules are defined together with an \emph{ancestor relation} (or sub-occurrence relation) between formulas of the conclusion and formulas of the premises of the rule. While this  relation is often overlooked we provide some details here.
Sequent being lists, we define the ancestor relation, to be a relation from the positions of the formula in the conclusion, to the positions of the formula in the premises.
Those ancestor relations will be dealt graphically as in \Cref{fig:LKmodrules,fig:fixFragment}, by drawing the ancestor relation on sequents when needed and leaving it implicit when unambiguous.
\end{rem}

We define the inference rules for \LKmod{} in \Cref{fig:LKmodrules}.
\input{LKmodrules}
Rules for \LK{} will be the $\{\Box,\lozenge\}$-free rules of \LKmod.
We add rules of \Cref{fig:fixFragment} to \LK, (resp. \LKmod) to get the fixed-point version \muLKinf{} (resp. \muLKmodinf) of this system.
\begin{figure}[t]
\hspace{-1cm}
${\scriptsize
\AIC{\tikzmark{mulcl2}\A, \tikzmark{mulf2}F[X:=\mu X.F]\vdash \tikzmark{mulc2}\B}
\RL{\mu_l}
\UIC{\tikzmark{mulcl1}\A, \tikzmark{mulf1}\mu X.F\vdash \tikzmark{mulc1}\B}
\DP\quad
\AIC{\tikzmark{murcl2}\A\vdash \tikzmark{murf2}F[X:=\mu X.F], \tikzmark{murc2}\B}
\RL{\mu_r}
\UIC{\tikzmark{murcl1}\A\vdash \tikzmark{murf1}\mu X.F, \tikzmark{murc1}\B}
\DP\quad
\AIC{\tikzmark{nulcl2}\A, \tikzmark{nulf2}F[X:=\nu X.F]\vdash \tikzmark{nulc2}\B}
\RL{\nu_l}
\UIC{\tikzmark{nulcl1}\A, \tikzmark{nulf1}\nu X.F\vdash \tikzmark{nulc1}\B}
\DP\quad
\AIC{\tikzmark{nurcl2}\A\vdash \tikzmark{nurf2}F[X:=\nu X.F], \tikzmark{nurc2}\B}
\RL{\nu_r}
\UIC{\tikzmark{nurcl1}\A\vdash \tikzmark{nurf1}\nu X.F, \tikzmark{nurc1}\B}
\DP}
$
\begin{tikzpicture}[overlay,remember picture,-,line cap=round,line width=0.1cm]
    \draw[rounded corners, smooth=2,green, opacity=.3] ([xshift=4mm] pic cs:mulf1) to ([xshift=2mm, yshift=2mm] pic cs:mulf2);
    \draw[rounded corners, smooth=2,red, opacity=.3] (pic cs:mulc1) to ([xshift=2mm, yshift=2mm] pic cs:mulc2);
    \draw[rounded corners, smooth=2,red, opacity=.3] (pic cs:mulcl1) to ([xshift=2mm, yshift=2mm] pic cs:mulcl2);
    \draw[rounded corners, smooth=2,green, opacity=.3] ([xshift=4mm] pic cs:murf1) to ([xshift=2mm, yshift=2mm] pic cs:murf2);
    \draw[rounded corners, smooth=2,red, opacity=.3] (pic cs:murc1) to ([xshift=2mm, yshift=2mm] pic cs:murc2);
    \draw[rounded corners, smooth=2,red, opacity=.3] (pic cs:murcl1) to ([xshift=2mm, yshift=2mm] pic cs:murcl2);
    \draw[rounded corners, smooth=2,green, opacity=.3] ([xshift=4mm] pic cs:nulf1) to ([xshift=2mm, yshift=2mm] pic cs:nulf2);
    \draw[rounded corners, smooth=2,red, opacity=.3] (pic cs:nulc1) to ([xshift=2mm, yshift=2mm] pic cs:nulc2);
    \draw[rounded corners, smooth=2,red, opacity=.3] (pic cs:nulcl1) to ([xshift=2mm, yshift=2mm] pic cs:nulcl2);
    \draw[rounded corners, smooth=2,green, opacity=.3] ([xshift=4mm] pic cs:nurf1) to ([xshift=2mm, yshift=2mm] pic cs:nurf2);
    \draw[rounded corners, smooth=2,red, opacity=.3] (pic cs:nurc1) to ([xshift=2mm, yshift=2mm] pic cs:nurc2);
    \draw[rounded corners, smooth=2,red, opacity=.3] (pic cs:nurcl1) to ([xshift=2mm, yshift=2mm] pic cs:nurcl2);
\end{tikzpicture}
    \caption{Rules for the fixed-point fragment\label{fig:fixFragment}}
\end{figure}
The two \emph{exchange rules} $(\exch_l)$ and $(\exch_r)$ from \Cref{fig:LKmodrules,fig:MALLrules}
allows one to derive the rule \quad $\AIC{\sigma'(\A)\vdash \sigma{(\B)}}
\RL{\exch(\sigma',\sigma)}
\UIC{\A\vdash \B}
\DP
$\quad
for any permutations $\sigma$ and $\sigma'$ of $\{ 1,\dots, \#(\B)\}$ and $\{ 1,\dots, \#(\A)\}$ respectively, where $\sigma(\B)$ designates the action of $\sigma$ on the list $\B$, with the induced ancestor relation.
In the rest of the article, we will intentionally treat the exchange rule implicitly: the reader can consider that each of our rules are preceded and followed by a finite number of rule $(\exch)$.
%

Proofs of non fixed-point systems, \LK, \LKmod{} are the trees inductively generated by the corresponding set of rules of each of these systems.
We can define a first notion of infinite derivations, pre-proofs, that will soon be refined:
\begin{defi}[Pre-proofs]
Given a set of derivation rules, a \emph{pre-proof} is a tree co-inductively generated using the rules of the system.
\end{defi}

\begin{exa}[Regular pre-proof]
    \label{exa:circularProof}
Regular pre-proofs are those pre-proofs having a finite number of distinct sub-proofs. We represent them with back-edges.
Taking $F:=\nu X. \lozenge X$, we have:
\hfill \scalebox{.9}{$
\AIC{\orig{circularExas2}{F}\vdash F}
\RL{\diaprom}
\UIC{\lozenge F\vdash \lozenge F}
\doubleLine
\RL{\nu_l, \nu_r}
\UIC{\orig{circularExat2}{F}\vdash F}
\AIC{F\vdash \orig{circularExas1}{F}}
\RL{\diaprom}
\UIC{\lozenge F\vdash \lozenge F}
\doubleLine
\RL{\nu_l, \nu_r}
\UIC{F\vdash \orig{circularExat1}{F}}
\RL{\cut}
\BIC{F\vdash F}
\DP
$
\begin{tikzpicture}[remember picture,overlay]
     \draw [->,>=latex] ([yshift=1mm] circularExas1.east) .. controls +(15:2cm) and +(-11:2cm) .. (circularExat1.east);
    \end{tikzpicture}
\begin{tikzpicture}[remember picture,overlay]
     \draw [->,>=latex] ([yshift=1mm] circularExas2.west) .. controls +(165:1.5cm) and +(10:-2cm) .. (circularExat2.west);
    \end{tikzpicture}}
\end{exa}

\begin{rem}
    \label{rem:inconsistantPreProofSystem}
    The pre-proofs define an inconsistent system. In fact, any sequent is provable:
 \qquad\qquad    \scalebox{.9}{    $
\AIC{~~}
\noLine
\UIC{\orig{inconsists1}\A \vdash\nu X. X}
    \RL{\nu_r}
    \UIC{\orig{inconsistt1}\A \vdash \nu X. X}
    \AIC{\nu X. X\vdash \orig{inconsists2}\B}
    \RL{\nu_l}
    \UIC{\nu X. X \vdash \orig{inconsistt2}\B}
    \RL{\cut}
    \BIC{\A\vdash \B}
    \DP
    $
    \begin{tikzpicture}[remember picture,overlay]
        \draw [->,>=latex] ([yshift=1mm] inconsists1.west) .. controls +(160:2cm) and +(11:-2cm) .. (inconsistt1.west);
    \end{tikzpicture}
    \begin{tikzpicture}[remember picture,overlay]
        \draw [->,>=latex] ([yshift=1mm] inconsists2.east) .. controls +(15:2cm) and +(-11:2cm) .. (inconsistt2.east);
    \end{tikzpicture}
}
\end{rem}

Proofs are defined  as those pre-proofs satisfying a correctness condition:
\begin{defi}[Validity and proofs]
Let $b=(s_i)_{i\in\omega}$ be a consecutive sequence of sequents defining an infinite branch in a pre-proof $\pi$.
A \emph{thread} of $b$ is a sequence $(F_i\in s_i)_{i>n}$, for some $n\in\omega$, of occurrences such that for each $j$, $F_j$ and $F_{j+1}$ satisfy the ancestor relation.
We say that a thread of $b$ is \emph{valid} if the minimal recurring formula
of this sequence, for the sub-formula ordering, exists and is (i) either a $\nu$-formula appearing infinitely often on the succedent of its sequent or a $\mu$-formula appearing infinitely often in the antecedent of its sequent and (ii) the thread is infinitely often principal (there are an infinite number of principal formulas in it). A branch $b$ is \emph{valid} if there is a valid thread of $b$.

A pre-proof is \emph{valid} and is a \emph{proof} if each of its infinite branches is valid.
\end{defi}

The least ($\mu$) and greatest ($\nu$) fixed-point constructors have the same (local) derivation rules: they are distinguished by the (global) validity condition, which is a parity condition akin to parity games for the $\mu$-calculus.

\begin{figure}[t]
    \centering
    \hspace{-1cm}\begin{subfigure}{.5\textwidth}
    $
    {\scriptsize
    \pi_0:=\AIC{}
    \RL{\true}
    \UIC{\vdash \true}
    \RL{\lor_r^1}
    \UIC{\vdash \true\lor \Nat}
    \RL{\mu_r}
    \UIC{\vdash \Nat}
    \DP
    \quad
    \pi_{n+1}:=
    \AIC{\pi_n}
    \noLine
    \UIC{\vdash \Nat}
    \RL{\lor_r^2}
    \UIC{\vdash \true \lor \Nat}
    \RL{\mu_r}
    \UIC{\vdash\Nat}
    \DP}
    $\\[2ex]
    \centering
    $
    \pi_\infty:=\AIC{\vdash\orig{infiniNats}{\Nat}}
    \RL{\mu_r, \lor_r^2}
    \doubleLine
    \UIC{\vdash\orig{infiniNatt}{\Nat}}
    \DP
    $
    \begin{tikzpicture}[remember picture,overlay]
        \draw [->,>=latex] ([yshift=1mm] infiniNats.east) .. controls +(20:2cm) and +(-20:2cm) .. (infiniNatt.east);
    \end{tikzpicture}
    \caption{Cut-free pre-proofs of $\Nat$}
    \label{fig:NatProofDef}
    \end{subfigure}\quad
    \begin{subfigure}{.4\textwidth}
        $
        {\footnotesize\AIC{}
        \RL{\ax}
        \UIC{\true\vdash \true}
        \RL{\lor_r^1}
        \UIC{\true\vdash \true\lor\Nat}
        \RL{\mu_r}
        \UIC{\true\vdash\Nat}
        \AIC{\Nat\vdash\orig{doublefuns}{\Nat}}
        \RL{\mu_r, \lor_r^2}
        \doubleLine
        \UIC{\Nat\vdash\Nat}
        \RL{\lor_r^2}
        \UIC{\Nat\vdash\true\lor\Nat}
        \RL{\mu_r}
        \UIC{\Nat\vdash\Nat}
        \RL{\lor_l}
        \BIC{\true\lor\Nat\vdash\Nat}
        \RL{\mu_l}
        \UIC{\Nat\vdash\orig{doublefunt}{\Nat}}
        \DP}
        $
        \begin{tikzpicture}[remember picture,overlay]
            \draw [->,>=latex] ([yshift=1mm] doublefuns.east) .. controls +(15:2.1cm) and +(-11:3cm) .. (doublefunt.east);
        \end{tikzpicture}
    \caption{Function {\tt double}}
    \label{fig:doubleNatFun}
    \end{subfigure}
    \caption{Valid and invalid pre-proofs}
    \end{figure}


\begin{exa}[Valid and invalid pre-proofs]
    Here, we give some examples of infinite proofs. The pre-proof of \Cref{exa:circularProof} is valid, that of \Cref{rem:inconsistantPreProofSystem} is invalid.

We use the notation $\Nat := \mu X. \true\lor X$, representing the type of natural numbers. We can represent any natural number $n$ by a finite valid proof $\pi_n$ defined in \Cref{fig:NatProofDef}.
    The infinite pre-proof $\pi_{\infty}$, defined in \Cref{fig:NatProofDef}, of $\vdash\Nat$ is not valid:
    the infinite branch in $\pi_\infty$ is supported by only one thread, which is not valid as the minimal formula is a $\mu$-formula appearing only on the right of the proof.
    This is coherent with the interpretation that $\mu$ is a least fixed-point: the system would reject the \emph{infinite} proof of $\Nat$.
    Note that the same kind of proof with $\coNat := \nu X. \true\lor X$ would have given a valid proof.

    The pre-proof of \Cref{fig:doubleNatFun} (representing the {\tt double} function) is valid:
    The only infinite branch in it is the one going infinitely to the right at the application of the $(\lor_l)$-rule. This branch is supported by the infinite thread in the antecedent of each sequent which has a $\mu$-formula as its minimal formula.

\end{exa}

\subsection{Linear Logic with fixed-points}

The main difference between \LK{} (or \LJ{}) and \LL{} lies in the fact that formulas are not always erasable nor duplicable. Hence, the sequent $A, B\vdash A$ is not always provable, neither is $A\vdash A\otimes A$ (a sequent similar to $A\vdash A\land A$ in \LK). This restriction allows \LL{} to interpret programs with finer resource control than \LK{} (or \LJ{}).
Here we recall the usual definitions of both the wellfounded and non-wellfounded systems of \LL, following the definitions of the previous section. \defref{Complete definitions can be found in \Cref{app:lldefdetails}}

\subsubsection{Formulas}
As for \muLKmodinf, let us set $\Var$ and $\Atom$ as two disjoint sets.
The pre-formulas of the non-well-founded linear logic, \muLLinf{} are:
$F, G ::=  ~a \mid X \mid \mu X. F \mid \nu X. F \mid F^\perp\mid F \multimap G \mid F \parr G 
\mid F \otimes G \mid \bot \mid 1 \mid F \oplus G \mid F \with G \mid 0 \mid \top \mid \wn F \mid \oc F$, with $a\in\Atom, X\in\Var$.
%
Positivity of pre-formulas and \muLLinf{} formulas are defined in the same way as for \muLKmodinf{}.
%
The $\{\oc, \wn\}$-free formulas of \muLLinf{} are the formulas of \muMALLinf{}.
The $\{\mu,\nu, X\}$-free fragment of formulas of \muLLinf{} are the formulas of linear logic $\LL$.
The intersection of these two fragments is \MALL.

\medskip
\subsubsection{Sequent calculus}
The definition of sequent is the same as for \muLKmodinf.
The rules of \MALL{} are given by \Cref{fig:MALLrules}, the rules of \LL{} are the rules of \MALL{} together with the rules of \Cref{fig:ExponentialRules}. We add the rules of \Cref{fig:fixFragment} to \MALL{} (resp. \LL{}) to obtain the rules of \muMALLinf{} (resp. \muLLinf).
\input{MALLrules}
\begin{figure}[t]
\centering
${\scriptsize\AIC{\tikzmark{llwnwkcl2}\A \vdash \tikzmark{llwnwkc2}\B}
\RL{\wnwk}
\UIC{\tikzmark{llwnwkcl1}\A\vdash\wn F, \tikzmark{llwnwkc1}\B}
\DP
\quad
\AIC{\tikzmark{llocwkcl2}\A \vdash \tikzmark{llocwkc2}\B}
\RL{\ocwk}
\UIC{\tikzmark{llocwkcl1}\A, \oc F\vdash \tikzmark{llocwkc1}\B}
\DP\quad
\AIC{\tikzmark{llwncontrcl2}\A\vdash \tikzmark{llwncontrf2}\wn F, \tikzmark{llwncontrf3}\wn F, \tikzmark{llwncontrc2}\B}
\RL{\wncontr}
\UIC{\tikzmark{llwncontrcl1}\A\vdash \tikzmark{llwncontrf1}\wn F, \tikzmark{llwncontrc1}\B}
\DP\quad
\AIC{\tikzmark{lloccontrcl2}\A, \tikzmark{lloccontrf2}\oc F, \tikzmark{lloccontrf3}\oc F\vdash\tikzmark{lloccontrc2}\B}
\RL{\occontr{}}
\UIC{\tikzmark{lloccontrcl1}\A, \tikzmark{lloccontrf1}\oc F\vdash \tikzmark{lloccontrc1}\B}
\DP}$\\[2ex]
${\scriptsize
\AIC{\tikzmark{llwndecl2}\A\vdash \tikzmark{llwndef2}F, \tikzmark{llwndec2}\B}
\RL{\wnde}
\UIC{\tikzmark{llwndecl1}\A\vdash \tikzmark{llwndef1}\wn F, \tikzmark{llwndec1}\B}
\DP
\quad
\AIC{\tikzmark{llocdecl2}\A, \tikzmark{llocdef2}F\vdash \tikzmark{llocdec2}\B}
\RL{\ocde}
\UIC{\tikzmark{llocdecl1}\A, \tikzmark{llocdef1}\oc F\vdash \tikzmark{llocdec1}\B}
\DP
\quad
\AIC{\tikzmark{llocpromcl2}\oc\A\vdash \tikzmark{llocpromf2}F, \tikzmark{llocpromc2}\wn\B}
\RL{\ocprom}
\UIC{\tikzmark{llocpromcl1}\oc\A\vdash \tikzmark{llocpromf1}\oc F, \tikzmark{llocpromc1}\wn\B}
\DP
\quad
\AIC{\tikzmark{llwnpromcl2}\oc\A,\tikzmark{llwnpromf2}F\vdash \tikzmark{llwnpromc2}\wn\B}
\RL{\wnprom}
\UIC{\tikzmark{llwnpromcl1}\oc\A, \tikzmark{llwnpromf1}\wn F\vdash \tikzmark{llwnpromc1}\wn\B}
\DP}
$
\begin{tikzpicture}[overlay,remember picture,-,line cap=round,line width=0.1cm]
   \draw[rounded corners, smooth=2,red, opacity=.3] ([xshift=1mm, yshift=0mm] pic cs:llwnwkc1) to ([xshift=1mm, yshift=2mm] pic cs:llwnwkc2);
   \draw[rounded corners, smooth=2,red, opacity=.3] ([xshift=1mm, yshift=0mm] pic cs:llwnwkcl1) to ([xshift=1mm, yshift=2mm] pic cs:llwnwkcl2);
      \draw[rounded corners, smooth=2,red, opacity=.3] ([xshift=1mm, yshift=0mm] pic cs:llocwkc1) to ([xshift=1mm, yshift=2mm] pic cs:llocwkc2);
      \draw[rounded corners, smooth=2,red, opacity=.3] ([xshift=1mm, yshift=0mm] pic cs:llocwkcl1) to ([xshift=1mm, yshift=2mm] pic cs:llocwkcl2);
   \draw[rounded corners, smooth=2,green, opacity=.3] ([xshift=1mm, yshift=0mm] pic cs:llwncontrf1) to ([xshift=2mm, yshift=2mm] pic cs:llwncontrf2);
   \draw[rounded corners, smooth=2,green, opacity=.3] ([xshift=1mm, yshift=0mm] pic cs:llwncontrf1) to ([xshift=2mm, yshift=2mm] pic cs:llwncontrf3);
   \draw[rounded corners, smooth=2,red, opacity=.3] ([xshift=0mm, yshift=0mm] pic cs:llwncontrc1) to ([xshift=1mm, yshift=2mm] pic cs:llwncontrc2);
   \draw[rounded corners, smooth=2,red, opacity=.3] ([xshift=0mm, yshift=0mm] pic cs:llwncontrcl1) to ([xshift=1mm, yshift=2mm] pic cs:llwncontrcl2);
   \draw[rounded corners, smooth=2,green, opacity=.3] ([xshift=1mm, yshift=0mm] pic cs:lloccontrf1) to ([xshift=2mm, yshift=2mm] pic cs:lloccontrf2);
   \draw[rounded corners, smooth=2,green, opacity=.3] ([xshift=1mm, yshift=0mm] pic cs:lloccontrf1) to ([xshift=2mm, yshift=2mm] pic cs:lloccontrf3);
   \draw[rounded corners, smooth=2,red, opacity=.3] ([xshift=0mm, yshift=0mm] pic cs:lloccontrc1) to ([xshift=1mm, yshift=2mm] pic cs:lloccontrc2);
   \draw[rounded corners, smooth=2,red, opacity=.3] ([xshift=0mm, yshift=0mm] pic cs:lloccontrcl1) to ([xshift=1mm, yshift=2mm] pic cs:lloccontrcl2);
   \draw[rounded corners, smooth=2,green, opacity=.3] ([xshift=2mm, yshift=0mm] pic cs:llwndef1) to ([xshift=1mm, yshift=2mm] pic cs:llwndef2);
   \draw[rounded corners, smooth=2,red, opacity=.3] ([xshift=1mm, yshift=0mm] pic cs:llwndec1) to ([xshift=1mm, yshift=2mm] pic cs:llwndec2);
   \draw[rounded corners, smooth=2,red, opacity=.3] ([xshift=1mm, yshift=0mm] pic cs:llwndecl1) to ([xshift=1mm, yshift=2mm] pic cs:llwndecl2);
   \draw[rounded corners, smooth=2,green, opacity=.3] ([xshift=2mm, yshift=0mm] pic cs:llocdef1) to ([xshift=1mm, yshift=2mm] pic cs:llocdef2);
   \draw[rounded corners, smooth=2,red, opacity=.3] ([xshift=1mm, yshift=0mm] pic cs:llocdec1) to ([xshift=1mm, yshift=2mm] pic cs:llocdec2);
   \draw[rounded corners, smooth=2,red, opacity=.3] ([xshift=1mm, yshift=0mm] pic cs:llocdecl1) to ([xshift=1mm, yshift=2mm] pic cs:llocdecl2);
   \draw[rounded corners, smooth=2,green, opacity=.3] ([xshift=2mm, yshift=0mm] pic cs:llocpromf1) to ([xshift=1mm, yshift=2mm] pic cs:llocpromf2);
   \draw[rounded corners, smooth=2,red, opacity=.3] ([xshift=2mm, yshift=0mm] pic cs:llocpromc1) to ([xshift=2mm, yshift=2mm] pic cs:llocpromc2);
   \draw[rounded corners, smooth=2,red, opacity=.3] ([xshift=2mm, yshift=0mm] pic cs:llocpromcl1) to ([xshift=2mm, yshift=2mm] pic cs:llocpromcl2);
   \draw[rounded corners, smooth=2,green, opacity=.3] ([xshift=2mm, yshift=0mm] pic cs:llwnpromf1) to ([xshift=1mm, yshift=2mm] pic cs:llwnpromf2);
   \draw[rounded corners, smooth=2,red, opacity=.3] ([xshift=2mm, yshift=0mm] pic cs:llwnpromc1) to ([xshift=2mm, yshift=2mm] pic cs:llwnpromc2);
   \draw[rounded corners, smooth=2,red, opacity=.3] ([xshift=2mm, yshift=0mm] pic cs:llwnpromcl1) to ([xshift=2mm, yshift=2mm] pic cs:llwnpromcl2);
\end{tikzpicture}
\caption{Exponential fragment of \LL}\label{fig:ExponentialRules}
\end{figure}
Pre-proofs as well as validity are defined as in the previous section.

In linear logic, duplicability and erasability are controlled by $\wn$ (\emph{why not}) and $\oc$ (\emph{of course}) modalities. The $(\occontr{})$, $(\ocwk)$, and $(\ocde)$ rules allow duplication, erasure, and use of $\oc$-prefixed antecedents, respectively. The $(\ocprom)$ rule enables using an $\oc$-prefixed succedent when all antecedents are $\oc$-prefixed. This controlled approach to contraction and weakening sequentializes certain reductions in cut-elimination, leading to strong normalization for \LL{}~\cite{PagTor10tcs}.
%
However, the good normalization properties of \LL{} can be recovered by using a linear translation from \LK{} to \LL{}, similar to the double negation translations from \LK{} to \LJ{}. Indeed, every formula, every sequent, and every proof of \LK{} can be translated into a proof in \LL{} by adding $\wn$ and $\oc$ modalities. For instance in  \Cref{exa:linearTranslation}
we show an example where each connective $c(A_1, \dots, A_n)$ can be translated as $\oc(c(\wn A_1, \dots, \wn A_n))$  adding a $\wn$ on the formula from the succedent, the additional rules being shown in blue.
\begin{figure}[t]
\hfill \scalebox{.9}{$
\AIC{}
\RL{\ax}
\UIC{A\vdash A}
\RL{(-)^\perp_r}
\UIC{\vdash A^\perp, A}
\AIC{}
\RL{\ax}
\UIC{A\vdash A}
\RL{\rightarrow_l}
\BIC{A^\perp\rightarrow A\vdash A, A}
\RL{\cntr}
\UIC{A^\perp\rightarrow A\vdash A}
\DP\quad\rightsquigarrow\quad
\AIC{}
\RL{\ax}
\UIC{A\vdash A}
\RL{{\color{blue}\wnprom, \wnde}}
\doubleLine
\UIC{\wn A\vdash \wn A}
\RL{(-)^\perp_r}
\UIC{\vdash (\wn A)^\perp, \wn A}
\doubleLine
\RL{{\color{blue}\wnde, \ocprom}}
\doubleLine
\UIC{\vdash \wn\oc(\wn A)^\perp, \wn A}
\AIC{}
\RL{\ax}
\UIC{A\vdash A}
\RL{{\color{blue}\wnprom, \wnde}}
\doubleLine
\UIC{\wn A\vdash \wn A}
\RL{\multimap_l}
\BIC{\wn\oc(\wn A)^\perp\multimap \wn A\vdash \wn A, \wn A}
\RL{{\color{blue}\ocde}}
\UIC{\oc(\wn\oc(\wn A)^\perp\multimap \wn A)\vdash \wn A, \wn A}
\RL{\wncontr}
\UIC{\oc(\wn\oc(\wn A)^\perp\multimap \wn A)\vdash \wn A}
\DP$}
\caption{Example of a linear translation\label{exa:linearTranslation}}\end{figure}
By taking any maximal sequence of cut reduction on such proofs in \LL{} and using the strong normalization property, we find a cut-free proof of the same sequent in \LL{}. This projects to an \LK{} proof of the original sequent, by simply \emph{forgetting} superfluous modalities.
In \Cref{section:muLKmodinfcutElim}, we will use this technique to prove the cut elimination of the modal $\mu$-calculus.

\subsubsection{Cut-elimination for \muLLinf{}} We postpone to \Cref{section:mullmodinfCutElim}  the discussion of cut-elimination reductions (\Cref{multicutdef}, \ref{def:infty-red-seq} and \ref{def:fair-red-seq}) and theorems (\Cref{muLLinfCutElim}): they are directly introduced for \muLLmodinf{} that we shall consider in the next section.


\section{A linear-logical modal $\mu$-calculus: \muLLmodinf{}}\label{section:ModalMu}

We now introduce \muLLmodinf{}, which can be viewed from two perspectives:
\begin{itemize}
\item as an extension of \muLLinf{} with modalities akin to the modal $\mu$-calculus;
\item as a linearization of the modal $\mu$-calculus (where \emph{linear} refers to linear logic -- by default, any assumption must be used exactly once in a proof -- and not to the structures of its models, as in LTL or linear-time $\mu$-calculus~\cite{stirling}).
\end{itemize}

In designing the \muLLmodinf{} sequent calculus, the primary challenge lies in understanding the interaction between the $\Box$ and $\lozenge$ modalities and \LL{} exponentials, while being compatible with the fixed-point inferences. 
A natural requirement for such a sequent calculus is to allow us to extend the linear decomposition of \muLKinf{}~\cite{TABLEAUX23} to \muLKmodinf{}. 
The constraints imposed by such a linear decomposition will therefore guide us in defining the exponential and modal rules of \muLLmodinf{}. The following discussion illustrates these requirements through examples and will lead us to the desired sequent calculus defined in \Cref{def:mullmod} below.

A first approach would be to simply extend \muLLinf{} with the usual inferences for $\Box$ and $\lozenge$, extending the translation $\trans{(-)}$ from~\cite{TABLEAUX23} on modalities as in \Cref{exa:linearTranslation} setting $ \trans{(\lozenge A)} := \oc\lozenge\wn\trans{(A)}$ and $\trans{(\Box A)} := \oc\Box\wn\trans{(A)} $.
This approach is too simplistic, though, and fails.
Consider indeed an instance of the modal rule $\boxprom$ (defined in \Cref{fig:LKmodrules})
with $\A=[]$ and $\B=[B]$:
$
\AIC{\vdash A, B}
\RL{\boxprom}
\UIC{\vdash \Box A, \lozenge B}
\DP.
$
Following the standard sequent translation from \LK{} to \LL{}, we need to derive $\vdash \wn\oc\Box \wn\trans{A}, \wn\oc\lozenge\wn\trans{B}$ from the premise $\vdash \wn\trans{A}, \wn\trans{B}$. Reasoning bottom-up:
to conclude $\vdash\wn\trans{A}, \wn\trans{B}$, we must remove both the $\lozenge$ and the $\Box$ with a modal rule. However, regardless of the sequence of rules used, we always end up with a sequent containing an $\oc$ together with either $\Box$ or $\lozenge$. An example of such a derivation is shown in \Cref{fig:failed-modal-derivation}.
\begin{figure}[t]
\centering
\begin{subfigure}{.3\textwidth}
$
\AIC{\vdash\oc\Box \wn\trans{A}, \lozenge\wn\trans{B}}
\RL{\wnde}
\UIC{\vdash\wn\oc\Box \wn\trans{A}, \lozenge\wn\trans{B}}
\RL{\ocprom}
\UIC{\vdash\wn\oc\Box \wn\trans{A}, \oc\lozenge\wn\trans{B}}
\RL{\wnde}
\UIC{\vdash \wn\oc\Box \wn\trans{A}, \wn\oc\lozenge\wn\trans{B}}
\DP
$
\caption{Failed linear translation of the $\Box$ rule}
\label{fig:failed-modal-derivation}
\end{subfigure}\quad
\begin{subfigure}{.3\textwidth}
$\AIC{\vdash \wn\trans{A}, \wn\trans{B}}
\RL{\boxprom}
\UIC{\vdash\Box \wn\trans{A}, \lozenge\wn\trans{B}}
\RL{\ocpromloz}
\UIC{\vdash\oc\Box \wn\trans{A}, \lozenge\wn\trans{B}}
\RL{\wnde}
\UIC{\vdash\wn\oc\Box \wn\trans{A}, \lozenge\wn\trans{B}}
\RL{\ocpromloz}
\UIC{\vdash\wn\oc\Box \wn\trans{A}, \oc\lozenge\wn\trans{B}}
\RL{\wnde}
\UIC{\vdash \wn\oc\Box \wn\trans{A}, \wn\oc\lozenge\wn\trans{B}}
\DP
$
\caption{Successful translation of the $\Box$ rule}
\label{fig:success-modal-derivation}
\end{subfigure}\quad
\begin{subfigure}{.3\textwidth}
$
\AIC{\Box\A', \oc\A\vdash A, \wn\B, \lozenge\B'}
\RL{\ocpromloz}
\UIC{\Box\A', \oc\A\vdash \oc A, \wn\B, \lozenge\B'}
\DP
$
\caption{New $\oc$-rule}
\label{fig:linear-modal-rule}
\end{subfigure}
\caption{Translating the modal rule in a linear framework}
\end{figure}
In our attempt to translate this rule, we are left with an unprovable sequent: indeed, the top sequent cannot be the conclusion of any rule in the system (except for cut and exchange, of course): $\oc\Box\wn\trans{A}$ cannot be principal since there is a $\lozenge$-formula in the context, and $\lozenge\wn\trans B$ cannot be principal since there is an $\oc$-formula. 

A natural solution is to allow $\oc$-promotion in contexts containing 
$\oc,\Box$-formulas on the left and $\wn,\lozenge$-formulas on the right
 as in \Cref{fig:linear-modal-rule}.\footnote{Note that allowing modal rules in contexts containing $\wn$ on the right and $\oc$ on the left would also solve our problem. However, forgetting the linear information of such a rule would not give us a rule derivable in \muLKmodinf{}.}
The derivation of the translation of our $(\boxprom)$ instance can be completed as
 in \Cref{fig:success-modal-derivation}.

Allowing $\Box/\lozenge$-formulas in the context of a promotion rule has implications for the system's robustness to cut-elimination.
For instance, taking the $(\wnprom/\wnwk)$ principal case and adding modal formulas to the context naturally requires
being able to weaken  $\Box$ and $\lozenge$-formulas in the antecedent (resp. succedent):

\scalebox{.85}{
$
\AIC{\Gamma\vdash \Sigma}
\RL{\wnwk}
\UIC{\Gamma\vdash \wn C, \Sigma}
\AIC{\oc\Delta{\color{blue}, \Box\Lambda}, C\vdash \wn\Phi{\color{blue}, \lozenge\Psi}}
\RL{\wnprom}
\UIC{\oc\Delta, {\color{blue}\Box\Lambda,} \wn C\vdash \wn\Phi{\color{blue}, \lozenge\Psi}}
\RL{\cut}
\BIC{\Gamma, \oc\Delta, {\color{blue}\Box\Lambda}\vdash \Sigma, \wn\Phi{\color{blue}, \lozenge\Psi}}
\DP
\rightsquigarrow
\hspace{0cm}
\AIC{\Gamma\vdash \Sigma}
\RL{\wnwk, \ocwk{\color{blue}, \boxwk, \diawk}}
\doubleLine
\UIC{\Gamma, \oc\Delta{\color{blue}, \Box\Lambda}\vdash \Sigma, \wn\Phi{\color{blue}, \lozenge\Psi}}
\DP
$}

Similarly, the $(\wncontr/\wnprom)$ key-case requires the ability to contract $\Box/\lozenge$-formulas. \defref{More details can be found in \Cref{app:DetailsDiscussionRobustnessLLbox}.}

We now give a formal definition of the linear-logical modal $\mu$-calculus:
\begin{defi}[Linear-logical modal $\mu$-calculus]
\label{def:mullmod}
Pre-formulas of \muLLmodinf{} are defined as:
\hfill $F, G \hspace{0.5em}::= \hspace{0.5em} a\in\Atom \mid X\in\Var \mid \mu X. F \mid \nu X. F \mid F^\perp\mid F \multimap G \mid F \parr G $

~\hfill $ \mid F \otimes G \mid \bot \mid 1 \mid F \oplus G \mid F \with G \mid 0 \mid \top \mid \wn F \mid \oc F \mid \lozenge F \mid \Box F.$

We obtain positivity of an occurrence and the definition of formulas of \muLLmodinf{} in the same way as for \muLLinf{} and \muLKmodinf{}.
Rules for $\muLLmodinf$ are the rules for \muLLinf{} together with 
those of
 \Cref{fig:LLTwoSidedModRules}.
Note that $(\ocprom)$ (resp. $(\wnprom)$) is an instance of $(\ocpromloz)$ (resp. $(\wnpromloz)$).
We define the sequents, pre-proofs, and proofs 
as in \Cref{section:definitionsOfKnownSequentCalculi}.
\end{defi}
\begin{figure*}[t]
    ${\footnotesize
    \AIC{\A\vdash A, \B}
    \RL{\boxprom}
    \UIC{\Box\A\vdash \Box A, \lozenge\B}
    \DP,~
    \AIC{\A\vdash \lozenge A, \lozenge A, \B}
    \RL{\diacontr}
    \UIC{\A\vdash \lozenge A, \B}
    \DP, ~
    \AIC{\A\vdash \B}
    \RL{\diawk}
    \UIC{\A\vdash \lozenge A, \B}
    \DP, ~
    \AIC{\Box \A', \oc\A\vdash A, \wn\B,\lozenge\B'}
    \RL{\ocpromloz}
    \UIC{\Box \A', \oc\A\vdash \oc A, \wn\B, \lozenge\B'}
    \DP}
    $\\[1ex]
    ${\footnotesize
    \AIC{\A, A\vdash \B}
    \RL{\diaprom}
    \UIC{\Box\A, \lozenge A\vdash \lozenge\B}
    \DP,~
    \AIC{\A, \Box A, \Box A\vdash \B}
    \RL{\boxcontr}
    \UIC{\A, \Box A\vdash \B}
    \DP, ~
    \AIC{\A\vdash \B}
    \RL{\boxwk}
    \UIC{\A, \Box A\vdash \B}
    \DP, ~
    \AIC{\Box \A', \oc\A, A\vdash \wn\B,\lozenge\B'}
    \RL{\wnpromloz}
    \UIC{\Box \A', \oc\A, \wn A\vdash \wn\B, \lozenge\B'}
    \DP}
    $
\caption{Rules involving modalities for \muLLmodinf{}}\label{fig:LLTwoSidedModRules}
\end{figure*}


To justify the relevance of \muLLmodinf{} we prove that 
there is a forgetful operation from \muLLmodinf{}  proofs to  \muLKmodinf{}, the \emph{skeleton}.
We define a translation $\sk{-}$ from \muLLmodinf{} formulas, rules and pre-proofs to \muLKmodinf{} :
\begin{defi}[\muLKmodinf-Skeleton]
    \label{def:skTranslation}
    We define the skeleton of formulas inductively: \\
$\begin{array}{rcl rcl rcl}
\sk{1}&=&\true\quad&
\sk{F\otimes G} &=& \sk{F} \land \sk{G}\quad&
\sk {F^\perp} &= &\sk{F}^\perp\\
\sk{\bot} &=& \false\quad&
\sk{F\parr G} &=& \sk{F}\lor\sk{G}\quad&
\sk{\lozenge F} & = & \lozenge \sk{F}\\
\sk{\top} &=& \true \quad&
\sk{F\with G} &=& \sk F\land \sk G \quad&
\sk{\Box F} & = & \Box \sk{F}\\
\sk{0} &=& \false \quad&
\sk{F\oplus G} &=& \sk F \lor \sk G \quad&
\sk{\wn F} &=& \sk F \\
\sk a &=& a \quad&
\sk{F\multimap G} &=&  \sk F \rightarrow \sk G&
\sk{\oc F} &=& \sk F\\
\sk X &=& X \quad&
\sk{\mu X. F} &=& \mu X. \sk F \quad&
\sk{\nu X. F} &=& \nu X. \sk F
 \end{array}$

Sequents of \muLLmodinf{} are translated to sequents of skeletons of these formulas.
Translation of rules are standard. \defref{More details in \Cref{app:skTranslation}.}
Translations of pre-proofs are obtained co-inductively by applying rule translations.
\end{defi}

The following ensures that validity is preserved both ways by $\sk{-}$:

\begin{prop}[Robustness of the skeleton to validity]\label{skeletonValidity}
If $\pi$ is a valid pre-proof, then $\sk{\pi}$ is a \muLKmodinf{} valid pre-proof, and vice versa.
\proofref{See proof in \Cref{sec:app:skeletonvalidity}.}\end{prop}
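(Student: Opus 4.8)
The plan is to track how the skeleton $\sk{-}$ acts on infinite branches and their threads, and then to verify that it transports the validity (parity) condition in both directions. I would first dispatch the routine facts about $\sk{-}$ on formulas: it commutes with substitution (so it sends fixed-point unfoldings to fixed-point unfoldings); writing $\sqsubseteq$ for the subformula ordering, $G \sqsubseteq F$ implies $\sk G \sqsubseteq \sk F$, and this inequality is \emph{strict} whenever $F$ has a $\mu$ or a $\nu$ at its root and $G \ne F$, because the path downwards from such an $F$ starts with a $\mu/\nu$ that $\sk{-}$ keeps; and it preserves whether the outermost connective of a formula is a $\mu$ or a $\nu$, the left/right side of each formula in a sequent, and the ancestor relation of every rule. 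The key structural point is that among the rules of \muLLinf{} and \Cref{fig:LLTwoSidedModRules}, exactly the dereliction rules $\ocde,\wnde$ and the promotion rules $\ocpromloz,\wnpromloz$ (hence also $\ocprom,\wnprom$) are \emph{erased} by $\sk{-}$ — each is unary, with premise and conclusion of equal skeleton — whereas every other rule is sent to a single \muLKmodinf{} rule (possibly a $\mu/\nu$ rule that happens to have equal premise and conclusion skeletons, but is nonetheless kept).

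Next I would set up the branch correspondence. Since $\sk{-}$ on pre-proofs is defined co-inductively by gluing rule translations, each infinite branch $b = (s_i)$ of $\pi$ determines an image branch $\sk b$ of $\sk\pi$ running through $\sk{s_0}, \sk{s_1}, \dots$ (consecutive $\sk{s_i}$ being identified at the erased steps), and $b \mapsto \sk b$ is a bijection from the infinite branches of $\pi$ onto those of $\sk\pi$: each erased rule strictly decreases the number of occurrences of $\oc$ and $\wn$ in the sequent, so no branch can consist of erased rules only from some point on; hence every infinite branch of $\pi$ meets infinitely many non-erased rules, so $\sk b$ is genuinely infinite, and since all erased rules are unary the branchings of $\pi$ survive in $\sk\pi$, giving both injectivity and surjectivity. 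On top of this a thread $t = (F_i)$ of $b$ pushes forward to a thread $\sk t$ of $\sk b$ (apply $\sk{-}$, then collapse the pairs $\sk{F_i} = \sk{F_{i+1}}$ created at erased steps; here one uses that $\sk{-}$ respects the ancestor relation and that erased rules do not move formulas across $\vdash$), and conversely every thread of $\sk b$ lifts to one of $b$ by re-inserting the dereliction/promotion steps. Because the formulas occurring along a thread stay inside the finite closure of the starting sequent under subformulas and unfoldings, the recurring formulas of $\sk t$ are exactly the skeletons of the recurring formulas of $t$; together with the strict-monotonicity property this shows that if $F^\star$ is the $\sqsubseteq$-minimal recurring formula of $t$ then $\sk{F^\star}$ is the $\sqsubseteq$-minimal recurring formula of $\sk t$, and, when $F^\star$ is a $\mu$- or $\nu$-formula, that $F^\star$ is the \emph{unique} recurring formula of $t$ with skeleton $\sk{F^\star}$.

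It remains to transport validity of a thread. For the forward direction, if $t$ is valid then its minimal recurring formula $F^\star$ is a genuine $\mu/\nu$-formula, so $\sk{F^\star}$ is too and of the same kind, and by the uniqueness just noted (and the fact that erased rules preserve sides) the occurrences of $F^\star$ on the relevant side of $t$ are mirrored by occurrences of $\sk{F^\star}$ on that side of $\sk t$; moreover, in a valid thread $F^\star$ must be principal infinitely often, since otherwise past its last principal step the thread, on reaching $F^\star$ again, would carry it forever as a side formula and only finitely many of its formulas would be principal, contradicting validity — and such principal steps are $\mu/\nu$ rules, which are never erased, whence $\sk{F^\star}$ is principal infinitely often in $\sk t$. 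Thus $\sk t$ is valid, $\sk b$ is valid, and as every infinite branch of $\sk\pi$ has the form $\sk b$, $\sk\pi$ is valid. The converse is symmetric: a valid thread $u$ of $\sk b$ lifts to a thread $t$ of $b$; the $\mu/\nu$-unfolding steps of $u$ lift to genuine fixed-point unfoldings in $\pi$, so $t$ visits a genuine fixed-point formula whose skeleton is the minimal recurring formula of $u$ infinitely often, which by minimality and strict monotonicity must itself be the minimal recurring formula of $t$; the side condition and ``infinitely often principal'' then come back along the lift, so $t$ and hence $b$ is valid.

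The step I expect to be the main obstacle is exactly the bookkeeping forced by the fact that $\sk{-}$ is injective neither on formulas (many \muLLmodinf{} formulas share a skeleton) nor on proofs (the dereliction/promotion steps vanish), which a priori threatens to merge distinct threads or to create spuriously small recurring formulas. The two facts that defuse this are the ones above: the erased rules are unary and strictly decrease the $\oc/\wn$-count, so they cannot fill a whole tail of a branch, and a $\mu/\nu$-formula cannot be hidden under exponentials, so the minimal recurring formula of a valid thread has a unique skeleton preimage and its defining unfolding steps survive the translation. Making these precise, together with the routine but lengthy rule-by-rule check that $\sk{-}$ preserves sides and the ancestor relation (which I would relegate to \Cref{app:skTranslation}), is where the real work lies.
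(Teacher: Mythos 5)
Your proposal is correct and follows essentially the same route as the paper, whose own proof is just a two-sentence appeal to the facts that (i) the minimal formula of a set of skeletons is the skeleton of the minimal formula, and (ii) branches and their threads are in correspondence in both directions under $\sk{-}$. You have simply worked out in detail what the paper leaves implicit (the erased rules being exactly derelictions and promotions, the impossibility of an infinite tail of erased rules, and the preservation of the ``infinitely often principal'' condition), all of which checks out.
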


    \section{Cut-elimination for \muLLmodinf{}}
\label{section:mullmodinfCutElim}
To eliminate cuts in \muLLmodinf{}, we employ a generalization of the cut inference called multicuts, as done in previous works on similar non-wellfounded proof systems~\cite{LICS22,CSL16,aminaphd,TABLEAUX23}. \defref{More details in \Cref{app:multicutdef}.}
\begin{defi}[Multicut rule]\label{multicutdef}
The multicut rule is the following rule:

$$
\AIC{\A_1\vdash \B_1}
\AIC{\dots}
\AIC{\A_n\vdash \B_n}
\RL{\rmcutpar}
\TIC{\A\vdash\B}
\DP
$$

It can have any number of premises.
The ancestor relation $\iota$ maps one formula of the conclusion to exactly one formula of the premises, while the $\cutrel$-relation links cut-formulas together, subject to acyclicity and connectedness conditions.
\end{defi}

\begin{rem}
    The idea behind the multicut is to abstract a finite tree of binary cuts quotiented by the cut-commutation rule.
    We provide an example of a multicut rule and represent graphically $\iota$ in red and $\cutrel$ in blue.

$$
    \AIC{\vdash A\tikzmark{PC1}, B}
    \AIC{B\tikzmark{PC2}\vdash C}
    \AIC{C\tikzmark{PC3}\vdash D}
    \RL{\rmcutpar}
    \TIC{\vdash A\tikzmark{CC},D}
    \DP
    \begin{tikzpicture}[overlay,remember picture,-,line cap=round,line width=0.1cm]
       \draw[rounded corners, smooth=2,red, opacity=.25] ($(pic cs:CC)+(-.2cm,.1cm)$)to ($(pic cs:PC1)+(-.2cm,-.1cm)$)to ($(pic cs:PC1)+(-.2cm,.1cm)$); 
       \draw[rounded corners, smooth=2,red, opacity=.25] ($(pic cs:CC)+(.3cm,.1cm)$)to ($(pic cs:PC3)+(.5cm,-.1cm)$)to ($(pic cs:PC3)+(.5cm,.1cm)$); 
      \draw[rounded corners, smooth=2,cyan, opacity=.25] ($(pic cs:PC1)+(.25cm,.1cm)$)to ($(pic cs:PC1)+(.25cm,-.1cm)$)to ($(pic cs:PC2)+(-.1cm,-.1cm)$) to ($(pic cs:PC2)+(-.1cm,.1cm)$);
        \draw[rounded corners, smooth=2,cyan, opacity=.25] ($(pic cs:PC2)+(.5cm,.1cm)$)to ($(pic cs:PC2)+(.5cm,-.1cm)$)to ($(pic cs:PC3)+(-.1cm,-.1cm)$) to ($(pic cs:PC3)+(-.1cm,.1cm)$);  
       \end{tikzpicture}
       $$

The multicut rule should be seen as a tree of binary cuts, cf  (\cut/\mcut)-case:

$${\small
    \AIC{\C}
    \AIC{\A_1\vdash F, \B_1 ~~ \A_2, F\vdash\B_2}
    \RL{\cut}
    \UIC{\A_1, \A_2\vdash \B_1, \B_2}
    \RL{\tiny \rmcutpar}
    \BIC{\A\vdash \B}
    \DP~
    \rightsquigarrow~
   \AIC{\C ~~
    \A_1\vdash F, \B_1
    ~~\A_2, F\vdash \B_2}
    \RL\rmcutparprime
    \UIC{\A\vdash \B}
    \DP}
    $$
    
    Here, $\iota'(p)=\iota(p)$ for positions of formulas sent to $\C$, and uses the ancestor relation of the $\cut$-rule together with $\iota$ to determine the image of the other positions.
    The relation $\cutrel'$ is obtained from $\cutrel$ by adding $p\cutrel p'$, where $p$ and $p'$ are the positions of the two cut-formulas $F$.
    \end{rem}

\subsection{The (\mcut) reduction steps}

As we use a multicut reduction strategy, we first describe the steps of reduction.
To describe these \mcut-steps of reduction, we will use a notation and a definition:
\begin{nota}[$(\oc)$-contexts] 
    If $\C$ is a list of $\muLLmodinf$-proofs, we denote by $\ocboxProofs$ the list of proofs obtained by applying one of $(\ocpromloz)$, $(\wnpromloz)$, $(\boxprom)$ or $(\diaprom)$ rules to each proof of $\C$.
    If $\C$ is a list of $\muLLmodinf$-proofs, we denote by $\boxProofs$ the list of proofs obtained by applying one of $(\boxprom)$ or $(\diaprom)$ rules to each proof of $\C$.
\end{nota}

\begin{defi}[Restriction of a $\mathsf{mcut}$ context]
\label{multicutRestriction}
Let $
\AIC{\C}
\RL{\rmcutpar}
\UIC{s}
\DP
$ be a multicut occurrence such that $\C = s_1~\dots~ s_n$ with $s_i := F_1, \dots, F_{k_i}\vdash G_1, \dots, G_{r_i}$. We define $\C_{F_j}$ (resp. $\C_{G_j}$) to be the sequents $\cutrel$-connected 
to the formula $F_j$ (resp. $G_j$).
%
This is extended to contexts of formulas with $\C_\Gamma := \cup_{F\in\Gamma} \C_F$.
\defref{More details are provided in \Cref{app:multicutRestriction}.}\end{defi}

\begin{figure*}[t]
\[
\begin{array}{c}
        \AIC{}
        \RL{\top_r}
        \UIC{\A'\vdash\top, \B'}
        \AIC{\C}
        \RL\rmcutpar
        \BIC{\A\vdash \top, \B}
        \DP\quad \rightsquigarrow\quad
        \AIC{}
        \RL{\top_r}
        \UIC{\A\vdash \top, \B}
        \DP\\[15pt]
 \AIC{}
        \RL{\ax}
        \UIC{A\vdash A}
        \RL{\rmcutpar}
        \UIC{A\vdash A}
        \DP\quad\rightsquigarrow\quad
        \AIC{}
        \RL{\ax}
        \UIC{A\vdash A}
        \DP\\[15pt]
        \hspace{-1cm}
\AIC{\A_1, F[X:=\nu X. F]\vdash \B_1}
        \RL{\nu_l}
        \UIC{\A_1, \nu X. F\vdash \B_1}
        \AIC{\A_2\vdash F[X:=\nu X. F], \B_2}
        \RL{\nu_r}
        \UIC{\A_2\vdash \nu X. F, \B_2}
        \AIC{\C}
        \RL{\rmcutpar}
        \TIC{\A\vdash\B}
        \DP
        \\[15pt]
\hspace{1cm}
     \rightsquigarrow\quad   \AIC{\A_1, F[X:=\nu X. F]\vdash \B_1}
        \AIC{\A_2\vdash F[X:=\nu X. F], \B_2}
        \AIC{\C}
        \RL{\rmcutpar}
        \TIC{\A\vdash\B}
        \DP\\[15pt]
        \hspace{-.5cm}
\AIC{\A_1\vdash A_1, \B_1}
        \AIC{\A_2\vdash A_2, \B_2}
        \RL{\otimes_r}
        \BIC{\A_1, \A_2\vdash A_1\otimes A_2, \B_1, \B_2}
        \AIC{\A_3, A_1, A_2\vdash \B_3}
        \RL{\otimes_l}
        \UIC{\A_3, A_1\otimes A_2\vdash \B_3}
        \AIC{\C}
        \RL{\rmcutpar}
        \TIC{\A\vdash \B}
        \DP
        \\[15pt]
 \hspace{1cm}\rightsquigarrow\quad
        \AIC{\A_1\vdash A_1, \B_1}
        \AIC{\A_2\vdash A_2, \B_2}
        \AIC{\A_3, A_1, A_2\vdash \B_3}
        \AIC{\C}
        \RL{\rmcutparprime}
        \QIC{\A\vdash \B}
        \DP\\[15pt]
        \hspace{-3cm}
\AIC{\A'\vdash A_1, \B'}
        \AIC{\A'\vdash A_2, \B'}
        \RL{\with_r}
        \BIC{\A'\vdash A_1\with A_2, \B'}
        \AIC{\C}
        \RL\rmcutpar
        \BIC{\A\vdash A_1\with A_2, \B}
        \DP
\\[15pt]
        \hspace{1cm}\quad
        \rightsquigarrow\quad\AIC{\A'\vdash A_1, \B'}
        \AIC{\C}
        \RL{\rmcutpar}
        \BIC{\A\vdash A_1, \B}
        \AIC{\A'\vdash A_2, \B'}
        \AIC{\C}
        \RL{\rmcutpar}
        \BIC{\A\vdash A_2, \B}
        \RL{\with_r}
        \BIC{\A\vdash A_1\with A_2, \B}
        \DP
        \end{array}
        \]
\caption{Examples of (\mcut)-step of \muMALLinf{}}\label{fig:mcutMALLExa}
\end{figure*}

The \mcut-reduction steps of \muMALLinf{} can be found in~\cite{CSL16,TABLEAUX23}; we provide examples of these steps in \Cref{fig:mcutMALLExa}.
The reduction steps for the exponential and modal fragment of \muLLmodinf{} are presented in \Cref{fig:mullmodinfexpcommcutstep,fig:mullmodinfexpprincipcutstep}, which include the exponential steps of \muLLinf{}.
For commutative steps, we present only the cases where the principal formula of the rule being commuted appears in the succedent of the sequent.
For principal steps, we present only the cases where the cut-formula is a $\wn$- or $\lozenge$-formula.
All other cases can be derived by duality.
\defref{More details in \Cref{app:mcutSteps}}
\begin{figure*}[t] 
   {\footnotesize
   \begin{align*}
{\AIC{\pi}
   \noLine
   \UIC{\oc\A_1, \Box\A_2\vdash A, \wn\B_1, \lozenge\B_2}
   \RL{\ocpromloz}
   \UIC{\oc\A_1, \Box\A_2\vdash \oc A, \wn\B_1, \lozenge\B_2}
   \AIC{\ocboxProofs}
   \RL{\rmcutpar}
   \BIC{\oc\A', \Box\A\vdash \oc A, \wn\B, \lozenge\B'}
   \DP}\quad\qquad\qquad\qquad\qquad   
   \\
   \rightsquigarrow\quad
   {
   \AIC{\pi}
   \noLine
   \UIC{\oc\A_1, \Box\A_2\vdash A, \wn\B_1, \lozenge\B_2}
   \AIC{\ocboxProofs}
   \RL{\rmcutpar}
   \BIC{\oc\A', \Box\A\vdash A, \wn\B, \lozenge\B'}
   \RL{\ocpromloz}
   \UIC{\oc\A', \Box\A\vdash \oc A, \wn\B, \lozenge\B'}
   \DP}
   \\
   {
   \AIC{\pi}
   \noLine
   \UIC{\A\vdash A, \B}
   \RL{\boxprom}
   \UIC{\Box\A\vdash \Box A, \lozenge\B}
   \AIC{\boxProofs}
   \RL{\rmcutpar}
   \BIC{\Box\A'\vdash \Box A, \lozenge\B'}
   \DP}
\quad   \rightsquigarrow\quad
   {
   \AIC{\pi}
   \noLine
   \UIC{\A\vdash A, \B}
   \AIC{\C}
   \RL{\rmcutpar}
   \BIC{\A'\vdash A, \B'}
   \RL{\boxprom}
   \UIC{\Box\A'\vdash \Box A, \lozenge\B'}
   \DP}
   \\{
   \AIC{\pi}
   \noLine
   \UIC{\A\vdash \B}
   \RL{\delta_\wk}
   \UIC{\A\vdash \delta A, \B}
   \AIC{\mathcal{C}}
   \RL{\rmcutpar}
   \BIC{\A'\vdash \delta A, \B'}
   \DP}
\quad   \rightsquigarrow\quad
   {
   \AIC{\pi}
   \noLine
   \UIC{\A\vdash \B}
   \AIC{\mathcal{C}}
   \RL{\mathsf{mcut(\iota', \perp\!\!\!\perp')}}
   \BIC{\A'\vdash \B'}
   \RL{\delta_\wk}
   \UIC{\A'\vdash \delta A, \B'}
   \DP}
   \\{
   \AIC{\pi}
   \noLine
   \UIC{\A\vdash \delta A, \delta A, \B}
   \RL{\delta_\cntr{}}
   \UIC{\A\vdash \delta A, \B}
   \AIC{\mathcal{C}}
   \RL{\rmcutpar}
   \BIC{\A'\vdash \delta A, \B'}
   \DP} 
\quad   \rightsquigarrow\quad
   {
   \AIC{\pi}
   \noLine
   \UIC{\A\vdash \delta A, \delta A, \B}
   \AIC{\mathcal{C}}
   \RL{\mathsf{mcut(\iota', \perp\!\!\!\perp')}}
   \BIC{\A'\vdash \delta A, \delta A, \B'}
   \RL{\delta_\cntr{}}
   \UIC{\A'\vdash \delta A, \B'}
   \DP}\\ 
   {
   \AIC{\pi}
   \noLine
   \UIC{\A\vdash  A, \B}
   \RL{\wnde}
   \UIC{\A\vdash \wn A, \B}
   \AIC{\mathcal{C}}
   \RL{\rmcutpar}
   \BIC{\A'\vdash \wn A, \B'}
   \DP}
   \quad\rightsquigarrow\quad
   {
   \AIC{\pi}
   \noLine
   \UIC{\A\vdash A, \B}
   \AIC{\mathcal{C}}
   \RL{\mathsf{mcut(\iota', \perp\!\!\!\perp')}}
   \BIC{\A'\vdash A, \B'}
   \RL{\wnde}
   \UIC{\A'\vdash \wn A, \B'}
   \DP}
   \end{align*}
   }
   \caption{\muLLmodinf{} exponential \& modal commutative cut-elimination steps (commutation with right rules) -- $\delta\in\{\wn, \lozenge\}$}\label{fig:mullmodinfexpcommcutstep}
   \end{figure*}

   \begin{figure*}[t]
{\small   \begin{align*}
   {\AIC{\tikzmark{contrPrincipRed12}\C_{\A,\B}}
   \AIC{\pi}
   \noLine
   \UIC{\A\vdash \delta A, \delta A, \B}
   \RL{\delta_\cntr}
   \UIC{\tikzmark{contrPrincipRed12pl}\A\vdash \delta A, \B\tikzmark{contrPrincipRed12p}}
   \AIC{\tikzmark{contrPrincipRed22}\ocboxProofs_{\delta A}}
   \RL{\rmcutpar}
   \TIC{\tikzmark{contrPrincipRed21l}\oc\A_1, \tikzmark{contrPrincipRed31l}\Box\A_2, \tikzmark{contrPrincipRed11l} \A_3\vdash \tikzmark{contrPrincipRed11}\B_1, \tikzmark{contrPrincipRed21}\wn\B_2, \tikzmark{contrPrincipRed31}\lozenge\B_3}
   \DP}
   \qquad\qquad\qquad\qquad\qquad\qquad\qquad\qquad
\\
   \quad\rightsquigarrow\quad
   {\AIC{\tikzmark{contrPrincipRedpostred12}\C_{\A,\B}}
   \AIC{\pi}
   \noLine
   \UIC{\A\vdash \delta A, \delta A, \B\tikzmark{contrPrincipRedpostred12p}}
   \AIC{\tikzmark{contrPrincipRedpostred22}\ocboxProofs_{\delta A}}
   \AIC{\ocboxProofs_{\delta A}\tikzmark{contrPrincipRedpostred32}}
   \RL{\mathsf{mcut(\iota', \perp\!\!\!\perp')}}
   \QIC{\oc\A_1, \Box\A_2, \quad\oc\A_1, \Box\A_2,\quad \A_3\vdash\tikzmark{contrPrincipRedpostred11}\B_1, \quad \tikzmark{contrPrincipRedpostred211}\wn\B_2,\tikzmark{contrPrincipRedpostred21} \lozenge\B_3, \quad \tikzmark{contrPrincipRedpostred311}\wn\B_2, \tikzmark{contrPrincipRedpostred31}\lozenge\B_3}
   \doubleLine
   \RL{\wncontr, \occontr{}}
   \UIC{\oc\A_1, \Box\A_2, \Box\A_2, \A_3\vdash \B_1, \wn\B_2, \lozenge\B_3, \lozenge\B_3}
   \doubleLine
   \RL{\diacontr, \boxcontr}
   \UIC{\oc\A_1, \Box\A_2, \A_3\vdash \B_1, \wn\B_2, \lozenge\B_3}
   \DP} 
   \qquad
   \\[2ex]
   \begin{tikzpicture}[overlay,remember picture,-,line cap=round,line width=0.1cm]
   \draw[rounded corners, smooth=2,blue, opacity=.3] ([xshift=1mm, yshift=1mm] pic cs:contrPrincipRed21) to ([xshift=2mm, yshift=1mm] pic cs:contrPrincipRed22);
   \draw[rounded corners, smooth=2,blue, opacity=.3] ([xshift=1mm, yshift=1mm] pic cs:contrPrincipRed21l) to ([xshift=2mm, yshift=1mm] pic cs:contrPrincipRed22);
   \draw[rounded corners, smooth=2,blue, opacity=.3] ([xshift=1mm, yshift=1mm] pic cs:contrPrincipRed31) to ([xshift=2mm, yshift=1mm] pic cs:contrPrincipRed22);
   \draw[rounded corners, smooth=2,blue, opacity=.3] ([xshift=1mm, yshift=1mm] pic cs:contrPrincipRed31l) to ([xshift=2mm, yshift=1mm] pic cs:contrPrincipRed22);
   \draw[rounded corners, smooth=2,red, opacity=.3] ([xshift=1mm, yshift=1mm] pic cs:contrPrincipRed11) to ([xshift=2mm, yshift=1mm] pic cs:contrPrincipRed12);
   \draw[rounded corners, smooth=2,red, opacity=.3] ([xshift=1mm, yshift=1mm] pic cs:contrPrincipRed11l) to ([xshift=2mm, yshift=1mm] pic cs:contrPrincipRed12);
   \draw[rounded corners, smooth=2,red, opacity=.3] ([xshift=1mm, yshift=1mm] pic cs:contrPrincipRed11) to ([xshift=-1mm, yshift=1mm] pic cs:contrPrincipRed12p);
   \draw[rounded corners, smooth=2,red, opacity=.3] ([xshift=1mm, yshift=1mm] pic cs:contrPrincipRed11l) to ([xshift=-1mm, yshift=1mm] pic cs:contrPrincipRed12p);
   \draw[rounded corners, smooth=2,red, opacity=.3] ([xshift=1mm, yshift=1mm] pic cs:contrPrincipRed11) to ([xshift=0mm, yshift=1mm] pic cs:contrPrincipRed12pl);
   \draw[rounded corners, smooth=2,red, opacity=.3] ([xshift=1mm, yshift=1mm] pic cs:contrPrincipRed11l) to ([xshift=0mm, yshift=1mm] pic cs:contrPrincipRed12pl);
\end{tikzpicture}
   \hspace{-3.2cm}{\AIC{\C_{\A,\B}\tikzmark{wkPrincipRedab}}
   \AIC{\A\vdash \B}
   \RL{\delta_\wk}
   \UIC{\tikzmark{wkPrincipReda}\A\vdash \delta A, \B\tikzmark{wkPrincipRedb}}
   \AIC{\ocboxProofs_{\delta A}\tikzmark{wkPrincipRedocP}}
   \RL{\rmcutpar}
   \TIC{\tikzmark{wkPrincipReda1}\oc\A_1, \tikzmark{wkPrincipReda2}\Box\A_2, \tikzmark{wkPrincipReda3}\A_3\vdash \B_1\tikzmark{wkPrincipRedb1}, \wn\B_2\tikzmark{wkPrincipRedb2}, \lozenge\B_3\tikzmark{wkPrincipRedb3}}
   \DP}
   ~\rightsquigarrow~
   {\AIC{\C_{\A, \B}}
   \AIC{\A\vdash \B}
   \RL{\mathsf{mcut(\iota', \perp\!\!\!\perp')}}
   \BIC{\A_3\vdash\B_3}
   \RL{\wnwk, \ocwk}
   \doubleLine
   \UIC{\oc\A_1, \A_3\vdash \B_1, \wn\B_2}
   \doubleLine
   \RL{\diawk, \boxwk}
   \UIC{\oc\A_1, \Box\A_2, \A_3\vdash \B_1, \wn\B_2, \lozenge\B_3}
   \DP}
   \\[2ex]
   \begin{tikzpicture}[overlay,remember picture,-,line cap=round,line width=0.1cm]
      \draw[rounded corners, smooth=2,blue, opacity=.3] ([xshift=1mm, yshift=1mm] pic cs:wkPrincipReda1) to ([xshift=-1mm, yshift=1mm] pic cs:wkPrincipRedocP);
      \draw[rounded corners, smooth=2,blue, opacity=.3] ([xshift=1mm, yshift=1mm] pic cs:wkPrincipReda2) to ([xshift=-1mm, yshift=1mm] pic cs:wkPrincipRedocP);
      \draw[rounded corners, smooth=2,blue, opacity=.3] ([xshift=-1mm, yshift=1mm] pic cs:wkPrincipRedb2) to ([xshift=-1mm, yshift=1mm] pic cs:wkPrincipRedocP);
      \draw[rounded corners, smooth=2,blue, opacity=.3] ([xshift=-1mm, yshift=1mm] pic cs:wkPrincipRedb3) to ([xshift=-1mm, yshift=1mm] pic cs:wkPrincipRedocP);
      \draw[rounded corners, smooth=2,red, opacity=.3] ([xshift=1mm, yshift=1mm] pic cs:wkPrincipReda3) to ([xshift=-2mm, yshift=1mm] pic cs:wkPrincipRedab);
      \draw[rounded corners, smooth=2,red, opacity=.3] ([xshift=-1mm, yshift=1mm] pic cs:wkPrincipRedb1) to ([xshift=-2mm, yshift=1mm] pic cs:wkPrincipRedab);
      \draw[rounded corners, smooth=2,red, opacity=.3] ([xshift=1mm, yshift=1mm] pic cs:wkPrincipReda3) to ([xshift=1mm, yshift=1mm] pic cs:wkPrincipReda);
      \draw[rounded corners, smooth=2,red, opacity=.3] ([xshift=-1mm, yshift=1mm] pic cs:wkPrincipRedb1) to ([xshift=1mm, yshift=1mm] pic cs:wkPrincipReda);
      \draw[rounded corners, smooth=2,red, opacity=.3] ([xshift=1mm, yshift=1mm] pic cs:wkPrincipReda3) to ([xshift=-1mm, yshift=1mm] pic cs:wkPrincipRedb);
      \draw[rounded corners, smooth=2,red, opacity=.3] ([xshift=-1mm, yshift=1mm] pic cs:wkPrincipRedb1) to ([xshift=-1mm, yshift=1mm] pic cs:wkPrincipRedb);
   \end{tikzpicture}
   \hspace{-3.2cm}{\AIC{\A_1\vdash A, \B_1}
   \RL{\wnde}
   \UIC{\A_1\vdash \wn A, \B_1}
   \AIC{\oc\A_2, \Box\A_3, A\vdash \wn\B_2, \lozenge\B_3}
   \RL{\wnpromloz}
   \UIC{\oc\A_2, \Box\A_3, \wn A\vdash \wn\B_2, \lozenge\B_3}
   \AIC{\C}
   \RL{\rmcutpar}
   \TIC{\A\vdash \B}
   \DP}
   \qquad\qquad\qquad\qquad
\\
   \quad\rightsquigarrow\quad
   {\AIC{\A_1\vdash A, \B_1}
   \AIC{\oc\A_2, \Box\A_3, A\vdash \wn\B_2, \lozenge\B_3}
   \AIC{\C}
   \RL{\rmcutpar}
   \TIC{\A\vdash\B}
   \DP}
\qquad\qquad\qquad  \end{align*}
}
   \caption{One side of the \muLLmodinf{} exponential \& modal principal cut-elimination steps -- in all these proofs, $\delta\in\{\wn, \lozenge\}$}\label{fig:mullmodinfexpprincipcutstep}
   \end{figure*}

\begin{defi}[Reduction sequence]
\label{def:infty-red-seq}
A reduction sequence $(\pi_i)_{i\in 1+\lambda}$ ($\lambda\in\omega+1$) is a $\rightsquigarrow$ sequence s.t. $\pi_0$ contains at most 
one $\mathsf{(mcut)}$ rule per branch.
\end{defi}

We aim to prove that each reduction sequence converges to a cut-free proof.
However, this theorem is not true as stated, even for infinite reduction sequences: one could apply infinitely many reductions only on some part of the proof, without reducing cuts in another part of the proof. Therefore, we need to be more precise, which motivates the following definition, directly borrowed from~\cite{LICS22,CSL16} (the notion of residual is the usual one from rewriting):
\begin{defi}[Fair reduction sequences \cite{LICS22,CSL16}]
\label{def:fair-red-seq}
    A \mcut-reduction sequence $(\pi_i)_{i\in\omega}$ is \emph{fair} if, for each $\pi_i$ such that there is a reduction $\mathcal{R}$ to a proof $\pi'$, there exists a $j>i$ such that $\pi_j$ does not contain any residual of $\mathcal{R}$.
\end{defi}

We can now state our cut-elimination theorem:
\begin{thm}[Cut-elimination for \muLLmodinf]
    \label{muLLmodinfCutElim}
    Each fair $\mathsf{(mcut)}$-reduction se-\linebreak quence of \muLLmodinf{} valid proofs converges to a cut-free valid proof.
\end{thm}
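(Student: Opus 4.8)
The plan is to follow the by-now-standard route to cut-elimination for non-wellfounded sequent calculi \cite{CSL16,LICS22,TABLEAUX23}, adapted to the modal inferences of \muLLmodinf{}. The argument has three movements: a \emph{productivity} lemma (reduction never blocks and keeps producing logical rules), the construction of a \emph{cut-free limit} $\pi_\omega$ from a fair reduction sequence, and the \emph{preservation of validity} in that limit. The running invariant, true of $\pi_0$ by \Cref{def:infty-red-seq}, is that a proof contains at most one $(\mathsf{mcut})$ per branch: this is preserved by every step of \Cref{fig:mcutMALLExa,fig:mullmodinfexpcommcutstep,fig:mullmodinfexpprincipcutstep}, since commutative steps move a single multicut toward the leaves past one rule (splitting it across the premises when that rule branches, as with $(\with_r)$, but never stacking two on a branch), and principal steps either delete the multicut or fuse the rules above it into one bigger multicut. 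It is also checked that $\rightsquigarrow$ preserves validity, so the premises of any multicut occurring along a reduction sequence of valid proofs are themselves valid.

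\textbf{Productivity.} I would show: any valid proof with a multicut reduces, and, more sharply, from such a proof finitely many steps suffice before the conclusion of the result is obtained by a logical (non-$\mathsf{mcut}$) rule --- the sole exception being a multicut over axioms, reducing to an axiom. Read bottom-up, a multicut fails to admit a principal step only when it is forced to commute with the last rule of one of its premises; iterating commutations amounts to climbing the premises along a branch, ``bouncing'' between premises through the $\cutrel$-links on cut-formulas. Were this climb to continue forever without ever reaching a principal configuration, it would exhibit an infinite branch in one of the premises carrying a thread through cut-formulas whose minimal recurring formula is a cut-formula that is never principal --- contradicting validity of that premise. The new modal promotion rules $(\boxprom)$, $(\diaprom)$, $(\ocpromloz)$, $(\wnpromloz)$ only add commutative cases in which the multicut crosses an all-$\{\oc,\Box\}$ context ($\ocboxProofs$, $\boxProofs$); since $(\ocprom)$ and $(\wnprom)$ are literally instances of $(\ocpromloz)$ and $(\wnpromloz)$, these are treated exactly as the exponential promotion cases, and the bouncing argument is unchanged.

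\textbf{The cut-free limit.} Given a fair reduction sequence $(\pi_i)_{i\in\omega}$, I would define $\pi_\omega$ coinductively: productivity makes the bottom inference of the $\pi_i$ stabilize after finitely many steps, hence well defined in the limit, and one recurses on its finitely many premises. Fairness makes every position at a fixed finite depth eventually lie below all multicuts, so the truncation of $\pi_i$ to that depth stabilizes; thus $\pi_\omega$ is a genuine pre-proof with the same conclusion. Fairness also prevents any residual of a reduction from surviving indefinitely, so $\pi_\omega$ contains no $(\mathsf{mcut})$: it is cut-free.

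\textbf{Validity of the limit --- the expected main obstacle.} It remains to prove $\pi_\omega$ valid. For an infinite branch $b$ of $\pi_\omega$, one relates $b$, via the reduction sequence and using fairness, to infinite branches of the $\pi_i$, and tracks how the valid threads witnessing their validity are transported by the reduction steps. The point is that each step only \emph{erases}, \emph{duplicates}, or \emph{unfolds} formulas along a thread, and unfolding $\mu X.F$ (resp.\ $\nu X.F$) into $F[X:=\mu X.F]$ (resp.\ $F[X:=\nu X.F]$) keeps $\mu X.F$ (resp.\ $\nu X.F$) as the minimal recurring formula of the thread for the sub-formula ordering; hence no spurious minimal formula appears and one extracts from the source threads a valid thread for $b$, so the parity of $b$ is inherited. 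The modal contraction and weakening rules $(\diacontr)$, $(\boxcontr)$, $(\diawk)$, $(\boxwk)$ split or kill threads through $\lozenge$- and $\Box$-formulas exactly as $(\wncontr)$, $(\occontr{})$, $(\wnwk)$, $(\ocwk)$ do for exponentials, and $\lozenge,\Box$ are transparent to the ``minimal $\mu$/$\nu$-formula'' bookkeeping, so they introduce no new case. I expect this transport of threads through an infinite fair reduction to be the technical heart, the subsidiary difficulty being, already in the productivity step, that the bouncing-thread argument must now traverse the richer $\{\oc,\Box\}$-promotion contexts of \muLLmodinf{}.
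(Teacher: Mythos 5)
Your proposal is a legitimate route, but it is genuinely different from the one the paper takes. You set out to prove cut-elimination for \muLLmodinf{} \emph{directly}, by the productivity / cut-free-limit / validity-of-the-limit scheme of~\cite{CSL16,LICS22}; the paper instead \emph{reduces} the theorem to the already-established cut-elimination of \muLLinf{} (\Cref{muLLinfCutElim}). Concretely, it defines the translation $(-)^\circ$ sending $\lozenge$ to $\wn$ and $\Box$ to $\oc$ (\Cref{defi:circTranslation}), shows validity is preserved and reflected (\Cref{mumdocircValidityRobustness}), shows that each \muLLmodinf{} \mcut-step is simulated by finitely many \muLLinf{} \mcut-steps (\Cref{mumodredSeqTranslationFiniteness}, the only delicate case being the modal/functorial promotion, whose image is a promotion followed by derelictions that must be pushed through), and shows a completeness lemma (\Cref{mumodmcutonestepCompleteness}) guaranteeing that fairness is transported along the translation (\Cref{mumodFairredSeqTranslationFiniteness}); the cut-free valid limit is then pulled back through $(-)^\circ$. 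What each approach buys: yours is self-contained and would yield the theorem without appealing to \Cref{muLLinfCutElim}, but its two hard steps --- productivity via the ``bouncing-thread'' argument and, above all, validity of the limit proof --- are exactly the parts that are notoriously heavy in this literature, and you would be re-proving them with the extra $\{\oc,\Box\}$-promotion cases woven in; the paper's translation localizes all the new work into finite, checkable simulation lemmas and inherits the hard infinitary arguments wholesale, at the price of depending on the prior theorem and of the fairness bookkeeping in \Cref{mumodmcutonestepCompleteness}. If you pursue your route, be aware that the sentence ``one relates $b$ to infinite branches of the $\pi_i$ and tracks how the valid threads are transported'' compresses what is, in~\cite{CSL16,LICS22}, the central and longest argument of the whole development (residuals of threads across an infinite fair reduction, typically via a minimal-invalid-branch contradiction); your outline correctly identifies it as the main obstacle but does not yet discharge it.
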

To prove it, we will translate formulas, proofs, and (\mcut)-steps of \muLLmodinf{} into \muLLinf{} and use the following cut-elimination result from~\cite{TABLEAUX23}:
\begin{thm}[Cut-elimination for \muLLinf~\cite{TABLEAUX23}]
    \label{muLLinfCutElim}
    Every fair $(\mcut)$-reduction \linebreak sequence of \muLLinf{} valid proofs converges to a cut-free valid proof.
\end{thm}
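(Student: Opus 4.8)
The plan is to reduce \Cref{muLLmodinfCutElim} to \Cref{muLLinfCutElim} through a validity- and reduction-preserving translation $\trans{(-)}$ from \muLLmodinf{} to \muLLinf{} that erases the modalities in favour of exponentials. On formulas I would set $\trans{(\Box F)} = \oc\,\trans F$ and $\trans{(\lozenge F)} = \wn\,\trans F$ and let $\trans{(-)}$ commute with every other constructor (so it is the identity on $\mu$, $\nu$, atoms and the \LL{} connectives). This is guided by \Cref{fig:LLTwoSidedModRules}: $\Box$ is weakened and contracted exactly like $\oc$, and $\lozenge$ like $\wn$, so $(\boxwk),(\boxcontr),(\diawk),(\diacontr)$ translate directly to $(\ocwk),(\occontr{}),(\wnwk),(\wncontr)$, while the context-preserving promotions $(\ocpromloz),(\wnpromloz)$ translate directly to $(\ocprom),(\wnprom)$, since $\Box\A',\oc\A\vdash\wn\B,\lozenge\B'$ already becomes an $\oc/\wn$ context after translation.

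The one genuinely non-local point is that $(\boxprom)$ and $(\diaprom)$ are \emph{functorial} (K-style) rules: they add modalities to the whole context, unlike \LL{} promotion, which requires an already-exponential context. I would therefore translate a $(\boxprom)$ instance $\A\vdash A,\B \leadsto \Box\A\vdash\Box A,\lozenge\B$ not by a single rule but by the short \muLLinf{} derivation that first derelicts each antecedent formula ($\trans\A\rightsquigarrow\oc\trans\A$ by $(\ocde)$) and each succedent context formula ($\trans\B\rightsquigarrow\wn\trans\B$ by $(\wnde)$), then applies $(\ocprom)$ to reach $\oc\trans\A\vdash\oc\trans A,\wn\trans\B$; $(\diaprom)$ is expanded symmetrically, ending in $(\wnprom)$. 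Each expansion is finite, so the proof translation is well defined coinductively. The payoff is that the dereliction introduced on a boxed context formula is exactly the \LL{} dereliction demanded by the functorial principal step: the \muLLmodinf{} $(\boxprom)/(\diaprom)$ principal reduction becomes precisely the \LL{} $(\ocprom)/(\ocde)$ principal step, cutting on $A$ at the lower level.

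With the translation in place I would prove two lemmas. First, a robustness statement à la \Cref{skeletonValidity}: $\pi$ is a valid \muLLmodinf{} pre-proof iff $\trans\pi$ is a valid \muLLinf{} pre-proof, and $\pi$ is cut-free iff $\trans\pi$ is. Both directions hold because the expansions are finite (hence preserve infinite branches up to finite stretching) and touch no fixed-point formula, so threads, their minimal recurring $\mu/\nu$-formulas, principality and side are preserved and reflected, while $(\mcut)$ maps to $(\mcut)$ and no expansion creates a cut. Second, a simulation lemma: every \muLLmodinf{} $(\mcut)$-step $\pi\rightsquigarrow\pi'$ is mirrored by a non-empty \muLLinf{} reduction $\trans\pi\rightsquigarrow^+\trans{\pi'}$. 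The \MALL{} and plain exponential steps map one-to-one; each modal commutative step of \Cref{fig:mullmodinfexpcommcutstep} becomes the sequence of \LL{} commutations threaded through the finitely many rules of the expansion (e.g. commuting an $(\mcut)$ with $\boxProofs$-premises past $(\boxprom)$ becomes commuting past the bottom $(\ocprom)$, the locked all-promotion case, and then past the $(\ocde)$'s and $(\wnde)$'s); the principal steps of \Cref{fig:mullmodinfexpprincipcutstep} map to the \LL{} exponential principal steps computed above.

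Finally I would assemble the argument. Given a fair $(\mcut)$-reduction sequence $(\pi_i)$ of valid \muLLmodinf{} proofs, concatenating the finite blocks $\trans{\pi_i}\rightsquigarrow^+\trans{\pi_{i+1}}$ yields a \muLLinf{} reduction sequence from $\trans{\pi_0}$, which still has at most one $(\mcut)$ per branch. Since $\trans{(-)}$ reflects cuts, every available \muLLinf{} reduction descends from an available \muLLmodinf{} one, so fairness lifts to the concatenated sequence, and by \Cref{muLLinfCutElim} it converges to a cut-free valid proof $\sigma$. Transferring convergence back uses that $\trans{(-)}$ is truncation-compatible: a $(\boxprom)/(\diaprom)$ instance expands into a bounded finite block, so the depth-$n$ truncation of $\trans{\pi_i}$ determines and is determined by the depth-$g(n)$ truncation of $\pi_i$ for a fixed monotone $g$; hence stabilization of the truncations of $(\trans{\pi_i})$ forces stabilization of those of $(\pi_i)$, so $(\pi_i)$ converges to a limit $\pi_\omega$ with $\trans{\pi_\omega}=\sigma$, which by the robustness lemma is cut-free and valid. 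I expect the main obstacle to be twofold: getting the simulation of the functorial steps exactly right, checking that the dereliction/promotion structure of the expansion lines up with each modal principal and commutative step without introducing spurious cuts, and faithfully transferring fairness and the productive (truncation) limit across a translation that stretches each step into a finite block and is not injective on formulas.
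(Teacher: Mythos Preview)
You have mixed up which statement you are proving. The theorem labelled \texttt{muLLinfCutElim} is the cut-elimination result for \muLLinf{}, which in this paper is not proved at all: it is quoted as a black box from \cite{TABLEAUX23} and then \emph{used} to derive cut-elimination for \muLLmodinf{} (\Cref{muLLmodinfCutElim}). Your proposal does not prove \Cref{muLLinfCutElim}; it explicitly assumes it (``reduce \Cref{muLLmodinfCutElim} to \Cref{muLLinfCutElim}'') and then sketches a proof of \Cref{muLLmodinfCutElim}. So as a proof of the displayed statement, the proposal is vacuous or circular.

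If instead one reads your proposal as an attempt at \Cref{muLLmodinfCutElim}, then it is essentially the paper's own argument: your translation $\trans{(\Box F)}=\oc\trans F$, $\trans{(\lozenge F)}=\wn\trans F$ (identity elsewhere) is exactly the paper's $(-)^\circ$ of \Cref{defi:circTranslation}; your expansion of $(\boxprom)/(\diaprom)$ into derelictions followed by a promotion is precisely the paper's rule translation; your ``robustness'' lemma is \Cref{mumdocircValidityRobustness}; your simulation lemma is \Cref{mumodredSeqTranslationFiniteness}; your transfer of fairness is \Cref{mumodmcutonestepCompleteness} and \Cref{mumodFairredSeqTranslationFiniteness}; and your final assembly is the proof of \Cref{muLLmodinfCutElim}. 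The one place where the paper is more careful than your sketch is the simulation of the modal commutative step: showing that the block of derelictions/promotions produced by several $(\boxprom)/(\diaprom)$ premises reduces in \muLLinf{} to the translation of the reduct requires an auxiliary induction (the inner lemma in the proof of \Cref{mumodredSeqTranslationFiniteness}), not just ``thread the commutations through''. But modulo that, your plan coincides with the paper's---for the \emph{other} theorem.
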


\begin{rem}
In~\cite{TABLEAUX23}, exponential formulas, proofs, and cut-steps are encoded into \muMALLinf{}.
We could have encoded \muLLmodinf{} modalities into \muMALLinf{},
replaying the proof of~\cite{TABLEAUX23} for cut-elimination.
However, using the \muLLinf{} cut-elimination theorem makes the result more modular and adaptable to future extensions of \muLLinf{} validity conditions or cut-elimination variants.
\end{rem}

\subsection{Translation of \muLLmodinf{} into \muLLinf{}}

We provide a translation of \muLLmodinf{} into \muLLinf{}:
\defref{See more details  in \Cref{app:circTranslationDef}.}
\begin{defi}[Translation of \muLLmodinf{} into \muLLinf]
    \label{defi:circTranslation}
The translation of formulas is defined inductively:
\begin{itemize}
    \item Translations of $\lozenge$ and $\Box$-formulas:
    \hfill $(\lozenge A)^{\circ} := \wn A^{\circ}\quad \text{and} \quad(\Box A)^{\circ} := \oc A^{\circ}. $
    \item Translations of atomic and unit formulas and variables $f$:
    \hfill $ f^\circ := f. $
    \item Translations of fixed-point:
    \hfill $ (\delta X. F)^\circ := \delta X. F^\circ$
    (with $\delta\in\{\mu,\nu\}$).
    \item Translations of other connectives: \hfill
    $ c(A_1, \dots, A_n)^\circ := c(A_1^\circ, \dots, A_n^\circ).$
\end{itemize}

Translations of structural rules for modalities, $(\diacontr)$, $(\diawk)$, $(\boxcontr)$ and $(\boxwk)$ are respectively $(\wncontr)$, $(\wnwk)$, $(\occontr{})$ and $(\ocwk)$.
%
Translations for the promotions $(\ocpromloz)$ and $(\wnpromloz)$ are respectively $(\ocprom)$ and $(\wnprom)$.
Translations of the modal rules are given by:
$${\footnotesize
\AIC{\A\vdash A, \B}
\RL{\boxprom}
\UIC{\Box\A\vdash \Box A, \lozenge\B}
\DP\quad\rightsquigarrow\quad
\AIC{\A^\circ\vdash A^{\circ}, \B^\circ}
\doubleLine
\RL{\ocde^{\#(\A)}, \wnde^{\#(\B)}}
\UIC{\oc \A^\circ\vdash A^{\circ}, \wn\B^{\circ}}
\RL{\ocprom}
\UIC{\oc \A^\circ\vdash \oc A^{\circ}, \wn\B^{\circ}}
\DP}$$
$${\footnotesize
\AIC{\A, A\vdash\B}
\RL{\diaprom}
\UIC{\Box\A, \lozenge A\vdash\lozenge\B}
\DP\quad\rightsquigarrow\quad
\AIC{\A^\circ, A^{\circ}\vdash\B^\circ}
\doubleLine
\RL{\ocde, \wnde}
\UIC{\oc \A^\circ, A^{\circ}\vdash\wn\B^{\circ}}
\RL{\ocprom}
\UIC{\oc \A^\circ, \wn A^{\circ}\vdash\wn\B^{\circ}}
\DP}
$$
Translations of other inference rules $(r)$ are $(r)$ themselves.
Translations of pre-proofs are defined co-inductively using translations of rules.
\end{defi}

The translation preserves the validity of pre-proof in both directions:
\begin{lem}[Validity robustness to $(-)^\circ$ translation]
\label{mumdocircValidityRobustness}
Let $\pi$ be a \muLLmodinf{} pre-proof. The proof $\pi$ is valid if and only if the proof $\pi^\circ$ is valid.
\end{lem}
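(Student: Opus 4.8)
The plan is to establish a tight correspondence between the infinite branches and threads of $\pi$ and those of $\pi^{\circ}$, and then to check that this correspondence both preserves and reflects thread-validity. First I would record three routine facts about $(-)^{\circ}$: (a) it commutes with substitution, $(F[X:=G])^{\circ} = F^{\circ}[X:=G^{\circ}]$, by induction on $F$ — this guarantees that translating a fixed-point unfolding yields the unfolding of the translated fixed-point formula; (b) it is monotone for the subformula ordering, and it sends $\mu$- (resp. $\nu$-) formulas to $\mu$- (resp. $\nu$-) formulas and nothing else to such a formula (inspecting the clauses: atoms, variables and units are fixed, a top-level $\delta X$ is produced only by translating a $\delta X$-formula, and $\lozenge,\Box$ become $\wn,\oc$); (c) it maps antecedent (resp. succedent) occurrences to antecedent (resp. succedent) occurrences bijectively on positions, since the sequent translation is componentwise. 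All of these are immediate from the inductive definition of $(-)^{\circ}$.

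Next I would set up the branch correspondence. By \Cref{defi:circTranslation}, $(-)^{\circ}$ replaces every rule instance of $\pi$ by a \emph{nonempty finite} derivation block of $\muLLinf{}$ with the same (translated) conclusion and premises: for every non-modal rule and for $(\diacontr),(\diawk),(\boxcontr),(\boxwk),(\ocpromloz),(\wnpromloz)$ the block is a single rule; and for the two genuinely modal rules $(\boxprom)$ and $(\diaprom)$ it is a block consisting of a finite string of $(\ocde)/(\wnde)$ steps followed by an $(\ocprom)$. This yields a block decomposition of $\pi^{\circ}$ over the rule tree of $\pi$, hence a bijection $b \mapsto b^{\circ}$ between infinite branches of $\pi$ and of $\pi^{\circ}$: any infinite branch of $\pi^{\circ}$ must cross infinitely many blocks, each block being finite, and therefore projects onto a unique infinite branch of $\pi$, and conversely $b$ lifts through the blocks.

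Then comes the heart of the argument, the thread correspondence. For a thread $t = (F_i \in s_i)_{i>n}$ of $b$, define $t^{\circ}$ by keeping the occurrences $F_i^{\circ}$ and inserting inside each modal block the finitely many intermediate occurrences dictated by the block's ancestor relations; conversely every thread of $b^{\circ}$ coincides, outside a finite prefix, with some such $t^{\circ}$. The crucial observation is that no intermediate occurrence inside a modal block carries a formula other than the translation of a context or principal formula already lying on $t$: in a $(\boxprom)$-block a thread through the principal $\Box A$ passes through $\oc A^{\circ}$ then $A^{\circ}=(A)^{\circ}$, and a thread through a context $\Box F_i$ passes through copies of $\oc F_i^{\circ}$ then $F_i^{\circ}=(F_i)^{\circ}$, with the dual cases symmetric. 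Hence the set of formulas occurring infinitely often along $t^{\circ}$ is exactly $\{\,G^{\circ} : G \text{ occurs infinitely often along } t\,\}$; using (b), (c) and the fact that $(-)^{\circ}$ reflects being a $\mu$/$\nu$-formula, $t$ has a minimal recurring formula $H$ which is a $\nu$-formula infinitely often in the succedent (resp. a $\mu$-formula infinitely often in the antecedent) iff $t^{\circ}$ has the minimal recurring formula $H^{\circ}$ with the same properties. Finally, a thread-occurrence is principal in $\pi$ exactly when the corresponding occurrence(s) in its block are principal in $\pi^{\circ}$ — for a modal rule, $t$ is principal there iff $t^{\circ}$ is principal at the $(\ocprom)$, resp. at the matching $(\ocde)/(\wnde)$, of that block — so $t$ is infinitely often principal iff $t^{\circ}$ is. Combining, $t$ is valid iff $t^{\circ}$ is, and through the branch bijection this gives that $\pi$ is valid iff $\pi^{\circ}$ is.

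I expect the main obstacle to be precisely the bookkeeping around $(\boxprom)$ and $(\diaprom)$: because their translations are multi-rule blocks, one thread step in $\pi$ unfolds into several steps in $\pi^{\circ}$, and one must argue that this stretching neither introduces a new formula that could spuriously become the minimal recurring formula, nor alters the \emph{infinitely-often-principal} status — together with checking that $(-)^{\circ}$, although it is not injective on all formulas (it identifies $\lozenge A$ with $\wn A$ and $\Box A$ with $\oc A$), is faithful on the formulas that can occur as the minimal recurring formula of a thread, which are always fixed-point formulas and are therefore tracked exactly.
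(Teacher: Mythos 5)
Your proof is correct and takes essentially the same approach as the paper, whose own argument is a one-sentence observation that a branch $B$ of $\pi$ is validated by a thread $(A_i)$ if and only if $B^\circ$ is validated by $(A_i^\circ)$, because the minimal recurring fixed-point formula, and its being a $\nu$ on the right (resp.\ a $\mu$ on the left), is preserved by $(-)^\circ$. Your write-up merely makes explicit the finite-block decomposition, the branch/thread bijection, and the preservation of the infinitely-often-principal condition that the paper leaves implicit.
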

\begin{proof}
Let $B$ be a branch of $\pi$. We have that $B$ is validated by a thread $(A_i)$ if and only if $B^\circ$ is validated by $(A^\circ_i)$, as the minimal recurring fixed point formula is a $\nu$ on the right (resp. $\mu$ on the left) in $(A_i)$ if and only if it is in $(A^\circ_i)$.
\end{proof}

Finally, we need to ensure that $(\mcut)$-reduction sequences are robust under this translation. In our proof of the final theorem, we also need one-step reduction rules to be simulated by a finite number of reduction steps in the translation.
\begin{lem}
\label{mumodredSeqTranslationFiniteness}
Consider a \muLLmodinf{} reduction step $\pi_0\rightsquigarrow\pi_1$. There exist a finite number of \muLLinf{} proofs $\theta_0, \dots, \theta_n$ such that:
\hfill
$\pi_0^\circ = \theta_0 \redseq \theta_1\redseq\dots\redseq\theta_{n-1} \redseq \theta_n = \pi_1^{\circ}.$
\end{lem}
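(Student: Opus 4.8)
The statement asserts that a single \muLLmodinf{} reduction step $\pi_0 \rightsquigarrow \pi_1$ is simulated, under the translation $(-)^\circ$, by finitely many \muLLinf{} reduction steps $\pi_0^\circ = \theta_0 \redseq \cdots \redseq \theta_n = \pi_1^\circ$. The plan is to proceed by a case analysis on which \mcut-reduction step $\pi_0 \rightsquigarrow \pi_1$ is. For each step we must exhibit the finite \muLLinf{} reduction sequence between the two translations, invoking \Cref{defi:circTranslation} to compute what $\pi_0^\circ$ and $\pi_1^\circ$ actually look like. First I would dispose of the easy cases: all \mcut-steps that involve only rules of \muMALLinf{} or the pure exponential rules of \muLLinf{} (the steps in \Cref{fig:mcutMALLExa}, the non-modal commutative steps and non-modal principal steps) translate essentially on the nose — their translation under $(-)^\circ$ is literally the corresponding \muLLinf{} step (possibly the identity step, $n=0$), since $(-)^\circ$ acts as the identity on all connectives other than $\Box,\lozenge$ and maps the corresponding rules to themselves. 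This is a routine but necessary bookkeeping verification.

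The interesting cases are those involving the modal rules $(\boxprom)$, $(\diaprom)$, $(\ocpromloz)$, $(\wnpromloz)$ and the modal structural rules $(\diacontr)$, $(\diawk)$, $(\boxcontr)$, $(\boxwk)$. For the structural modal rules, since $(-)^\circ$ sends $(\diacontr),(\diawk),(\boxcontr),(\boxwk)$ to $(\wncontr),(\wnwk),(\occontr{}),(\ocwk)$ respectively and sends $\lozenge,\Box$-contexts to $\wn,\oc$-contexts, each modal structural step translates directly to the corresponding exponential structural step of \muLLinf{}, so again $n=1$ (or $n=0$ after the dust settles). The genuinely delicate cases are the principal steps and the commutative steps for $(\boxprom)$ and $(\diaprom)$, because $(-)^\circ$ translates these rules into a \emph{composite} derivation: $(\boxprom)$ becomes a block of $(\ocde)$'s and $(\wnde)$'s followed by an $(\ocprom)$, and similarly for $(\diaprom)$. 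Here one step in \muLLmodinf{} unfolds to several steps in \muLLinf{}: e.g. when a $(\boxprom)$ is commuted past an \mcut{} against a context of other modal/exponential promotions, on the translated side one must first commute the \mcut{} past the $(\ocprom)$ (one \muLLinf{} step), then past the block of dereliction rules ($\#(\A)+\#(\B)$ more steps, one per $(\ocde)$ or $(\wnde)$), arriving at $\pi_1^\circ$. The key observation making this work is that the translation of a $(\boxprom)$-step's right-hand side is precisely obtained from the left-hand side by performing exactly this finite cascade of commutations, so the diagram commutes with a bounded (in fact, explicitly computable) number of intermediate $\theta_i$'s.

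The main obstacle I expect is the $(\wnpromloz)$-principal step (last step of \Cref{fig:mullmodinfexpprincipcutstep}) and the $(\ocpromloz)$-commutation steps, where the \muLLmodinf{} context mixes $\oc\A$, $\Box\A'$ on the left and $\wn\B$, $\lozenge\B'$ on the right: under $(-)^\circ$ the $\Box$'s become $\oc$'s and the $\lozenge$'s become $\wn$'s, so the resulting \muLLinf{} promotion has a \emph{larger} uniform exponential context, and one must check carefully that the pattern of \mcut-restrictions $\ocboxProofs_{\delta A}$ translates coherently — i.e. that $(\ocboxProofs)^\circ$ is a valid \muLLinf{} $(\oc)$-context to which the \muLLinf{} promotion/dereliction commutations apply, and that the auxiliary structural rules ($\diacontr \mapsto \wncontr$, $\boxcontr \mapsto \occontr{}$, etc.) used on the \muLLmodinf{} side match up step-for-step with the ones generated on the \muLLinf{} side. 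Making the counting of intermediate steps precise in these mixed cases, and verifying that no spurious reductions are needed, is where the real care lies; everything else is a mechanical check case by case against \Cref{fig:mullmodinfexpcommcutstep,fig:mullmodinfexpprincipcutstep} and \Cref{defi:circTranslation}.
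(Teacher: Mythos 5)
Your identification of the easy cases and of the modal-rule commutation as the delicate case matches the paper, but the mechanism you describe for that case is not the one that works, and as stated it would not reach $\pi_1^\circ$. You propose to simulate the commutation of $(\boxprom)$ by ``first commuting the $\mcut$ past the $(\ocprom)$, then past the block of $\#(\A)+\#(\B)$ dereliction rules''. The problem is that those derelictions, in the translation of the premise $\A\vdash A,\B$ of the $(\boxprom)$, have as principal formulas the occurrences $\oc\A^\circ$ and $\wn\B^\circ$, which are \emph{cut-formulas} of the multicut (they are $\cutrel$-connected to the other premises in $\boxProofs$), not formulas of the conclusion; a commutation step cannot move them below the $\mcut$. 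Symmetrically, each proof in $\boxProofs^\circ$ ends with a promotion on a cut-formula plus its own derelictions, and in $\pi_1^\circ$ all of these rules have disappeared (in $\pi_1$ the context modal rules are absorbed into the single $(\boxprom)$ below the cut), so they must be \emph{erased}, not commuted. Your proposal provides no mechanism for erasing them.

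The paper's proof does this via promotion/dereliction key-cases: after the one promotion whose principal formula lies in the conclusion commutes below the $\mcut$, each exposed dereliction on a cut-formula is cancelled against the matching promotion ending the $\cutrel$-connected premise, which in turn exposes that premise's derelictions, and so on; only the derelictions on conclusion formulas commute below. Making this cascade terminate and cover all cases is the actual content of the lemma: the appendix proves an auxiliary statement about multicuts whose premises each end in $d_i$ derelictions and $p_i\in\{0,1\}$ promotions, by induction on $\sum_i d_i+\sum_i p_i$, with the deadlock case (no rule can commute and no key-case fires) ruled out by the acyclicity of the $\cutrel$-relation. Without this interleaving of key-cases and commutations --- and the termination argument for it --- the simulation of the modal commutation step is not established, so the proposal has a genuine gap precisely at the case you correctly flagged as the hard one.
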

\begin{proof}[Proof sketch]
\proofref{See~\Cref{app:mumodredSeqTranslationFiniteness} for full proof details.}
    Reductions from the non-exponential part of $\muLLmodinf$ translate easily to one step of reduction in \muLLinf.
    The same is true for the exponential part, except for the commutation of the modal rule.
    The translation of the left proof of this step is of the form (we only consider the case of $\diaprom$; $\boxprom$ is similar):
        $$
\scalebox{1}        {\small\AIC{\pi_i^\circ}
        \noLine
        \UIC{\oc\A_i^\circ  \vdash A_i^\circ, \B_i^\circ}
        \RL{\ocde, \wnde}
        \doubleLine
        \UIC{\oc\A_i^\circ\vdash A_i^\circ, \wn\B_i^\circ}
        \RL{\ocprom}
        \UIC{\oc\A_i^\circ\vdash \oc A_i^\circ, \wn\B_i^\circ}
        \AIC{\pi_{j}^\circ}
        \noLine
        \UIC{\oc\A_{j}^\circ, A_{j}^\circ\vdash\B_{j}^\circ}
        \RL{\ocde, \wnde}
        \doubleLine
        \UIC{\oc\A_{j}^\circ, A_{j}^\circ\vdash\wn\B_{j}^\circ}
        \RL{\wnprom}
        \UIC{\oc\A_{j}^\circ, \wn A_{j}^\circ\vdash \wn\B_{j}^\circ}
        \AIC{\text{with }1\leq i \leq n ~\&}
         \noLine \UIC{n+1 \leq j\leq n+m}
        \RL{\rmcutpar}
        \TIC{\vdash \oc A^\circ, \wn\A^\circ}
        \DP}
        $$
    Here, we notice that for each dereliction on a cut-formula, there exists a corresponding promotion that will be erased by a dereliction/promotion key-case. The first promotion will therefore commute under the cut, and then each dereliction on formulas of the conclusion will commute as well. Each dereliction and each promotion on cut-formulas will be erased, giving us the correct translation
    .
\end{proof}

Now that we know that a step of (\mcut)-reduction in \muLLmodinf{} translates to one or more \muLLinf{} (\mcut)-reduction steps, it is easy to translate each reduction sequence of \muLLmodinf{} into a reduction sequence of \muLLinf{}. However, to use the cut-elimination theorem of \muLLinf{}, we need the reduction sequence to be fair. The purpose of the following lemma is to control the fairness of the translated reduction sequence:
\begin{lem}[Completeness of the (\mcut)-reduction system]
\label{mumodmcutonestepCompleteness}
Let $\pi$ and $\pi'$ be two \muLLmodinf{} proofs.
If there is a \muLLinf{}-redex $\mathcal{R}$ sending $\pi^\circ$ to $\pi'$, then there is also a $\muLLmodinf$-redex $\mathcal{R}'$ sending $\pi$ to a proof $\pi''$, such that in the translation of $\mathcal{R}'$, $\mathcal{R}$ is reduced.
\end{lem}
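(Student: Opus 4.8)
The plan is to argue by a case analysis on the $\muLLinf$-redex $\mathcal{R}$, exploiting that the translation $(-)^{\circ}$ is \emph{strictly local}: it rewrites each rule occurrence of a $\muLLmodinf$-proof into a fixed $\muLLinf$-pattern without changing the underlying shape of the proof tree, and it sends the $\mathsf{(mcut)}$-occurrences of $\pi$ — together with their $\iota$- and $\cutrel$-data — to those of $\pi^{\circ}$. Concretely, every rule distinct from $(\boxprom)$ and $(\diaprom)$ is sent to a single $\muLLinf$-rule (with $(\ocpromloz)\mapsto(\ocprom)$, $(\wnpromloz)\mapsto(\wnprom)$, $(\diacontr)\mapsto(\wncontr)$, $(\boxcontr)\mapsto(\occontr{})$, $(\diawk)\mapsto(\wnwk)$, $(\boxwk)\mapsto(\ocwk)$), whereas $(\boxprom)$ and $(\diaprom)$ are sent to the short cut-free block of $(\ocde)$'s, $(\wnde)$'s and one promotion exhibited in \Cref{defi:circTranslation}. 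Since every $\muLLinf$-reduction step acts on an $\mathsf{(mcut)}$ together with the last rule(s) above its premises, and since those translation blocks are cut-free, $\mathcal{R}$ must involve an $\mathsf{(mcut)}$ of $\pi^{\circ}$ — the image of an $\mathsf{(mcut)}$ of $\pi$ — together with either (a) the $(-)^{\circ}$-image of a rule of $\pi$ distinct from $(\boxprom)$ and $(\diaprom)$, or (b) the top promotion of a block coming from translating a $(\boxprom)$ or $(\diaprom)$ premise of that $\mathsf{(mcut)}$ in $\pi$.

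In case (a) I would take $\mathcal{R}'$ to be the evident $\muLLmodinf$-redex: the $\mathsf{(mcut)}$ of $\pi$ together with the preimage rule $r$ (whose translation is a single $\muLLinf$-rule), performing a commutative step if $\mathcal{R}$ is commutative and a principal or merge step otherwise. One must check this is genuinely a $\muLLmodinf$-redex; this follows because $(-)^{\circ}$ preserves ``ending with a promotion'' for the sibling $\mathsf{(mcut)}$-premises — so the $\boxProofs$/$\ocboxProofs$ side conditions of the modal commutative/merge steps transfer — and preserves the pairing of dual cut-formulas, using $(\lozenge A)^{\circ}=\wn A^{\circ}$ and $(\Box A)^{\circ}=\oc A^{\circ}$, which are $\muLLinf$-dual exactly when $\lozenge A$ and $\Box A$ are $\muLLmodinf$-dual. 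Applying $\mathcal{R}'$ gives $\pi\rightsquigarrow\pi''$, and by the proof of \Cref{mumodredSeqTranslationFiniteness} this step translates to a \emph{single} $\muLLinf$-step, which is exactly $\mathcal{R}$; hence $\mathcal{R}$ is reduced.

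In case (b) I would take $\mathcal{R}'$ to be the $\muLLmodinf$-redex reducing that modal promotion against the same $\mathsf{(mcut)}$ in $\pi$ (the relevant commutative or principal modal step of \Cref{fig:mullmodinfexpcommcutstep,fig:mullmodinfexpprincipcutstep}), its applicability being again ensured by the same side-condition transfer, now read from $\pi^{\circ}$ back to $\pi$. Applying $\mathcal{R}'$ gives $\pi\rightsquigarrow\pi''$, whose translation is the finite sequence $\pi^{\circ}=\theta_0\redseq\dots\redseq\theta_n=(\pi'')^{\circ}$ of \Cref{mumodredSeqTranslationFiniteness}; inspecting that sequence — the block's promotion is first commuted under the $\mathsf{(mcut)}$, and then each dereliction on a cut-formula is absorbed by a dereliction/promotion key-case — one sees that the step firing $\mathcal{R}$ (up to residuals, should an earlier $\theta_i\redseq\theta_{i+1}$ displace it) occurs among the $\theta_i\redseq\theta_{i+1}$. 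In both cases we obtain a $\muLLmodinf$-redex $\mathcal{R}'$ sending $\pi$ to a proof $\pi''$ (validity being preserved along $\rightsquigarrow$) such that $\mathcal{R}$ is reduced within the translation of that step, which is the claim.

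The step I expect to be the main obstacle is the bookkeeping around the two modal promotions, which — unlike every other rule — are translated to compound derivations: one must make sure that (i) no $\muLLinf$-redex of $\pi^{\circ}$ is spurious, i.e.\ lies entirely inside a translation block (this is excluded by cut-freeness of those blocks), and that (ii) the applicability side conditions of the $\muLLmodinf$ modal steps — notably the $\boxProofs$/$\ocboxProofs$ conditions on the sibling $\mathsf{(mcut)}$-premises — correspond exactly, in both directions, under $(-)^{\circ}$ to those of the $\muLLinf$ steps firing the top promotion of a block. Making this correspondence watertight for every case of \Cref{fig:mullmodinfexpcommcutstep,fig:mullmodinfexpprincipcutstep} is where the real work lies.
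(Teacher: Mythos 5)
Your proposal is correct and follows essentially the same route as the paper: a case analysis on the $\muLLinf$-redex $\mathcal{R}$, using the locality of $(-)^{\circ}$ and the fact that only promotions and modal rules translate to derivations ending in a promotion to lift $\mathcal{R}$ back to a $\muLLmodinf$-redex whose translation fires $\mathcal{R}$. Your split into single-rule images versus modal blocks is just a reorganization of the paper's four cases (commutation of w/c/d, principal c/w, dereliction/promotion key case, promotion commutation), and the side-condition transfer you flag as the remaining work is exactly what the paper's bullet points verify.
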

\begin{proof}
    We only prove the exponential cases, as the non-exponential cases are immediate. We have several cases:
    \begin{itemize}
        \item If the case is the commutative step of a weakening (resp. contraction, resp. dereliction) $(r)$, it necessarily means that $(r)$ is the translation of a rule $(r')$ being a contraction (resp. weakening, resp. dereliction) which is also on top of a (\mcut) in $\pi$. We can take $\mathcal{R}'$ as the step commuting $(r')$ under the cut.
        \item If it is a principal case on a contraction or a weakening $(r)$ on a formula $\wn A$ (resp. $\oc A$), it means that each proof cut-connected to $\wn A$ (resp. $\oc A$) ends with a promotion. As $\pi^\circ$ is the translation of a $\muLLmodinf$-proof, it means that $(r)$ is the translation of a weakening or contraction rule $(r')$ on a formula $\wn A'$ (resp. $\oc A'$) or $\lozenge A'$ (resp. $\Box A'$) on top of a (\mcut). It also means that all the proofs cut-connected to these formulas are promotions or modal rules (no other rules than a modal rule or a promotion in $\muLLmodinf$ translates to a derivation ending with a promotion). Therefore, $\mathcal{R'}$ is the principal case on $(r')$.
        \item If it is a dereliction/promotion key case, the dereliction (resp. promotion) is the translation of a dereliction (resp. promotion). We take $\mathcal{R}'$ to be the redex formed by these two rules.
        \item If it is the commutative step of a promotion $(r)$, it means that all the proofs of the contexts of the (\mcut) are promotion rules.
        This means that all these rules come from the translation of promotion or modal rules.
        We need to ensure that each (\mcut) with a context full of promotions or modal rules is covered by the (\mcut)-reductions of $\muLLmodinf$:
        \begin{itemize}
            \item The commutation of $(\ocpromloz)$ (or $(\wnpromloz)$) is covered by the first commutative case in \Cref{fig:mullmodinfexpcommcutstep}.
            \item If it is a modal rule that is ready to be commuted, then other rules are necessarily modal rules and therefore is covered by the second commutative case in \Cref{fig:mullmodinfexpcommcutstep}.
        \end{itemize}
    \end{itemize}
\end{proof}

We use the two previous lemmas, to prove the following:
\begin{coro}
\label{mumodFairredSeqTranslationFiniteness}
If $\lambda\in\omega+1$ \& $(\pi_i)_{i\in1+\lambda}$ is a  fair \muLLmodinf{} reduction sequence, then:
\begin{itemize}
\item there exists a fair \muLLinf{} reduction sequence $(\theta_i)_{i\in1+\lambda'}$ with $\lambda'\in\omega+1$;
\item there exists a sequence of strictly increasing $(\varphi(i))_{i\in1+\lambda}$ of elements of $1+\lambda'$;
\item such that for each $i$, $\theta_{\varphi(i)}=\pi_i^\circ$.
\end{itemize}
\end{coro}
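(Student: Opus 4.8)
The plan is to obtain $(\theta_i)$ by \emph{concatenating} the finite $\muLLinf$-reduction sequences supplied by \Cref{mumodredSeqTranslationFiniteness}. For each $i$, the step $\pi_i\rightsquigarrow\pi_{i+1}$ yields a finite chain $\pi_i^\circ = \theta^i_0 \redseq \dots \redseq \theta^i_{n_i} = \pi_{i+1}^\circ$, and inspection of the cases behind \Cref{mumodredSeqTranslationFiniteness} shows $n_i\geq 1$ (every $\muLLmodinf$ step is simulated by at least one genuine $\muLLinf$ step). Splicing these blocks in order produces a $\redseq$-sequence $(\theta_j)_{j\in 1+\lambda'}$ — finite when $\lambda$ is finite, of length $\omega$ when $\lambda=\omega$, so $\lambda'\in\omega+1$ in all cases — together with the map $\varphi(i):=\sum_{k<i}n_k$, which is strictly increasing exactly because each $n_k\geq 1$ and satisfies $\theta_{\varphi(i)}=\pi_i^\circ$ by construction. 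This already settles the last two bullets; the real content is the first one, i.e.\ that the concatenated sequence is \emph{fair}.

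For a finite $(\pi_i)$ this is immediate: a fair finite sequence ends in a cut-free proof, and the $(-)^\circ$-translation introduces no cuts, so $(\theta_j)$ ends in a normal form. So assume $\lambda=\omega$. Let $\mathcal{R}$ be a $\muLLinf$-redex of some $\theta_j$; we must find $j'>j$ with no residual of $\mathcal{R}$ in $\theta_{j'}$. First push $\mathcal{R}$ to the next checkpoint: since only finitely many $\redseq$-steps separate $\theta_j$ from the first $\theta_{\varphi(i)}$ with $\varphi(i)\geq j$, the redex $\mathcal{R}$ has only finitely many residuals in $\theta_{\varphi(i)}=\pi_i^\circ$, and it suffices to eliminate each of them (a bound over this finite set then disposes of $\mathcal{R}$). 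Thus we may assume $\mathcal{R}$ is a redex of $\pi_i^\circ$. By \Cref{mumodmcutonestepCompleteness} there is a $\muLLmodinf$-redex $\mathcal{R}'$ of $\pi_i$ whose translation reduces $\mathcal{R}$; by fairness of $(\pi_i)$ there is $i'>i$ with no residual of $\mathcal{R}'$ in $\pi_{i'}$. Transporting this fact along the blocks translating $\pi_i\rightsquigarrow^*\pi_{i'}$ — using that each time a residual of $\mathcal{R}'$ is reduced in $(\pi_i)$ the corresponding residual of $\mathcal{R}$ is reduced in the translated block — produces a checkpoint $\theta_{\varphi(i'')}$ with $i<i''\leq i'$ and no residual of $\mathcal{R}$, as required; hence $(\theta_j)$ is fair.

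The hard part is precisely this last transport step: one must prove a residual-tracking lemma showing that $(-)^\circ$, applied step by step along a $\muLLmodinf$-reduction, sends residuals of a $\muLLmodinf$-redex to (translations of) residuals of the paired $\muLLinf$-redexes, and conversely that every residual of $\mathcal{R}$ in an intermediate $\pi_k^\circ$ is accounted for by a residual of $\mathcal{R}'$ in $\pi_k$ — so that "all residuals of $\mathcal{R}'$ gone at $\pi_{i'}$'' genuinely forces "all residuals of $\mathcal{R}$ gone at $\pi_{i'}^\circ$''. This amounts to a case analysis over the reduction steps of \Cref{fig:mullmodinfexpcommcutstep,fig:mullmodinfexpprincipcutstep} (plus the \muMALLinf{} steps), refining the bookkeeping already carried out in the proofs of \Cref{mumodredSeqTranslationFiniteness,mumodmcutonestepCompleteness}; everything else is routine.
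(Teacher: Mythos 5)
Your proposal is correct and follows essentially the same route as the paper: build $(\theta_j)$ and $\varphi$ by concatenating the finite blocks from \Cref{mumodredSeqTranslationFiniteness}, and derive fairness from \Cref{mumodmcutonestepCompleteness} together with the fact that residuals of a redex of $\pi^\circ$ are contained in the translations of residuals of the associated $\muLLmodinf$-redex. The residual-tracking step you single out as "the hard part" is exactly the fact the paper also asserts without a detailed case analysis, so you are no less complete than the published argument (and somewhat more explicit about why $\varphi$ is strictly increasing and how the finite case is handled).
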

\begin{proof}
    We construct the sequence by induction on the steps of $(\pi_i)_{i\in1+\lambda}$.
    \begin{itemize}
        \item For $i=0$: $\theta_0=\pi_0^\circ$ and $\varphi(0)=0$:
    
        \item For $i+1$, suppose we have defined both sequences up to rank $i$. We use \Cref{mumodredSeqTranslationFiniteness} on the step $\pi_i \rightsquigarrow \pi_{i+1}$ to get a finite sequence of reduction $\pi_i = \theta'_0 \rightsquigarrow \dots \rightsquigarrow \theta'_n = \pi_{i+1}$.
        We then construct both sequences by setting $\varphi(i+1):=\varphi(i)+n, \quad \theta_{\varphi(i)+j} := {\theta'}_j$ (for $j\in\llbracket 0, n\rrbracket$).
    \end{itemize}
    Fairness of $(\theta_i)_{i\in1+\lambda'}$ follows from \Cref{mumodmcutonestepCompleteness} and from the fact that after the translation of an (\mcut)-step, $\pi^\circ\rightsquigarrow {\pi'}^\circ$, each residual of a redex $\mathcal{R}$ of $\pi^\circ$, is contained in the translations of residuals of the associated redex $\mathcal{R}'$ of \Cref{mumodmcutonestepCompleteness}.
\end{proof}

\subsection{Cut-elimination for  \muLLmodinf{}}
Finally, we can prove the main theorem of the section:
\begin{proof}[of \Cref{muLLmodinfCutElim}]
    Let $(\pi_i)_{i\in 1+\lambda}$ be a fair \muLLmodinf{} reduction sequence.
    We use \Cref{mumodFairredSeqTranslationFiniteness} and get a fair \muLLinf{} reduction sequence $(\theta_i)_{i\in1+\lambda'}$ and a sequence $(\varphi(i))_{i\in1+\lambda}$ of natural numbers.
    By \Cref{muLLinfCutElim}, we know that $(\theta_i)_{i\in\omega}$ converges to a cut-free proof $\theta$ of \muLLinf{}.
    Now suppose for the sake of contradiction that $(\pi_i)_{i\in 1+\lambda}$ does not converge to an $(\mcut)$-free pre-proof.
    There is a $j$ and a path $p$ such that for each proof $\pi_{j'}$, with $j'\geq j$, there is an $(\mcut)$-rule at the end of path $p$. This means that the translation of $p$ leads to an $(\mcut)$ for each proof $\theta_{j'}$ with $j'\geq\varphi(j)$, contradicting the convergence of $(\theta_j)$ to a cut-free proof.
    Moreover $(\pi_i)$ converges to a pre-proof $\pi$ such that $\pi^\circ=\theta$, since $\theta_{\varphi(j)}$ equals $\pi_j^\circ$ under multicuts.
    Since $\theta$ is cut-free, $\pi$ is both valid and cut-free by \Cref{mumdocircValidityRobustness}.
\end{proof}

\section{Cut-elimination of \muLKmodinf{}}\label{section:muLKmodinfcutElim}

\begin{figure}[t]
    \centering
    \begin{subfigure}{.4\textwidth}
    $
    \AIC{{\color{blue}\wn\trans{\A}}\vdash \wn\trans{A}, \wn\trans{\B}}
    \RL{\boxprom}
    \UIC{{\color{blue}\Box\wn\trans{\A}}\vdash\Box \wn\trans{A}, \lozenge\wn\trans{\B}}
    \RL{\wnde, \ocpromloz}
    \UIC{{\color{blue}\Box\wn\trans{\A}}\vdash\wn\oc\Box \wn\trans{A}, \lozenge\wn\trans{\B}}
    \doubleLine
    \RL{\wnde, \ocpromloz, {\color{blue}\ocde}}
    \UIC{{\color{blue}\oc\Box\wn\trans{\A}}\vdash \wn\oc\Box \wn\trans{A}, \wn\oc\lozenge\wn\trans{\B}}
    \DP
    $
    \caption{Failed linear translation of the $\Box$ rule with non-empty antecedent}
    \label{fig:leftSideTransFail}
    \end{subfigure}\qquad
    \begin{subfigure}{.4\textwidth}
    $
    \AIC{\trans{\A}\vdash\wn\trans{A}, \wn\trans{\B}}
    \RL{{\color{blue} \ocde, \wnprom}}
    \doubleLine
    \UIC{{\color{blue}\oc}\wn\trans{\A}\vdash \wn\trans{A}, \wn\trans{\B}}
    \RL{{\color{blue}\ocpromloz}}
    \UIC{{\color{blue}\oc}\wn\trans{\A}\vdash {\color{blue}\oc}\wn\trans{A}, \wn\trans{\B}}
    \RL{\boxprom}
    \UIC{\Box{\color{blue}\oc}\wn\trans{\A}\vdash\Box {\color{blue}\oc}\wn\trans{A}, \lozenge\wn\trans{\B}}
    \RL{\wnde, \ocpromloz}
    \UIC{\Box{\color{blue}\oc}\wn\trans{\A}\vdash\wn\oc\Box {\color{blue}\oc}\wn\trans{A}, \lozenge\wn\trans{\B}}
    \doubleLine
    \RL{\wnde, \ocpromloz, \ocde}
    \UIC{\oc\Box{\color{blue}\oc}\wn\trans{\A}\vdash \wn\oc\Box {\color{blue}\oc}\wn\trans{A}, \wn\oc\lozenge\wn\trans{\B}}
    \DP
    $
    \caption{Successful translation of the $\Box$ rule with non-empty antecedent}
    \label{fig:LeftSideTransSuccess}
    \end{subfigure}
    \caption{Translating the Modal Rule with Non-Empty Antecedent}
    \end{figure}

We extend the linear translation of \muLKinf{} from~\cite{TABLEAUX23} to a translation from \muLKmodinf{} into \muLLmodinf.
The tentative translation of the modal formulas studied at the beginning of \Cref{section:ModalMu} actually only works with an empty antecedent. Adapting our example by adding an antecedent to it: $
\AIC{{\color{blue}\A}\vdash F, \B}
\RL{\boxprom}
\UIC{{\color{blue}\Box\A}\vdash \Box F, \lozenge\B}
\DP
$ results in a derivation attempt as shown in \Cref{fig:leftSideTransFail}.
If $\A$ contains more than two formulas, $(\wnprom)$ cannot be applied. However, adding an $\oc$-connective in the translation of $\Box$-formulas as $\trans{\Box A} := \oc\Box{\color{blue}\oc}\wn \trans{A}$, allows us to complete the derivation (see \Cref{fig:LeftSideTransSuccess}).

Based on this, we define a translation from \muLKmodinf{} into \muLLmodinf{}:
\begin{defi}[Linear translation of \muLKmodinf]
The \emph{translation $\trans{(-)}$} from\linebreak \muLKmodinf{} formulas to \muLLmodinf{} formulas  is defined by induction on formulas:
\vspace{-.3cm
}\begin{align*}
\trans{(A_1\rightarrow A_2)} & :=  \oc(\wn\trans{A_1} \multimap \wn\trans{A_2}) &
\trans{X} & := \oc X
&\trans{(\mu X. A)} & :=  \oc\mu X. \wn\trans{A}\\
\trans{(A_1\land A_2)} & :=  \oc(\wn\trans{A_1} \with \wn\trans{A_2})
&\trans{\true} & :=  \oc\top
&\trans{(\nu X. A)} & :=  \oc\nu X. \wn\trans{A}\\
\trans{(A_1\lor A_2)} & :=  \oc(\wn\trans{A_1} \oplus \wn\trans{A_2}) & \trans{\false} & := \oc 0 & \trans{(\lozenge A)} & := \oc\lozenge\wn \trans{A}\\
\trans{(A^\perp)} & := \oc (\wn\trans{A})^\perp & \trans{a} & := \oc a &  \trans{(\Box A)} & :=\oc\Box\oc\wn\trans{A}
\end{align*}
Sequents are translated as 
$\trans{(\A\vdash \B)} := \trans{\A}\vdash \wn \trans{\B}.$
\end{defi}
The following property must be kept in mind when defining rule translations and is proved by an induction the formulas of \muLKmodinf:
\begin{prop}
    For any \muLKmodinf{} formula $A$, $\trans{A}$ is an $\oc$-formula.
\end{prop}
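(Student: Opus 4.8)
The plan is to proceed by induction on the structure of the \muLKmodinf{} formula $A$, directly inspecting each clause of the translation $\trans{(-)}$. The key observation is that every single clause of the definition in \Cref{def:mullmod}-style has the shape $\trans{A} := \oc B$ for some \muLLmodinf{} pre-formula $B$; so the statement is essentially immediate by a case analysis, with no real recursion needed in the payload — the inductive hypothesis is not even used except to guarantee that the subexpressions $\trans{A_i}$ are well-defined formulas.

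Concretely, I would enumerate the cases exactly as they appear in the definition. For the binary propositional connectives, $\trans{(A_1 \rightarrow A_2)} = \oc(\wn\trans{A_1}\multimap\wn\trans{A_2})$, $\trans{(A_1\land A_2)} = \oc(\wn\trans{A_1}\with\wn\trans{A_2})$, and $\trans{(A_1\lor A_2)} = \oc(\wn\trans{A_1}\oplus\wn\trans{A_2})$ are each visibly of the form $\oc(\cdot)$. For negation, $\trans{(A^\perp)} = \oc(\wn\trans{A})^\perp$ is an $\oc$-formula. For atoms and variables, $\trans{a} = \oc a$ and $\trans{X} = \oc X$. For the units, $\trans{\true} = \oc\top$ and $\trans{\false} = \oc 0$. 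For the fixed points, $\trans{(\mu X. A)} = \oc\mu X.\wn\trans{A}$ and $\trans{(\nu X. A)} = \oc\nu X.\wn\trans{A}$ are $\oc$-formulas by inspection. Finally, for the modalities, $\trans{(\lozenge A)} = \oc\lozenge\wn\trans{A}$ and $\trans{(\Box A)} = \oc\Box\oc\wn\trans{A}$ are again $\oc$-prefixed. In every case the outermost connective is $\oc$, which is exactly the claim.

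There is essentially no obstacle here: the proposition is a bookkeeping lemma that records a design invariant of the translation, and the only thing to be careful about is completeness of the case split — i.e.\ making sure the induction covers all formula constructors of \muLKmodinf{} (including the closure/positivity side conditions from \Cref{formulasdef}, which do not affect the shape of the translated formula). If one wanted to be fully rigorous one could phrase it as: ``by a straightforward induction on $A$, in each case the definition of $\trans{(-)}$ yields a formula whose principal connective is $\oc$,'' and leave it at that. The slight subtlety worth a sentence is that the induction hypothesis is used only to ensure each $\trans{A_i}$ occurring in a compound case is a genuine \muLLmodinf{} formula (so that $\wn\trans{A_i}$, etc., are well-formed), not to establish the $\oc$-shape, which holds unconditionally from the outermost clause applied.
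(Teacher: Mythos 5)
Your proof is correct and matches the paper's approach: the paper simply states that the proposition "is proved by an induction [on] the formulas of \muLKmodinf{}", and your case-by-case inspection of the clauses of $\trans{(-)}$, each of which visibly produces a formula with outermost connective $\oc$, is exactly that induction spelled out. Nothing further is needed.
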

We provide the translation of modal rules in \Cref{fig:LinearRuleTranslationMod}.
\defref{See \Cref{app:LinTranslation} for the translation of the remaining rules.}
\begin{figure*}[t]
    \centering
    $\begin{array}{c|c}
    \AIC{\B\vdash F, \A}
    \RL{\boxprom}
    \UIC{\Box\B\vdash\Box F,\lozenge\A}
    \DP
\qquad    \rightsquigarrow 
&
    \AIC{\B, F\vdash \A}
    \RL{\diaprom}
    \UIC{\Box\B, \lozenge F\vdash\lozenge\A}
    \DP
\qquad    \rightsquigarrow 
    \\[3ex]
\AIC{\trans{\B}\vdash\wn\trans{F}, \wn\trans{\A}}
    \doubleLine
    \RL{(\ocde, \wnprom)^{\#(\B)}}
    \UIC{\oc\wn\trans{\B}\vdash\wn\trans{F}, \wn\trans{\A}}
    \RL{\ocprom}
    \UIC{\oc\wn\trans{\B}\vdash\oc\wn\trans{F}, \wn\trans{\A}}
    \RL{\boxprom}
    \UIC{\Box\oc\wn\trans{\B}\vdash\Box\oc\wn\trans{F}, \lozenge\wn\trans{\A}}
    \doubleLine
    \RL{\ocde^{\#(\B)}}
    \UIC{\oc\Box\oc\wn\trans{\B}\vdash\Box\oc\wn\trans{F}, \lozenge\wn\trans{\A}}
    \doubleLine
    \RL{(\wnde, \ocpromloz)^{(\#(\A)+1)}}
    \UIC{\oc\Box\oc\wn\trans{\B}\vdash\wn\oc\Box\oc\wn\trans{F}, \wn\oc\lozenge\wn\trans{\A}}
    \DP
 \quad
&
 \quad    \AIC{\trans{\B}, \trans{F}\vdash \wn\trans{\A}}
    \doubleLine
    \RL{(\ocde, \wnprom)^{\#(\B)}}
    \UIC{\oc\wn\trans{\B}, \trans{F}\vdash \wn\trans{\A}}
    \RL{\wnprom}
    \UIC{\oc\wn\trans{\B}, \wn\trans{F}\vdash \wn\trans{\A}}
    \RL{\diaprom}
    \UIC{\Box\oc\wn\trans{\B}, \lozenge\wn\trans{F}\vdash \lozenge\wn\trans{\A}}
    \doubleLine
    \RL{\ocde^{\#(\B)+1}}
    \UIC{\oc\Box\oc\wn\trans{\B},\oc\lozenge\wn\trans{F}\vdash \lozenge\wn\trans{\A}}
    \doubleLine
    \RL{(\wnde, \ocpromloz)^{\#(\A)}}
    \UIC{\oc\Box\oc\wn\trans{\B}, \oc\lozenge\wn\trans{F}\vdash \wn\oc\lozenge\wn\trans{\A}}
    \DP
     \quad
    \end{array}$
    \caption{Linear translation of the modal rules}\label{fig:LinearRuleTranslationMod}
\end{figure*}
We define translations of proofs coinductively using rule translations.
As the smallest formula of a totally ordered set of translations is the translation of the smallest formula, and branches of $\trans{\pi}$ contain translations of threads from $\pi$ and vice-versa, we have:
\begin{lem}[Robustness of $\trans{(-)}$ to validity]
A pre-proof $\pi$ is valid if and only if $\trans{\pi}$ is valid.
\end{lem}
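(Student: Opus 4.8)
The plan is to follow the pattern of \Cref{mumdocircValidityRobustness}. Since $\trans{(-)}$ acts on proofs coinductively by replacing each inference of $\pi$ with a fixed finite, non-empty derivation block — such as the blocks displayed in \Cref{fig:LinearRuleTranslationMod} for the modal rules — there is a canonical correspondence between infinite branches of $\pi$ and infinite branches of $\trans{\pi}$: an infinite branch $B = (s_i)_i$ of $\pi$ lifts to the infinite branch $\trans{B}$ obtained by concatenating the blocks translating its inferences, and conversely any infinite branch of $\trans{\pi}$ must traverse infinitely many blocks (each block being finite), hence projects onto a unique infinite branch of $\pi$. So it suffices to prove, for each infinite branch $B$ of $\pi$, that $B$ supports a valid thread in $\pi$ if and only if $\trans{B}$ supports a valid thread in $\trans{\pi}$.

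The first step is to set up the matching thread correspondence. Inspecting each rule translation, one verifies that the block replacing an inference $(r)$ carries the ancestor relation of $(r)$ to a composite of the ancestor relations occurring inside the block; hence a thread $t = (A_i)_{i > n}$ of $B$ induces a thread $\trans{t}$ of $\trans{B}$ which, at the block boundaries, passes exactly through the occurrences $\trans{A_i}$ (and in between through their immediate subformula occurrences, all of shape $\wn \trans{(\cdot)}$, $\oc \trans{(\cdot)}$ or $c(\dots)$). Conversely, a thread of $\trans{B}$ cannot remain inside a single block from some point on — blocks are finite while $\trans{B}$ is infinite — so it crosses infinitely many block boundaries, and reading off the occurrences it meets there yields a thread $t$ of $B$ with $\trans{t}$ equal to the given thread up to the choice of a tail. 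Thus threads of $B$ and of $\trans{B}$ are in $\trans{(-)}$-correspondence.

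It then remains to transfer validity across this correspondence, which is where the bookkeeping concentrates. Two observations are needed. First, $\trans{(-)}$ reflects the subformula order on the formulas occurring in threads (if $A$ is a subformula of $B$ then $\trans{A}$ is a subformula of $\trans{B}$, and two translations are comparable only when the originals are), so the minimum of a subformula-totally-ordered set of translations is the translation of the minimum of the originals; moreover none of the $\oc$- or $\wn$-decorations inserted by $\trans{(-)}$ is a fixed-point formula, hence none can ever be a minimal recurring fixed point. Second, by construction of the rule translations the only $\mu$- (resp. $\nu$-) unfoldings appearing in the block translating an inference of $\pi$ are those induced by the $(\mu_l), (\mu_r)$ (resp. $(\nu_l), (\nu_r)$) rules of $\pi$ themselves, on the same side of the sequent. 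Consequently, if $t$ has minimal recurring formula $\delta X. G$ with $\delta \in \{\mu, \nu\}$, then the minimal recurring fixed-point occurrence of $\trans{t}$ is the occurrence of $\delta X. \wn \trans{G}$ inside $\trans{(\delta X. G)} = \oc\, \delta X. \wn \trans{G}$, which lies on the same side of its sequent and carries the same $\delta$; and $\delta X. G$ is principal infinitely often along $t$ if and only if the matching fixed-point rule fires infinitely often along $\trans{t}$, i.e.\ iff $\trans{t}$ is infinitely often principal. Hence $t$ is valid iff $\trans{t}$ is, and combining with the branch and thread correspondences yields the claim.

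I expect the main obstacle to be exactly this last transfer: one must ensure the exponential decorations inserted by the translation neither break the minimality of the recurring fixed point of a genuine thread nor create a spurious valid thread in $\trans{\pi}$ with no counterpart in $\pi$. This is controlled by the uniform placement of $\oc$ on top of every translated formula together with the fact that the modalities $\oc, \wn, \Box, \lozenge$ are never themselves fixed-point operators, so the minimal recurring fixed point is always ``visible through'' the decorations and matches the original on both polarity and side.
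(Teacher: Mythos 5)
Your proof is correct and takes essentially the same route as the paper: the paper dispatches this lemma with a single sentence observing that branches of $\trans{\pi}$ contain translations of threads of $\pi$ and vice versa, and that the minimum of a totally ordered set of translated formulas is the translation of the minimum — exactly the branch/thread correspondence and minimal-recurring-formula transfer you spell out in detail. Your elaboration (finiteness of the translation blocks, the inserted $\oc$/$\wn$ decorations never being fixed points, and preservation of the side and of $\delta$) is a faithful expansion of that argument rather than a different one.
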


\begin{lem}[Composition of $\sk{-}$ and of $\trans{(-)}$]
    \label{linearSkComp}
\proofref{See proof in \Cref{sec:app:linearSkComp}}
For any \muLKmodinf{} pre-proof $\pi$, $\sk{\trans{\pi}}$ is equal to $\pi$.
\end{lem}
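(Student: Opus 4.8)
The plan is to prove the identity in three layers — formulas, then rules, then pre-proofs (the last by coinduction) — everything resting on the single observation that the skeleton erases the exponential modalities: $\sk{\oc F} = \sk{\wn F} = \sk F$.

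\emph{Formulas.} First I would show $\sk{\trans A} = A$ for every \muLKmodinf{} pre-formula $A$, by induction on $A$. Every $\trans{c(A_1,\dots,A_n)}$ has the shape $\oc(c'(\wn\trans{A_1},\dots,\wn\trans{A_n}))$ — or one of $\oc X$, $\oc a$, $\oc\top$, $\oc 0$, $\oc\lozenge\wn\trans A$, $\oc\Box\oc\wn\trans A$, $\oc\delta X.\wn\trans A$ — so applying $\sk{-}$ strips all the $\oc$ and $\wn$ wrappers by the clauses of \Cref{def:skTranslation} and turns $c'$ back into the matching classical connective ($\with,\oplus \mapsto \land,\lor$; $\multimap \mapsto \rightarrow$; $(-)^\perp \mapsto (-)^\perp$; $\lozenge,\Box,\mu,\nu$ unchanged). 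Hence $\sk{\trans{c(A_1,\dots,A_n)}} = c(\sk{\trans{A_1}},\dots,\sk{\trans{A_n}}) = c(A_1,\dots,A_n)$ by the induction hypothesis, and the base cases $X,a,\true,\false$ are immediate. Since $\trans{(\A\vdash\B)} = \trans\A\vdash\wn\trans\B$ and $\sk{\wn F} = \sk F$, this lifts to sequents: $\sk{\trans{(\A\vdash\B)}} = \A\vdash\B$.

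\emph{Rules.} Next I would check, going through \Cref{fig:LinearRuleTranslationMod} and the remaining rule translations in \Cref{app:LinTranslation}, that for every \muLKmodinf{} rule $r$ the skeleton of the \muLLmodinf{} derivation $\trans r$ is exactly $r$. Each $\trans r$ consists of one content-bearing \muLLmodinf{} rule — $\boxprom$ or $\diaprom$ for the modal rules, $\with_r$ for $\land_r$, $\oplus_r^i$ for $\lor_r^i$, $\mu_r/\nu_r$ for the fixed-point rules, and so on — surrounded by exponential bookkeeping: blocks of $\ocde,\wnde,\ocprom,\wnprom,\ocpromloz,\wnpromloz$, possibly iterated $\#(\A)$ or $\#(\B)$ times. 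By the formula-level computation the premise and conclusion of each bookkeeping step have the same skeleton, so each such step has the trivial (identity) skeleton; and the content-bearing step skeletonizes, again by that computation, precisely to $r$ together with its ancestor relation. Hence $\sk{\trans r} = r$. (Exchanges are treated implicitly on both sides, as announced after \Cref{fig:fixFragment}; for the $\mu/\nu$ rules one additionally uses that $\sk{-}$ commutes with the unfoldings involved, which follows from the formula-level identity.)

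\emph{Pre-proofs, by coinduction.} Both $\trans{(-)}$ and $\sk{-}$ are defined coinductively: at the root they apply the translation (resp. skeleton) of the rule used there and recurse on the immediate sub-pre-proofs; moreover $\sk{-}$ is compositional, so its action extends through the finitely many rules making up a derivation $\trans r$ into the subtrees above it. Thus, if $\pi$ has root rule $r$ with premises $\pi_1,\dots,\pi_k$, then $\trans\pi$ is the derivation $\trans r$ with the $\trans{\pi_i}$ plugged on top, and $\sk{\trans\pi}$ is therefore $\sk{\trans r} = r$ with the $\sk{\trans{\pi_i}}$ plugged on top. Consequently the relation $\{(\sk{\trans\pi},\pi) : \pi \text{ a \muLKmodinf{} pre-proof}\}$ is closed under taking immediate sub-pre-proofs and its members share their root rule and conclusion sequent; being a bisimulation of (possibly infinite) derivation trees, it is contained in equality, i.e.\ $\sk{\trans\pi} = \pi$. (Equivalently: by induction on $n$ the depth-$n$ truncations of $\sk{\trans\pi}$ and $\pi$ coincide, and a pre-proof is determined by its finite truncations.)

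\emph{Main obstacle.} The delicate part is the rule-level step: one must verify, rule by rule — including the many-premise rules, the fixed-point rules with their unfoldings, and above all the modal rules of \Cref{fig:LinearRuleTranslationMod}, whose translations stack several iterated exponential blocks — that \emph{all} of the exponential scaffolding collapses to identities under $\sk{-}$ and that the ancestor relation of the surviving rule matches $r$ exactly, so that no stray weakening, contraction or promotion is left behind in $\sk{\trans r}$. Everything else is routine once $\sk{\oc F} = \sk{\wn F} = \sk F$ is in hand.
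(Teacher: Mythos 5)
Your proposal is correct and follows essentially the same route as the paper's own (much terser) proof: the paper likewise observes that $\trans{(-)}$ translates each rule $r$ to a derivation consisting of the $\sk{-}$-preimage of $r$ plus exponential rules that vanish under $\sk{-}$, and then applies this coinductively to pre-proofs. Your additional explicit formula-level lemma $\sk{\trans A}=A$ and the bisimulation phrasing of the coinductive step are just more detailed renderings of the same argument.
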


We define our rewriting system using the $\sk{-}$ translation:
\begin{defi}[(\mcut)-rewriting system of \muLKmodinf{}]
The $\mathsf{(mcut)}$-rewriting \linebreak system of \muLKmodinf{} is defined as the (\mcut)-system obtained from the \muLLmodinf{} (\mcut)-system by discarding the linear information of proofs in this system.
\end{defi}

Finally, we have the following theorem:
\begin{thm}
The 
reduction system of \muLKmodinf{} is infinitary weakly-normalizing.
\end{thm}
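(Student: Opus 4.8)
The plan is to \emph{lift} the cut-elimination theorem for \muLLmodinf{} (\Cref{muLLmodinfCutElim}) through the linear translation $\trans{(-)}$ and then project the resulting reduction sequence back down to \muLKmodinf{} along the skeleton $\sk{-}$, using \Cref{linearSkComp} ($\sk{\trans{\pi}}=\pi$) to guarantee that the projected sequence really starts from $\pi$, and \Cref{skeletonValidity} to transport validity. The point is that the \muLKmodinf{} $(\mcut)$-rewriting system was \emph{defined} as the image under $\sk{-}$ of the \muLLmodinf{} one, so the whole argument is a simulation argument: a reduction sequence reaching cut-free normal form upstairs maps to one reaching cut-free normal form downstairs.

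In detail, let $\pi$ be a \muLKmodinf{} proof. By robustness of $\trans{(-)}$ to validity, $\trans{\pi}$ is a valid \muLLmodinf{} proof. After the routine preprocessing that coalesces binary cuts into multicuts, we obtain a valid \muLLmodinf{} proof $\rho_0$ satisfying the hypothesis of \Cref{def:infty-red-seq} (at most one $(\mcut)$ per branch) and with $\sk{\rho_0}=\pi$ (the coalescing commutes with $\sk{-}$, which sends $\cut$ to $\cut$ and $(\mcut)$ to $(\mcut)$). Choose any \emph{fair} $(\mcut)$-reduction sequence $(\rho_i)_{i\in1+\lambda}$ starting from $\rho_0$ — such a sequence exists, since redexes can always be scheduled fairly. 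By \Cref{muLLmodinfCutElim}, $(\rho_i)_i$ converges to a cut-free valid \muLLmodinf{} proof $\rho_\infty$. Note that we only need fairness \emph{upstairs}, to invoke \Cref{muLLmodinfCutElim}; the projected sequence downstairs need not be fair, since weak normalisation only asks for the existence of one converging sequence.

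Now apply $\sk{-}$ pointwise. By construction of the \muLKmodinf{} $(\mcut)$-system, each step $\rho_i\rightsquigarrow\rho_{i+1}$ is sent either to a genuine \muLKmodinf{} $(\mcut)$-step $\sk{\rho_i}\rightsquigarrow\sk{\rho_{i+1}}$ or to an identity $\sk{\rho_i}=\sk{\rho_{i+1}}$ (the latter when the \muLLmodinf{} step only rearranges $\oc/\wn$ layers that are erased by $\sk{-}$). Deleting the stuttering steps yields a \muLKmodinf{} $(\mcut)$-reduction sequence $(\sk{\rho_{i_k}})_k$, which starts from $\sk{\rho_0}=\pi$ by \Cref{linearSkComp}, and in which every proof is valid by \Cref{skeletonValidity}. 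Since $\sk{-}$ maps each rule to a single rule, it is depth-preserving, hence continuous for the truncation topology on pre-proofs, so $\sk{\rho_i}\to\sk{\rho_\infty}$; removing stutters does not change the limit, so $(\sk{\rho_{i_k}})_k$ (finite or of length $\omega$) converges to $\sk{\rho_\infty}$, which is cut-free because $\sk{-}$ introduces no cuts, and valid again by \Cref{skeletonValidity}. Thus from an arbitrary \muLKmodinf{} proof $\pi$ we have exhibited a converging $(\mcut)$-reduction sequence ending in a cut-free (valid) proof, i.e. the system is infinitary weakly-normalising.

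The main obstacle is the bookkeeping behind the sentence ``each \muLLmodinf{} step maps to a \muLKmodinf{} step or to an identity'': one must go through the cut-elimination steps (\Cref{fig:mcutMALLExa}, \Cref{fig:mullmodinfexpcommcutstep}, \Cref{fig:mullmodinfexpprincipcutstep}) and check, step type by step type, that forgetting the exponential layers produces a legal \muLKmodinf{} reduction — e.g. the commutation of $\ocpromloz$/$\wnpromloz$ past the multicut collapses onto the commutation of $\boxprom$/$\diaprom$ past the multicut, the $\wnde$/$\ocde$ commutations become trivial, the $\wncontr/\occontr{}$, $\wnwk/\ocwk$ steps collapse onto the corresponding classical contraction/weakening steps (or become trivial when the principal formula is an erased $\oc/\wn$), and the promotion/dereliction key-cases become trivial unless a $\Box/\lozenge$ is present, in which case they become modal principal steps. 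The only other point requiring care, and a minor one, is making precise the preprocessing of $\trans{\pi}$ into a proof with at most one multicut per branch; this is standard and entirely independent of the modalities.
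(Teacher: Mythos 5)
Your proposal is correct and follows essentially the same route as the paper: translate $\pi$ to $\trans{\pi}$, run a fair $(\mcut)$-reduction sequence upstairs and invoke \Cref{muLLmodinfCutElim}, then project pointwise along $\sk{-}$, observing that each step becomes either a genuine \muLKmodinf{} step or a stutter, and that the limit is cut-free and valid. The paper's own proof is just a terser version of this; your extra care about coalescing cuts into multicuts, removing stuttering steps, and the continuity of $\sk{-}$ only makes explicit what the paper leaves implicit.
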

\begin{proof}
Consider a \muLKmodinf{} proof $\pi$ and a fair reduction sequence $\sigma_{\mathsf{L}}$ from $\trans{\pi}$. By \Cref{muLLmodinfCutElim}, $\sigma_{\mathsf{L}}$ converges to a cut-free \muLLmodinf{} proof. 
By applying $\sk{-}$ to each proof in the sequence, we obtain a sequence of valid \muLKmodinf{} proofs such that either $\sk{\pi_i} = \sk{\pi_{i+1}}$ or $\sk{\pi_i}$ reduces to $\sk{\pi_{i+1}}$ with one step of \muLKmodinf{} \mcut-reduction.
We obtain a \muLKmodinf{} cut-reduction sequence $\sigma_{\mathsf{K}}$ that is infinite and converges to a valid, cut-free \muLKmodinf{} proof.
\end{proof}

    \section{Conclusion}

We have introduced \muLLmodinf{}, a linear version of the modal $\mu$-calculus, along with its circular and non-wellfounded proof systems. We have proved a cut-elimination theorem for fair cut-elimination reduction sequences, generalizing previous results on the non-wellfounded proof theory of linear logic. 
Through a linear translation of the circular and non-wellfounded proof systems for the modal $\mu$-calculus (\muLKmodinf{}) to \muLLmodinf{}, we have obtained a cut-elimination theorem for the non-wellfounded sequent calculus of the modal $\mu$-calculus.

In our opinion, this work presents a new and interesting application of linear logic to the modal $\mu$-calculus, developing proof theories in both domains and highlighting the potential for cross-fertilization for the two communities. Indeed, this constitutes the first full syntactic cut-elimination theorem for a proof system modeling the full modal $\mu$-calculus.

Furthermore, the fine-grained cut-elimination inherited from linear logic \linebreak opens up the possibility of developing a non-trivial cut-elimination equivalence on \muLKmodinf{} proofs. This, in turn, could lead to the design of a denotational semantics for proofs of modal $\mu$-calculus, a question that was previously out of reach due to both the lack of a syntactic cut-elimination theorem and the absence of structure in proofs of the modal $\mu$-calculus. 

\medskip

However, this is not the first work studying both linear logic and modal logic. Notably, studies on modal logic S4 through linear logic have addressed both its intuitionistic fragment, proving cut-admissibility~\cite{FukudaYoshimizu19}, and its classical version~\cite{MartiniMasini94,Schellinx96}.
Our work differs from these studies in two key aspects: (i) We employ a non-wellfounded setting. (ii) The logic S4 treats modalities as linear logic exponentials with a standard promotion and dereliction, making it a sub-system of the subexponentials~\cite{Nigam09}. Our work is different because we use a functorial promotion for modal rules, which requires us to examine the interaction between functorial and non-functorial promotion.

\medskip

%

An important direction for future work is to explore whether our linear-logical modal mu-calculus can be adapted to wellfounded proof-systems of linear logic with fixed-points in a \muLLmod{} sequent calculus. We aim to investigate if our methodology can be extended to obtain a cut-elimination theorem for the finitary modal $\mu$-sequent calculus via a linear translation from \muLKmod{} to \muLLmod{}, building upon Baelde's results~\cite{DBLP:journals/tocl/Baelde12}. This question presents significant challenges due to the complex structure of fixed-point rules in finitary \muLL{}. While omitting linear information from \muLLmod{} proofs should yield \muLKmod{} proofs that simulate cut-elimination, designing a linear translation from \muLKmod{} to \muLLmod{} that commutes with cut-elimination, as shown in this paper, remains a non-trivial task.

From the linear logic-theoretic point of view, our system \muLLmodinf{} can be viewed as a linear logic with two sets 
of exponential modalities satisfying different structural rules and exponentials. This is akin to so-called {\it light 
logics}~\cite{lll,sll}, that are variants of linear logics developed by taming the power of exponential modalities in 
order to control the complexity of cut-elimination (for instance constraining the $\wn$-context of a promotion to be 
immediately derelicted after a promotion ensures that typable programs have at most elementary complexity~\cite{lll}).

Building on these ideas, we propose to develop a uniform framework for proving cut elimination in wellfounded proof systems of linear logic fragments that encompass both light logics and modal logics. This framework could be inspired by the subexponential system of Nigam \& Miller~\cite{Nigam09} and the first author with Laurent~\cite{TLLA21}, but extended to include functorial promotion. Such a system would allow us to treat light logics and our linear-logical modal mu-calculus as instances of a more general linear logic system.

In a similar direction, we shall investigate whether our approach of reducing cut-elimination to that of simpler systems can be useful for the circular approaches to session types \cite{lmcs:5675,10.1007/978-3-662-45917-1_11} as well as to the study of productivity and normalization properties in functional reactive programming~\cite{DBLP:conf/popl/CaveFPP14}.


\subsubsection{Funding.} This work was partially funded by the ANR project RECIPROG, project reference ANR-21-CE48-019-01.

\subsubsection{Acknowledgments.} The authors would like to thank the anonymous reviewers for their constructive and valuable comments.

    \bibliographystyle{spmpsci}

 \bibliography{biblio}

\clearpage

\appendix

\section{Appendix on the \Cref{section:definitionsOfKnownSequentCalculi}}
\label{app:definitionsOfKnownSequentCalculi}

\subsection{Details of definitions of Linear Logic}
\label{app:lldefdetails}
\begin{defi}[Positive and negative occurrence of a fixed-point variable]
    Let $X\in \Var$ be a fixed-point variable, one defines the fact, for $X$, to occur positively (resp. negatively) in a pre-formula by induction on the structure of pre-formulas: 
    \begin{itemize}
    \item The variable $X$ occurs positively in $X$.
    \item The variable $X$ occurs positively (resp. negatively) in ${c} (F_1, \dots, F_n)$, if there is some $1\leq i \leq n$ such that $X$ occurs positively (resp. negatively) in $F_i$ for $c\in\{\otimes, \parr, \with, \oplus, \oc, \wn\}$.
    \item The variable $X$ occurs positively (resp. negatively) in $F^\perp$ if $X$ occurs negatively (resp. positively) in $F$.
    \item The variable $X$ occurs positively (resp. negatively) in $F\multimap G$ if $X$ occurs either positively (resp. negatively) in $G$ or negatively (resp. positively) in $F$.
    \item The variable $X$ occurs positively (resp. negatively) in $\delta Y. G$ (with $Y\neq X$) if it occurs positively (resp. negatively) in $G$ (for $\delta\in\{\mu, \nu\}$).
    \end{itemize}
    
    \end{defi}

\section{Appendix on the \Cref{section:ModalMu}}

\subsection{Details on the discussion about robustness of \muLLmodinf{}}
\label{app:DetailsDiscussionRobustnessLLbox}

As said in the core of the paper, taking the $(\wnprom/\wnwk)$ principal case:
$$
\AIC{\A_1\vdash \B_1}
\RL{\wnwk}
\UIC{\A_1\vdash \wn C, \B_1}
\AIC{\oc\A_2, C\vdash\wn\B_2}
\RL{\wnprom}
\UIC{\oc\A_2\wn C\vdash \wn\B_2}
\RL{\cut}
\BIC{\A_1, \oc\A_2\vdash \B_1,\wn\B_2}
\DP
\rightsquigarrow
\AIC{\A_1\vdash \B_1}
\RL{\wnwk, \ocwk}
\doubleLine
\UIC{\A_1, \oc\A_2\vdash \B_2, \wn\B_2}
\DP
$$
and adding modal-formulas to the context, naturally requires to be able to weaken $\lozenge/\Box$-formulas (this corresponds to the reason for the design of the promotion rule in \LL{}):
$$
\AIC{\A_1\vdash \B_1}
\RL{\wnwk}
\UIC{\A_1\vdash \wn C, \B_1}
\AIC{\oc\A_2{\color{blue}, \Box\A_3}, C\vdash \wn\B_2{\color{blue}, \lozenge\B_3}}
\RL{\wnprom}
\UIC{\oc\A_2, {\color{blue}\Box\A_3,} \wn C\vdash \wn\B_2{\color{blue}, \lozenge\B_3}}
\RL{\cut}
\BIC{\A_1, \oc\A_2, {\color{blue}\Box\A_3}\vdash \B_2, \wn\B_2{\color{blue}, \lozenge\B_3}}
\DP
\rightsquigarrow
\AIC{\A_1\vdash \B_1}
\RL{\wnwk, \ocwk{\color{blue}, \boxwk, \diawk}}
\doubleLine
\UIC{\A_1, \oc\A_2{\color{blue}, \Box\A_3}\vdash \B_2, \wn\B_2{\color{blue}, \lozenge\B_3}}
\DP
$$
Moreover, the weakening on $\lozenge$ (and dually on $\Box$) is necessary to preserve the cut-elimination property, as the sequent $ \vdash \lozenge \bot, 1 $ is provable with (\cut) and without $(\diawk)$:
$$
\AIC{}
\RL{1}
\UIC{\vdash 1}
\RL{\wnwk}
\UIC{\vdash 1, \wn\lozenge \bot}
\AIC{}
\RL{\ax}
\UIC{\lozenge\bot\vdash \lozenge \bot}
\RL{\wnprom}
\UIC{\wn \lozenge \bot\vdash  \lozenge \bot}
\RL{\cut}
\BIC{\vdash \lozenge \bot, 1}
\DP
$$
but is unprovable without (\cut) and without $(\diawk)$ as we cannot apply any rules on such a sequent.
Similarly, the $(\wncontr/\wnprom)$ key-case naturally asks to be able to contract $\lozenge$-formulas.

$$\hspace{-6cm}
\AIC{\A_1\vdash \wn C, \wn C, \B_1}
\RL{\wncontr}
\UIC{\A_1\vdash \wn C, \B_1}
\AIC{\oc\A_2, {\color{blue}\Box\A_3}, C\vdash \wn\B_2, {\color{blue}\lozenge\B_3}}
\RL{\wnprom}
\UIC{\oc\A_2, {\color{blue}\Box\A_3}, \wn C\vdash \wn\B_2, {\color{blue}\lozenge\B_3}}
\RL{\cut}
\BIC{\A_1, \oc\A_2, {\color{blue}\Box\A_3}\vdash \B_1,\wn\B_2, {\color{blue}\lozenge\B_3}}
\DP
\rightsquigarrow$$
$$\hspace{2cm}
\AIC{\A_1\vdash \wn C, \wn C, \B_1}
\AIC{\oc\A_2, {\color{blue}\Box\A_3}, C\vdash \wn\B_2, {\color{blue}\lozenge\B_3}}
\RL{\wnprom}
\UIC{\oc\A_2, {\color{blue}\Box\A_3}, \wn C\vdash \wn\B_2, {\color{blue}\lozenge\B_3}}
\RL{\cut}
\BIC{\A_1, \oc\A_2, {\color{blue}\Box\A_3}\vdash\wn C, \B_1,\wn\B_2, {\color{blue}\lozenge\B_3}}
\AIC{\oc\A_2, {\color{blue}\Box\A_3}, C\vdash \wn\B_2, {\color{blue}\lozenge\B_3}}
\RL{\wnprom}
\UIC{\oc\A_2, {\color{blue}\Box\A_3}, \wn C\vdash \wn\B_2, {\color{blue}\lozenge\B_3}}
\RL{\cut}
\BIC{\A_1, \oc\A_2,\oc\A_2, {\color{blue}\Box\A_3}, {\color{blue}\Box\A_3}\vdash \B_1,\wn\B_2, \wn\B_2, {\color{blue}\lozenge\B_3}, {\color{blue}\lozenge\B_3}}
\RL{\diacontr}
\UIC{\A_1, \oc\A_2,\oc\A_2, {\color{blue}\Box\A_3}\vdash \B_1,\wn\B_2, \wn\B_2, {\color{blue}\lozenge\B_3}}
\RL{\wncontr, \occontr{}}
\doubleLine
\UIC{\A_1, \oc\A_2, {\color{blue}\Box\A_3}\vdash \B_1,\wn\B_2, {\color{blue}\lozenge\B_3}}
\DP.
$$
By examining the following proof, one can once again see the necessity of $(\diacontr)$ for preserving cut-elimination in such a system:
$$
\AIC{}
\RL{\ax}
\UIC{\lozenge 1\vdash \lozenge 1}
\RL{\wnprom}
\UIC{\wn\lozenge 1\vdash \lozenge 1}
\AIC{}
\RL{\ax}
\UIC{\vdash \lozenge 1, \Box\bot}
\RL{\wnde}
\UIC{\vdash \wn\lozenge 1, \Box\bot}
\AIC{}
\RL{\ax}
\UIC{\vdash \lozenge 1, \Box\bot}
\RL{\wnde}
\UIC{\vdash \wn\lozenge 1, \Box\bot}
\RL{\otimes}
\BIC{\vdash \wn\lozenge 1, \wn\lozenge 1, \Box \bot\otimes \Box \bot}
\RL{\wncontr}
\UIC{\vdash \wn\lozenge 1, \Box \bot\otimes \Box \bot}
\RL{\cut}
\BIC{\vdash \lozenge 1, \Box \bot\otimes \Box \bot}
\DP
.$$

The conclusion sequent is unprovable, as the only rule that can be applied on it is a $(\otimes)$, leaving us with an unprovable sequent:
$$
\AIC{}
\RL{\ax}
\UIC{\vdash \Box\bot, \lozenge 1}
\AIC{\text{unprovable}}
\noLine
\UIC{\vdash}
\RL{\bot}
\UIC{\vdash \bot}
\RL{\boxprom}
\UIC{\vdash \Box\bot}
\RL{\otimes}
\BIC{\vdash \lozenge 1, \Box \bot\otimes \Box \bot}
\DP.
$$

\subsection{Details on \sk{-}-translation}
\label{app:skTranslation}

\begin{defi}[\muLKmodinf-Skeleton]
    \label{def:app:skTranslation}
    We define skeleton of formulas by induction:\\ 
$\begin{array}{rcl rcl rcl}
\sk{1}&=&\true\quad&
\sk{F\otimes G} &=& \sk{F} \land \sk{G}\quad&
\sk{F\multimap G} &=&  \sk F \rightarrow \sk G\\
\sk{\bot} &=& \false\quad&
\sk{F\parr G} &=& \sk{F}\lor\sk{G}\quad&
\sk {F^\perp} &= &\sk{F}^\perp\\
\sk{\top} &=& \true \quad&
\sk{F\with G} &=& \sk F\land \sk G \quad&
\sk{\lozenge F} & = & \lozenge \sk{F}\\
\sk{0} &=& \false \quad&
\sk{F\oplus G} &=& \sk F \lor \sk G \quad&
\sk{\Box F} & = & \Box \sk{F}\\
\sk a &=& a \quad&
\sk{\wn F} &=& \sk F \quad&
\sk{\oc F} &=& \sk F\\
\sk X &=& X \quad&
\sk{\mu X. F} &=& \mu X. \sk F \quad&
\sk{\nu X. F} &=& \nu X. \sk F
 \end{array}$

Sequents of \muLLmodinf{} are translated to sequent of skeletons of these formulas.
Rules are translated straightforwardly by forgetting the linear information, translation are given in \Cref{fig:skNonModruletranslation,fig:skModruletranslation}.

Translations of pre-proofs are obtained co-inductively by applying rule translations.
\end{defi}

\begin{figure*}
\begin{align*}
\hspace{-3cm}\AIC{}
\RL{\ax}
\UIC{F\vdash F}
\DP
\quad&\rightsquigarrow\quad
\AIC{}
\RL{\ax}
\UIC{\sk{F}\vdash \sk{F}}
\DP\\
\hspace{-2cm}\AIC{\A_1\vdash F, \B_1}
\AIC{\A_2, F \vdash \B_2}
\RL{\cut}
\BIC{\A_1, \A_2\vdash \B_1, \B_2}
\DP\quad&\rightsquigarrow\quad
\AIC{\sk{\A_1}\vdash \sk{F}, \sk{\B_1}}
\AIC{\sk{\A_2}, \sk{F} \vdash \sk{\B_2}}
\RL{\cut}
\BIC{\sk{\A_1}, \sk{\A_2}\vdash \sk{\B_1}, \sk{\B_2}}
\DP\\
\hspace{-2cm}\AIC{\A_1\vdash F, \B_1}
\AIC{\A_2\vdash G, \B_2}
\RL{\otimes_r}
\BIC{\A_1, \A_2\vdash F\otimes G, \B_1, \B_2}
\DP\quad&\rightsquigarrow
\quad
{\tiny\AIC{\sk{\A_1}\vdash \sk{F}, \sk{\B_1}}
\doubleLine
\RL{\lkwk_l, \lkwk_r}
\UIC{\sk{\A_1}, \sk{\A_2}\vdash \sk{F}, \sk{\B_1}, \sk{\B_2}}
\AIC{\sk{\A_2}\vdash \sk{G}, \sk{\B_2}}
\doubleLine
\RL{\lkwk_l, \lkwk_r}
\UIC{\sk{\A_1}, \sk{\A_2}\vdash \sk{G}, \sk{\B_1}, \sk{\B_2}}
\RL{\land_r}
\BIC{\sk{\A_1}, \sk{\A_2}\vdash \sk{F}\land\sk{G}, \sk{\B_1}, \sk{\B_2}}
\DP}\\
\hspace{-2cm}\AIC{\A, F, G\vdash \B}
\RL{\otimes_l}
\UIC{\A, F\otimes G\vdash\B}
\DP\quad&\rightsquigarrow\quad
\AIC{\sk{\A}, \sk{F}, \sk{G}\vdash\sk{\B}}
\RL{\land_l^1, \land_l^2}
\doubleLine
\UIC{\sk{\A}, \sk{F}\land\sk{G}, \sk{F}\land\sk{G}\vdash \sk{\B}}
\RL{\lkcontr_l}
\UIC{\sk{\A}, \sk{F}\land\sk{G}\vdash \sk{\B}}
\DP\\
\hspace{-2cm}\AIC{}
\RL{1_r}
\UIC{\vdash 1}
\DP\quad&\rightsquigarrow\quad
\AIC{}
\RL{\true_r}
\UIC{\vdash \true}
\DP\\
\hspace{-2cm}\AIC{\A\vdash\B}
\RL{1_l}
\UIC{\A, 1\vdash\B}
\DP\quad&\rightsquigarrow\quad
\AIC{\sk{\A}\vdash \sk{\B}}
\RL{\wk_l}
\UIC{\sk{\A}, \true\vdash \sk{\B}}
\DP\\
\hspace{-2cm}
\AIC{\A_1, F\vdash \B_1}
\AIC{\A_2, G\vdash \B_2}
\RL{\parr_l}
\BIC{\A_1, \A_2, F\parr G\vdash \B_1, \B_2}
\DP\quad&\rightsquigarrow\quad
{\tiny\AIC{\sk{\A_1}, \sk{F}\vdash \sk{\B_1}}
\RL{\wk_l, \wk_r}
\doubleLine
\UIC{\sk{\A_1}, \sk{\A_2}, \sk{F}\vdash \sk{\B_1}, \sk{\B_2}}
\AIC{\sk{\A_2}, \sk{G}\vdash \sk{\B_2}}
\RL{\wk_l, \wk_r}
\doubleLine
\UIC{\sk{\A_1}, \sk{\A_2}, \sk{G}\vdash \sk{\B_1}, \sk{\B_2}}
\RL{\lor_l}
\BIC{\sk{\A_1}, \sk{\A_2}, \sk{F}\lor\sk{G}\vdash \sk{\B_1}, \sk{\B_2}}
\DP}\\
\hspace{-2cm}\AIC{\A \vdash F, G, \B}
\RL{\parr_r}
\UIC{\A\vdash F\parr G,\B}
\DP\quad&\rightsquigarrow\quad
\AIC{\sk{\A} \vdash \sk{F}, \sk{G}, \sk{\B}}
\RL{\lor_r^1, \lor_r^2}
\doubleLine
\UIC{\sk{\A}\vdash \sk{F}\lor\sk{G}, \sk{F}\lor\sk{G}, \sk{\B}}
\RL{\lkcontr_r}
\UIC{\sk{\A}\vdash \sk{F}\lor\sk{G}, \sk{\B}}
\DP\\
\hspace{-2cm}\AIC{}
\RL{\bot_l}
\UIC{\bot \vdash}
\DP\quad&\rightsquigarrow\quad
\AIC{}
\RL{\false_l}
\UIC{\false\vdash}
\DP\\
\hspace{-2cm}\AIC{\A\vdash\B}
\RL{\bot_r}
\UIC{\A\vdash\bot, \B}
\DP\quad&\rightsquigarrow\quad
\AIC{\sk{\A}\vdash \sk{\B}}
\RL{\wk_r}
\UIC{\sk{\A}\vdash \false, \sk{\B}}
\DP\\
\hspace{-2cm}\AIC{\A\vdash F, \B}
\AIC{\A\vdash G, \B}
\RL{\with_r}
\BIC{\A\vdash F\with G, \B}
\DP
\quad&\rightsquigarrow\quad
\AIC{\sk{\A}\vdash \sk{F}, \sk{\B}}
\AIC{\sk{\A}\vdash \sk{G}, \sk{\B}}
\RL{\land_r}
\BIC{\sk{\A}\vdash \sk{F}\land\sk{G}, \sk{\B}}
\DP\\
\hspace{-2cm}\AIC{\A, F\vdash \B}
\RL{\with_l^1}
\UIC{\A, F\with G\vdash \B}
\DP
\quad&\rightsquigarrow\quad
\AIC{\sk{\A}, \sk{F}\vdash \sk{\B}}
\RL{\land_l^1}
\UIC{\sk{\A}, \sk{F}\land \sk{G}\vdash \sk{\B}}
\DP\\
\hspace{-2cm}\AIC{\A, G\vdash \B}
\RL{\with_l^2}
\UIC{\A, F\with G\vdash \B}
\DP
\quad&\rightsquigarrow\quad
\AIC{\sk{\A}, \sk{G}\vdash \sk{\B}}
\RL{\land_l^2}
\UIC{\sk{\A}, \sk{F}\land \sk{G}\vdash \sk{\B}}
\DP\\
\hspace{-2cm}\AIC{}
\RL{\top_r}
\UIC{\A\vdash\top, \B}
\DP\quad&\rightsquigarrow\quad
\AIC{}
\RL{\true_r}
\UIC{\sk{\A}\vdash \true, \sk{\B}}
\DP\\
\hspace{-2cm}\AIC{\A, F\vdash \B}
\AIC{\A, G\vdash \B}
\RL{\oplus_l}
\BIC{\A, F\oplus G\vdash \B}
\DP
\quad&\rightsquigarrow\quad
\AIC{\sk{\A}, \sk{F}\vdash \sk{\B}}
\AIC{\sk{\A}, \sk{G}\vdash \sk{\B}}
\RL{\lor_l}
\BIC{\sk{\A}, \sk{F}\lor\sk{G}\vdash \sk{\B}}
\DP\\
\hspace{-2cm}\AIC{\A\vdash F, \B}
\RL{\oplus_r^1}
\UIC{\A \vdash F\oplus G, \B}
\DP\quad&\rightsquigarrow\quad
\AIC{\sk{\A}\vdash \sk{F}, \sk{\B}}
\RL{\lor_r^1}
\UIC{\sk{\A} \vdash \sk{F}\lor \sk{G}, \sk{\B}}
\DP
\\
\hspace{-2cm}\AIC{\A\vdash G, \B}
\RL{\oplus_r^2}
\UIC{\A \vdash F\oplus G, \B}
\DP\quad&\rightsquigarrow\quad
\AIC{\sk{\A}\vdash \sk{G}, \sk{\B}}
\RL{\lor_r^2}
\UIC{\sk{\A} \vdash \sk{F}\lor \sk{G}, \sk{\B}}
\DP
\\
\hspace{-2cm}\AIC{}
\RL{0_l}
\UIC{\A, 0\vdash\B}
\DP\quad&\rightsquigarrow\quad
\AIC{}
\RL{\false_l}
\UIC{\sk{\A}, \false \vdash \sk{\B}}
\DP\\
\hspace{-2cm}\AIC{\A, F\vdash G, \B}
\RL{\multimap_r}
\UIC{\A\vdash F\multimap G, \B}
\DP
\quad&\rightsquigarrow\quad
\AIC{\sk{\A}, \sk{F}\vdash \sk{G}, \sk{\B}}
\RL{\rightarrow_r}
\UIC{\sk{\A}\vdash \sk{F}\rightarrow \sk{G}, \sk{\B}}
\DP\\
\hspace{-2cm}\AIC{\A_1\vdash F, \B_1}
\AIC{\A_2, G\vdash \B_2}
\RL{\multimap_l}
\BIC{\A_1, \A_2, F\multimap G\vdash \B_1, \B_2}
\DP
\quad&\rightsquigarrow\quad
\AIC{\sk{\A_1}\vdash \sk{F}, \sk{\B_1}}
\AIC{\sk{\A_2},\sk{G}\vdash \sk{\B_2}}
\RL{\rightarrow_l}
\BIC{\sk{\A_1}, \sk{\A_2}, \sk{F}\rightarrow\sk{G}\vdash \sk{\B_1}, \sk{\B_2}}
\DP\\
\hspace{-2cm}\AIC{\A, A\vdash \B}
\RL{\perp_l}
\UIC{\A\vdash A^\perp, \B}
\DP
\quad&\rightsquigarrow\quad
\AIC{\sk\A\vdash \sk A, \sk\B}
\RL{\perp_l}
\UIC{\sk\A, \sk{A}^\perp\vdash \sk\B}
\DP
\\
\hspace{-2cm}\AIC{\A, A\vdash \B}
\RL{\perp_r}
\UIC{\A\vdash A^\perp, \B}
\DP
\quad&\rightsquigarrow\quad
\AIC{\sk\A, \sk A\vdash \sk\B}
\RL{\perp_r}
\UIC{\sk\A\vdash \sk{A}^\perp, \sk\B}
\DP
\\
\hspace{-2cm}\AIC{\A, F[\mu X. F/X]\vdash \B}
\RL{\mu_l}
\UIC{\A, \mu X. F\vdash\B}
\DP
\quad&\rightsquigarrow\quad
\AIC{\sk{\A}, \sk{F}[\mu X. \sk{F}/X]\vdash \sk{\B}}
\RL{\mu_l}
\UIC{\sk{\A}, \mu X. \sk{F}\vdash\sk{\B}}
\DP\\
\hspace{-2cm}\AIC{\A\vdash F[\mu X. F/X], \B}
\RL{\mu_r}
\UIC{\A\vdash\mu X. F, \B}
\DP
\quad&\rightsquigarrow\quad
\AIC{\sk{\A}\vdash\sk{F}[\mu X. \sk{F}/X], \sk{\B}}
\RL{\mu_r}
\UIC{\sk{\A}\vdash\mu X. \sk{F}, \sk{\B}}
\DP\\
\hspace{-2cm}\AIC{\A, F[\nu X. F/X]\vdash \B}
\RL{\nu_l}
\UIC{\A, \nu X. F\vdash\B}
\DP
\quad&\rightsquigarrow\quad
\AIC{\sk{\A}, \sk{F}[\nu X. \sk{F}/X]\vdash \sk{\B}}
\RL{\nu_l}
\UIC{\sk{\A}, \nu X. \sk{F}\vdash\sk{\B}}
\DP\\
\hspace{-2cm}\AIC{\A\vdash F[\nu X. F/X], \B}
\RL{\nu_r}
\UIC{\A\vdash\nu X. F, \B}
\DP
\quad&\rightsquigarrow\quad
\AIC{\sk{\A}\vdash\sk{F}[\nu X. \sk{F}/X], \sk{\B}}
\RL{\nu_r}
\UIC{\sk{\A}\vdash\nu X. \sk{F}, \sk{\B}}
\DP
\end{align*}
\caption{$\sk{-}$-translation of non-modal rules}\label{fig:skNonModruletranslation}
\end{figure*}

\begin{figure*}
\begin{align*}
\AIC{\A\vdash \B}
\RL{\ocwk}
\UIC{\A, \oc F\vdash \B}
\DP
\quad&\rightsquigarrow\quad
\AIC{\sk{\A}\vdash \sk{\B}}
\RL{\wk_l}
\UIC{\sk{\A}, \sk{F}\vdash \sk{\B}}
\DP\\
\AIC{\A, \oc F, \oc F\vdash \B}
\RL{\occontr{}}
\UIC{\A, \oc F\vdash \B}
\DP
\quad&\rightsquigarrow\quad
\AIC{\sk{\A}, \sk{F}, \sk{F}\vdash \sk{\B}}
\RL{\cntr_l}
\UIC{\sk{\A}, \sk{F}\vdash \sk{\B}}
\DP\\
\AIC{\A, F\vdash \B}
\RL{\ocde}
\UIC{\A, \oc F\vdash\B}
\DP
\quad&\rightsquigarrow\quad
\AIC{\sk{\A}, \sk{F}\vdash \sk{\B}}
\DP\\
\AIC{\oc \A\vdash F, \wn \B}
\RL{\ocprom}
\UIC{\oc\A\vdash \oc F, \wn\B}
\DP
\quad&\rightsquigarrow\quad
\AIC{\sk{\A} \vdash \sk{F}, \sk{\B}}
\DP
\\
\AIC{\A\vdash \B}
\RL{\wnwk}
\UIC{\A\vdash \wn F, \B}
\DP
\quad&\rightsquigarrow\quad
\AIC{\sk{\A}\vdash \sk{\B}}
\RL{\wk_r}
\UIC{\sk{\A}, \sk{F}\vdash \sk{\B}}
\DP\\
\AIC{\A\vdash \wn F, \wn F, \B}
\RL{\wncontr{}}
\UIC{\A\vdash \wn F, \B}
\DP
\quad&\rightsquigarrow\quad
\AIC{\sk{\A}\vdash \sk{F}, \sk{F}, \sk{\B}}
\RL{\cntr_r}
\UIC{\sk{\A}\vdash \sk{F}, \sk{\B}}
\DP\\
\AIC{\A\vdash F, \B}
\RL{\wnde}
\UIC{\A\vdash\wn F, \B}
\DP
\quad&\rightsquigarrow\quad
\AIC{\sk{\A}, \sk{F}\vdash \sk{\B}}
\DP\\
\AIC{\oc \A, F\vdash \wn \B}
\RL{\wnprom}
\UIC{\oc\A, \wn F\vdash \wn\B}
\DP
\quad&\rightsquigarrow\quad
\AIC{\sk{\A}, \sk{F}\vdash \sk{\B}}
\DP\\
\AIC{\A\vdash A, \B}
\RL{\boxprom}
\UIC{\Box \A\vdash \Box A, \lozenge\B}
\DP
\quad&\rightsquigarrow\quad
\AIC{\sk{\A}\vdash \sk{A}, \sk{\B}}
\RL{\boxprom}
\UIC{\Box \sk{\A}\vdash \Box \sk{A}, \lozenge\sk{\B}}
\DP\\
\AIC{\A, A\vdash \B}
\RL{\diaprom}
\UIC{\Box \A, \lozenge A\vdash \lozenge\B}
\DP
\quad&\rightsquigarrow\quad
\AIC{\sk{\A}, \sk{A}\vdash \sk{\B}}
\RL{\diaprom}
\UIC{\Box \sk{\A}, \lozenge\sk{A}\vdash \lozenge\sk{\B}}
\DP
\end{align*}
\caption{$\sk{-}$-translation of modal rules of \muLLinf{}}\label{fig:skModruletranslation}
\end{figure*}

\subsection{Details on proofs of \Cref{skeletonValidity} }
\label{sec:app:skeletonvalidity}

\begin{prop}[Robustness of the skeleton to validity]\label{app:skeletonValidity}
    If $\pi$ is a \muLLmodinf{} valid pre-proof then $\sk{\pi}$ is a \muLKmodinf{} valid pre-proof, and vice-versa.
    \end{prop}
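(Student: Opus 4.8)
The plan is to mirror the (short) argument used for \Cref{mumdocircValidityRobustness}, comparing $\pi$ and $\sk\pi$ branch by branch and thread by thread. First I would record two facts about the translation of \Cref{def:skTranslation}: (i) each \muLLmodinf{} rule is mapped to a \emph{finite} \muLKmodinf{} derivation, of size bounded in the size of the treated sequent (\Cref{fig:skNonModruletranslation,fig:skModruletranslation}); and (ii) the exponential promotion and dereliction rules $(\ocde),(\wnde),(\ocprom),(\wnprom)$ — equivalently their generalisations $(\ocpromloz),(\wnpromloz)$ — are \emph{silent}, i.e. their skeleton is the identity on the premise, and each of them strictly decreases the size of its sequent when read bottom-up. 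From (ii), no infinite branch of a pre-proof can be eventually made only of silent rules (its sequent sizes would be strictly decreasing forever); combined with (i), every infinite branch $b$ of $\pi$ traverses infinitely many \emph{boundary} sequents of $\sk\pi$ (the images of sequents of $\pi$) and hence projects to an infinite branch $\sk b$ of $\sk\pi$, and conversely — since the expansions are finite — every infinite branch of $\sk\pi$ is obtained this way from a unique infinite branch of $\pi$. This gives a bijection between the infinite branches of $\pi$ and those of $\sk\pi$.

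Next I would transport threads. Each rule translation respects the ancestor relation formula by formula, and the finitely many intermediate sequents inside each expansion only carry weakening-copies of formulas already present on the adjacent boundary sequents; so a thread $t=(A_i)_{i>n}$ of $b$ induces a thread $\sk t=(\sk{A_i})_{i>n}$ of $\sk b$, and every thread of $\sk b$ arises from a unique thread of $t$ (occurrences of $\sk s$ biject with occurrences of $s$, and a thread is determined by its boundary occurrences). Using that $\sk{-}$ commutes with substitution ($\sk X=X$, hence $\sk{F[\delta X.F/X]}=\sk F[\delta X.\sk F/X]$), the fixed-point unfoldings along $t$ correspond exactly to those along $\sk t$; moreover $\sk{-}$ maps the outermost connective of a fixed-point formula, and the side on which it occurs, to the same, so a $\nu$-formula occurring on the right (resp. a $\mu$-formula on the left), infinitely often, along $t$, translates to a $\nu$-formula on the right (resp. a $\mu$-formula on the left) occurring infinitely often along $\sk t$, and vice versa.

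The heart of the proof — and the step I expect to be the main obstacle — is to show that $t$ is a \emph{valid} thread of $b$ if and only if $\sk t$ is a valid thread of $\sk b$. The difficulty is that $\sk{-}$ is not injective on formulas: it identifies $\oc F,\wn F$ and $F$, and collapses $\otimes/\with$ to $\land$ and $\parr/\oplus$ to $\lor$. One must check three things. First, the recurring formulas of $\sk t$ are precisely the $\sk{-}$-images of the recurring formulas of $t$ (a recurring formula of $\sk t$ recurs at boundary sequents, hence equals $\sk C$ for a recurring $C$ of $t$; here the structure of threads matters: a constant skeleton can only be maintained by cycling through finitely many preimage-representatives, the ``regrowth'' of the thread formula happening only through fixed-point unfoldings). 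Second, on the set of recurring formulas — which is totally ordered by the subformula ordering and whose least element is a fixed-point formula — $\sk{-}$ is order-preserving and sends the least element to the least element, so the minimal recurring formula of $\sk t$ is $\sk{-}$ of the minimal recurring formula of $t$. Third, $t$ is principal infinitely often iff $\sk t$ is: a boundary occurrence principal in $\sk\pi$ comes from a principal occurrence of $t$ (non-silent rules preserve principality); and if $t$ were principal infinitely often \emph{only} at silent rules, then between two consecutive such events the thread formula — whose size can grow along $t$ only through a $(\mu)$- or $(\nu)$-unfolding — would have to be recharged by a fixed-point rule, so $t$, and hence $\sk t$, would in fact be principal infinitely often at non-silent rules. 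Combining the three, and recalling the second paragraph, $t$ is valid iff $\sk t$ is valid.

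Putting the pieces together: by the branch bijection and the thread correspondence, a branch $b$ of $\pi$ is valid iff $\sk b$ is valid, and therefore $\pi$ is valid iff $\sk\pi$ is valid, which proves both directions of the statement. The routine parts are the inspection of the rule translations and the ancestry bookkeeping; the delicate part, as indicated, is the transport of the \emph{minimal} recurring formula across the non-injective map $\sk{-}$, which is where one genuinely uses that recurring formulas of a thread are totally ordered, that the minimum is a fixed-point formula, and that a thread can only ``grow'' through fixed-point unfoldings.
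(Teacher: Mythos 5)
Your proposal is correct and follows essentially the same route as the paper's own (much terser) proof, which rests on exactly your two key facts: that the minimal formula of a set of translated formulas is the translation of the minimal formula, and that branches of $\sk{\pi}$ and of $\pi$ carry each other's threads in both directions. Your additional care about the silent exponential rules, the non-injectivity of $\sk{-}$, and the transport of the infinitely-often-principal condition fills in details the paper leaves implicit, but does not change the argument.
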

    \begin{proof}
    This comes from the fact that (i) minimal formula of a set of translated formulas is the translation of the minimal formula of the set of initial formulas; (ii) translations of branches contains all the translations of formulas of the initial branch and vice-versa.
    \end{proof}

\section{Appendix on the \Cref{section:mullmodinfCutElim}}
\label{app:mullmodinfCutElim}

\subsection{Details on the multicut rule (\Cref{multicutdef})}\label{app:multicutdef}
The multi-cut rule is a rule with an arbitrary number of hypotheses:
$$
\AIC{\A_1\vdash \B_1}
\AIC{\dots}
\AIC{\A_n\vdash \B_n}
\RL{\rmcutpar}
\TIC{\A\vdash\B}
\DP$$
Let $C_1:=\{(1, i, j)\mid i\in\{ 1,\dots, n\}, j\in\{ 1,\dots, \#\A_i\}\}$,$C_2:=\{(2, i, j)\mid i\in\{ 1,\dots, n\}, j\in\{ 1,\dots, \#\B_i\}\}$, $\iota$ is a map from $\{1\}\times\{ 1,\dots, \# \A\} \cup \{2\}\times \{ 1,\dots, \# \B\}$ to $C = C_1\cup C_2$ and $\cutrel$ is a relation on $C$:
\begin{itemize}
\item Elements of $(k, n)$ are sent on $C_k$;
\item The map $\iota$ is injective;
\item If $(k,i,j)\cutrel (k',i',j')$ then $k\neq k'$;
\item The relation $\cutrel$ is defined for $C\setminus \iota$, and is total for this set;
\item The relation $\cutrel$ is symmetric;
\item Each index can be related at most once to another one;
\item If $(1,i,j) \cutrel (2,i',j')$, then the $\A_i[j] = \B_{i'}[j']$;
\item The projection of $\cutrel$ on the second element is acyclic and connected.
\end{itemize}

\subsection{Details on the restriction of a multicut context (\Cref{multicutRestriction})}\label{app:multicutRestriction}
\begin{defi}[Restriction of a multicut context]
Let $
\AIC{\C}
\RL{\rmcutpar}
\UIC{s}
\DP
$ be a multicut-occurrence such that $\C = s_1\quad\dots\quad s_n$ and let $s_i := F_1, \dots, F_{k_i}\vdash G_1, \dots, G_{r_i} $, we define $\C_{F_j}$ (resp. $\C_{G_j}$) with $F_j\in s_i$ (resp. $G_j\in s_i$) to be the least sub-context of $\C$ such that:
\begin{itemize}
\item The sequent $s_i$ is in $\C_{F_j}$ (resp. $\C_{G_j}$);
\item If there exists $l$ such that $(1,i,j)\cutrel(2,k,l)$ or $(2,i,j)\cutrel(1,k,l)$ then $s_k\in\C_{F_j}$ (resp. $s_k\in\C_{G_j}$);
\item For any $k\neq i$, if there exists $l$ such that $(1,k,l)\cutrel(2,k',l')$ or $(2,k,l)\cutrel(1,k',l')$ and that $s_k\in\C_{F_j}$ (resp. $s_k\in\C_{G_j}$) then $s_{k'}\in\C_{F_j}$ ($s_{k'}\in\C_{G_j}$).
\end{itemize}
We then extend the notation to contexts, setting $\C_\emptyset := \emptyset$ and $\C_{F, \A} := \C_F\cup\C_\A$.
\end{defi}

\subsection{Full (\mcut)-reduction steps}
\label{app:mcutSteps}

\begin{defi}[\muMALLinf{} \& \muLLinf{} (\mcut)-reduction steps]
Reduction steps of \muMALLinf{} are given in \Cref{fig:mcutMALLPrincip,fig:mcutMLLComm,fig:mcutALLComm}.

Reduction steps of \muLLmodinf{} are the reduction steps of \muMALLinf{} together with the steps of \Cref{fig:app:mullmodinfexpcommcutstep1,fig:app:mullmodinfexpcommcutstep2,fig:appmullmodinfexpprincipcutstep1,fig:appmullmodinfexpprincipcutstep2}.

Reduction steps of \muLLinf{} are the $\{\Box, \lozenge\}$-free reduction steps of \muLLmodinf{}.
\end{defi}

\begin{figure*}
    \scalebox{0.8}{
    \begin{minipage}{\linewidth}
    \begin{align*}
    \AIC{\A_1\vdash A_1, \B_1}
    \AIC{\A_2\vdash A_2, \B_2}
    \RL{\otimes_r}
    \BIC{\A_1, \A_2\vdash A_1\otimes A_2, \B_1, \B_2}
    \AIC{\A_3, A_1, A_2\vdash \B_3}
    \RL{\otimes_l}
    \UIC{\A_3, A_1\otimes A_2\vdash \B_3}
    \AIC{\C}
    \RL{\rmcutpar}
    \TIC{\A\vdash \B}
    \DP \quad&\rightsquigarrow\\
    &\hspace{-6cm}
    \AIC{\A_1\vdash A_1, \B_1}
    \AIC{\A_2\vdash A_2, \B_2}
    \AIC{\A_3, A_1, A_2\vdash \B_3}
    \AIC{\C}
    \RL{\rmcutparprime}
    \QIC{\A\vdash \B}
    \DP\\[2ex]
    \AIC{\A_1, A_1\vdash \B_1}
    \AIC{\A_2, A_2\vdash \B_2}
    \RL{\parr_l}
    \BIC{\A_1, \A_2, A_1\parr A_2\vdash \B_1, \B_2}
    \AIC{\A_3\vdash A_1, A_2,\B_3}
    \RL{\parr_r}
    \UIC{\A_3\vdash A_1\parr A_2, \B_3}
    \AIC{\C}
    \RL{\rmcutpar}
    \TIC{\A\vdash \B}
    \DP \quad&\rightsquigarrow\\
    &\hspace{-6cm}
    \AIC{\A_1, A_1\vdash \B_1}
    \AIC{\A_2, A_2\vdash \B_2}
    \AIC{\A_3\vdash A_1, A_2, \B_3}
    \AIC{\C}
    \RL{\rmcutparprime}
    \QIC{\A\vdash \B}
    \DP\\[2ex]
    \AIC{\A_1\vdash A_1, \B_1}
    \AIC{\A_2, A_2\vdash \B_2}
    \RL{\multimap_l}
    \BIC{\A_1, \A_2, A_1\multimap A_2\vdash \B_1, \B_2}
    \AIC{\A_3, A_1\vdash A_2,\B_3}
    \RL{\multimap_r}
    \UIC{\A_3\vdash A_1\multimap A_2, \B_3}
    \AIC{\C}
    \RL{\rmcutpar}
    \TIC{\A\vdash \B}
    \DP \quad&\rightsquigarrow\\
    &\hspace{-6cm}
    \AIC{\A_1\vdash A_1, \B_1}
    \AIC{\A_2, A_2\vdash \B_2}
    \AIC{\A_3, A_1\vdash A_2, \B_3}
    \AIC{\C}
    \RL{\rmcutparprime}
    \QIC{\A\vdash \B}
    \DP\\[2ex]
    \AIC{\A_1\vdash\B_1}
    \RL{1_l}
    \UIC{\A_1, 1\vdash \B_1}
    \AIC{}
    \RL{1_r}
    \UIC{\vdash 1}
    \AIC{\C}
    \RL{\rmcutpar}
    \TIC{\A\vdash\B}
    \DP
    \quad&\rightsquigarrow\quad
    \AIC{\A_1\vdash\B_1}
    \AIC{\C}
    \RL{\rmcutparprime}
    \BIC{\A\vdash\B}
    \DP\\[2ex]
    \AIC{\A_1\vdash\B_1}
    \RL{\bot_r}
    \UIC{\A_1\vdash\bot, \B_1}
    \AIC{}
    \RL{\bot_l}
    \UIC{\bot\vdash}
    \AIC{\C}
    \RL{\rmcutpar}
    \TIC{\A\vdash\B}
    \DP
    \quad&\rightsquigarrow\quad
    \AIC{\A_1\vdash\B_1}
    \AIC{\C}
    \RL{\rmcutparprime}
    \BIC{\A\vdash\B}
    \DP\\[2ex]
    \AIC{\A_1\vdash F_i,\B_1}
    \RL{\oplus_r^i}
    \UIC{\A_1\vdash F_1\oplus F_2, \B_1}
    \AIC{\A_2, F_1\vdash \B_2}
    \AIC{\A_2, F_2\vdash \B_2}
    \RL{\oplus_l}
    \BIC{\A_2, F_1\oplus F_2\vdash \B_2}
    \AIC{\C}
    \RL{\rmcutpar}
    \TIC{\A\vdash\B}
    \DP
    \quad &\rightsquigarrow\\[2ex]
    & \hspace{-5cm}\AIC{\A_1\vdash F_i,\B_1}
    \AIC{\A_2, F_i\vdash \B_2}
    \AIC{\C}
    \RL{\rmcutpar}
    \TIC{\A\vdash\B}
    \DP\\[2ex]
    \AIC{\A_1, F_i\vdash\B_1}
    \RL{\with_l^i}
    \UIC{\A_1, F_1\with F_2\vdash\B_1}
    \AIC{\A_2\vdash F_1, \B_2}
    \AIC{\A_2\vdash F_2, \B_2}
    \RL{\with_r}
    \BIC{\A_2\vdash F_1\with F_2, \B_2}
    \AIC{\C}
    \RL{\rmcutpar}
    \TIC{\A\vdash\B}
    \DP
    \quad &\rightsquigarrow\\[2ex]
    & \hspace{-5cm}
    \AIC{\A_1, F_i\vdash \B_1}
    \AIC{\A_2\vdash F_i, \B_2}
    \AIC{\C}
    \RL{\rmcutpar}
    \TIC{\A\vdash\B}
    \DP\\[2ex]
    \AIC{\A_1, A\vdash \B_1}
    \RL{(-)^\perp_r}
    \UIC{\A_1\vdash A^\perp, \B_1}
    \AIC{\A_2\vdash A, \B_2}
    \RL{(-)^\perp_l}
    \UIC{\A_2, A^\perp\vdash\B_2}
    \AIC{\C}
    \RL{\rmcutpar}
    \TIC{\A\vdash \B}
    \DP
    \quad &\rightsquigarrow\\[2ex]
    & \hspace{-5cm}
    \AIC{\A_1, A\vdash\B_1}
    \AIC{\A_2\vdash A, \B_2}
    \AIC{\C}
    \RL\rmcutparprime
    \TIC{\A\vdash\B}
    \DP\\[2ex]
    \AIC{}
    \RL{\ax}
    \UIC{A\vdash A}
    \AIC{\C}
    \RL\rmcutpar
    \BIC{\A\vdash \B}
    \DP\quad&\rightsquigarrow\quad
    \AIC{\C}
    \RL\rmcutparprime
    \UIC{\A\vdash\B}
    \DP\\[2ex]
    \AIC{\A_1\vdash F, \B_1}
    \AIC{\A_2, F\vdash \B_2}
    \RL{\cut}
    \BIC{\A_1, \A_2\vdash \B_1, \B_2}
    \AIC{\C}
    \RL{\rmcutpar}
    \BIC{\A\vdash\B}
    \DP
    \quad &\rightsquigarrow\\[2ex]
    & \hspace{-5cm}
    \AIC{\A_1\vdash F, \B_1}
    \AIC{\A_2, F\vdash \B_2}
    \AIC{\C}
    \RL{\rmcutparprime}
    \TIC{\A\vdash\B}
    \DP\\[2ex]
    \AIC{\A_1, F[X:=\delta X. F]\vdash \B_1}
    \RL{\delta_l}
    \UIC{\A_1, \delta X. F\vdash \B_1}
    \AIC{\A_2\vdash F[X:=\delta X. F], \B_2}
    \RL{\delta_r}
    \UIC{\A_2\vdash \delta X. F, \B_2}
    \AIC{\C}
    \RL{\rmcutpar}
    \TIC{\A\vdash\B}
    \DP\quad&\rightsquigarrow\\
    &\hspace{-6.5cm}
    \AIC{\A_1, F[X:=\delta X. F]\vdash \B_1}
    \AIC{\A_2\vdash F[X:=\delta X. F], \B_2}
    \AIC{\C}
    \RL{\rmcutpar}
    \TIC{\A\vdash\B}
    \DP
\end{align*}
\end{minipage}
    } 
    \caption{Principal (\mcut)-step of $\muMALLinf{}$ --- $\delta\in\{\mu, \nu\}$}\label{fig:mcutMALLPrincip}
\end{figure*}

\begin{figure*}
    \centering
    \scalebox{0.8}{
    \begin{minipage}{\linewidth}
    \begin{align*}
    \AIC{}
    \RL{\ax}
    \UIC{A\vdash A}
    \RL{\rmcutpar}
    \UIC{A\vdash A}
    \DP\quad&\rightsquigarrow\quad
    \AIC{}
    \RL{\ax}
    \UIC{A\vdash A}
    \DP\\
    \AIC{}
    \RL{1_r}
    \UIC{\vdash 1}
    \RL{\rmcutpar}
    \UIC{\vdash 1}
    \DP\quad&\rightsquigarrow\quad
    \AIC{}
    \RL{1_r}
    \UIC{\vdash 1}
    \DP\\
    \AIC{}
    \RL{\bot_l}
    \UIC{\bot\vdash}
    \RL{\rmcutpar}
    \UIC{\bot\vdash}
    \DP\quad&\rightsquigarrow\quad
    \AIC{}
    \RL{\bot_l}
    \UIC{\bot\vdash}
    \DP\\
    \AIC{\A\vdash \B}
    \RL{\bot_r}
    \UIC{\A\vdash \bot, \B}
    \AIC{\C}
    \RL{\rmcutpar}
    \BIC{\A'\vdash\bot, \B'}
    \DP
    \quad&\rightsquigarrow\quad
    \AIC{\A\vdash \B}
    \AIC{\C}
    \RL{\rmcutparprime}
    \BIC{\A'\vdash \B'}
    \RL{\bot_r}
    \UIC{\A'\vdash\bot, \B'}
    \DP\\
    \AIC{\A\vdash \B}
    \RL{1_l}
    \UIC{\A, 1\vdash\B}
    \AIC{\C}
    \RL{\rmcutpar}
    \BIC{\A', 1\vdash\B'}
    \DP
    \quad&\rightsquigarrow\quad
    \AIC{\A\vdash \B}
    \AIC{\C}
    \RL{\rmcutparprime}
    \BIC{\A'\vdash \B'}
    \RL{1_l}
    \UIC{\A', 1\vdash\B'}
    \DP\\
    \AIC{\A\vdash A_1, A_2,\B}
    \RL{\parr_r}
    \UIC{\A\vdash A_1\parr A_2, \B}
    \AIC{\C}
    \RL\rmcutpar
    \BIC{\A'\vdash A_1\parr A_2, \B'}
    \DP\quad&\rightsquigarrow \quad
    \AIC{\A\vdash A_1, A_2, \B}
    \AIC{\C}
    \RL{\rmcutparprime}
    \BIC{\A'\vdash A_1, A_2, \B'}
    \RL{\parr_r}
    \UIC{\A'\vdash A_1\parr A_2, \B'}
    \DP\\
    \AIC{\A, A_1, A_2\vdash \B}
    \RL{\otimes_l}
    \UIC{\A, A_1\otimes A_2\vdash \B}
    \AIC{\C}
    \RL\rmcutpar
    \BIC{\A', A_1\otimes A_2\vdash \B'}
    \DP\quad&\rightsquigarrow \quad
    \AIC{\A, A_1, A_2\vdash\B}
    \AIC{\C}
    \RL{\rmcutparprime}
    \BIC{\A', A_1, A_2\vdash \B'}
    \RL{\otimes_l}
    \UIC{\A',A_1\otimes A_2\vdash \B'}
    \DP\\
    \AIC{\A, A_1\vdash A_2, \B}
    \RL{\multimap_r}
    \UIC{\A\vdash A_1\multimap A_2, \B}
    \AIC{\C}
    \RL\rmcutpar
    \BIC{\A'\vdash A_1\multimap A_2, \B'}
    \DP\quad&\rightsquigarrow \quad
    \AIC{\A, A_1\vdash A_2, \B}
    \AIC{\C}
    \RL{\rmcutparprime}
    \BIC{\A', A_1\vdash A_2, \B'}
    \RL{\multimap_r}
    \UIC{\A'\vdash A_1\multimap A_2, \B'}
    \DP\\
    \AIC{\A', A\vdash\B'}
    \RL{(-)^\perp_r}
    \UIC{\A'\vdash A^\perp, \B'}
    \AIC{\C}
    \RL{\rmcutpar}
    \BIC{\A\vdash A^\perp, \B}
    \DP\quad&\rightsquigarrow\quad
    \AIC{\A', A\vdash\B'}
    \AIC{\C}
    \RL{\rmcutparprime}
    \BIC{\A, A\vdash\B}
    \RL{(-)^\perp_r}
    \UIC{\A\vdash A^\perp, \B}
    \DP\\
    \AIC{\A'\vdash A, \B'}
    \RL{(-)^\perp_l}
    \UIC{\A', A^\perp\vdash\B'}
    \AIC{\C}
    \RL{\rmcutpar}
    \BIC{\A, A^\perp\vdash\B}
    \DP\quad&\rightsquigarrow\quad
    \AIC{\A'\vdash A, \B'}
    \AIC{\C}
    \RL{\rmcutparprime}
    \BIC{\A\vdash A, \B}
    \RL{(-)^\perp_l}
    \UIC{\A, A^\perp\vdash\B}
    \DP\\
    \AIC{\A'_1\vdash A_1, \B'_1}
    \AIC{\A'_2\vdash A_2, \B'_2}
    \RL{\otimes_r}
    \BIC{\tikzmark{otimesrcommprect3}\A'_1, \tikzmark{otimesrcommprect4}\A'_2\vdash A_1\otimes A_2, \B'_1\tikzmark{otimesrcommprect5}, \B'_2\tikzmark{otimesrcommprect6}}
    \AIC{\tikzmark{otimesrcommprect1}\C_{\A'_1, \B'_1}}
    \AIC{\tikzmark{otimesrcommprect2}\C_{\A'_2, \B'_2}}
    \RL{\rmcutpar}
    \TIC{\tikzmark{otimesrcommprecd11}\A_1, \A_2\tikzmark{otimesrcommprecd21}\vdash A_1\otimes A_2, \tikzmark{otimesrcommprecd12}\B_1, \B_2\tikzmark{otimesrcommprecd22}}
    \DP \quad&\rightsquigarrow\\
    &\hspace{-7cm}
    \AIC{\A'_1\vdash A_1, \B'_1}
    \AIC{\C_{\A'_1, \B'_1}}
    \RL{\rmcutparprime}
    \BIC{\A_1\vdash A_1, \B_1}
    \AIC{\A'_2\vdash A_2, \B'_2}
    \AIC{\C_{\A'_2, \B'_2}}
    \RL{\rmcutpardouble}
    \BIC{\A_2\vdash A_2, \B_2}
    \RL{\otimes_r}
    \BIC{\A_1, \A_2\vdash A_1\otimes A_2, \B_1, \B_2}
    \DP\\
    \AIC{\A'_1, A_1\vdash \B'_1}
    \AIC{\A'_2, A_2\vdash \B'_2}
    \RL{\parr_l}
    \BIC{\tikzmark{parrlcommprect3}\A'_1, \tikzmark{parrlcommprect4}\A'_2, A_1\parr A_2\vdash \B'_1\tikzmark{parrlcommprect5}, \B'_2\tikzmark{parrlcommprect6}}
    \AIC{\tikzmark{parrlcommprect1}\C_{\A'_1, \B'_1}}
    \AIC{\tikzmark{parrlcommprect2}\C_{\A'_2, \B'_2}}
    \RL{\rmcutpar}
    \TIC{\tikzmark{parrlcommprecd11}\A_1, \A_2\tikzmark{parrlcommprecd21}, A_1\parr A_2\vdash \tikzmark{parrlcommprecd12}\B_1, \B_2\tikzmark{parrlcommprecd22}}
    \DP \quad&\rightsquigarrow\\
    &\hspace{-7cm}
    \AIC{\A'_1, A_1\vdash\B'_1}
    \AIC{\C_{\A'_1, \B'_1}}
    \RL{\rmcutparprime}
    \BIC{\A_1, A_1\vdash\B_1}
    \AIC{\A'_2, A_2\vdash\B'_2}
    \AIC{\C_{\A'_2, \B'_2}}
    \RL{\rmcutpardouble}
    \BIC{\A_2, A_2\vdash\B_2}
    \RL{\parr_l}
    \BIC{\A_1, \A_2, A_1\parr A_2\vdash\B_1, \B_2}
    \DP\\
    \AIC{\A'_1\vdash A_1, \B'_1}
    \AIC{\A'_2, A_2\vdash \B'_2}
    \RL{\multimap_l}
    \BIC{\tikzmark{multimaplcommprect3}\A'_1, \tikzmark{multimaplcommprect4}\A'_2, A_1\multimap A_2\vdash \B'_1\tikzmark{multimaplcommprect5}, \B'_2\tikzmark{multimaplcommprect6}}
    \AIC{\tikzmark{multimaplcommprect1}\C_{\A'_1, \B'_1}}
    \AIC{\tikzmark{multimaplcommprect2}\C_{\A'_2, \B'_2}}
    \RL{\rmcutpar}
    \TIC{\tikzmark{multimaplcommprecd11}\A_1, \A_2\tikzmark{multimaplcommprecd21}, A_1\multimap A_2\vdash \tikzmark{multimaplcommprecd12}\B_1, \B_2\tikzmark{multimaplcommprecd22}}
    \DP \quad&\rightsquigarrow\\
    &\hspace{-7cm}
    \AIC{\A'_1\vdash A_1, \B'_1}
    \AIC{\C_{\A'_1, \B'_1}}
    \RL{\rmcutparprime}
    \BIC{\A_1 \vdash A_1, \B_1}
    \AIC{\A'_2, A_2\vdash\B'_2}
    \AIC{\C_{\A'_2, \B'_2}}
    \RL{\rmcutpardouble}
    \BIC{\A_2, A_2\vdash\B_2}
    \RL{\multimap_l}
    \BIC{\A_1, \A_2, A_1\multimap A_2\vdash\B_1, \B_2}
    \DP
    \end{align*}
    \begin{tikzpicture}[overlay,remember picture,-,line cap=round,line width=0.1cm]
        \draw[rounded corners, smooth=2,red, opacity=.3] ([xshift=1mm,yshift=1mm] pic cs:otimesrcommprecd12) to ([xshift=2mm,yshift=1mm] pic cs:otimesrcommprect1);
        \draw[rounded corners, smooth=2,red, opacity=.3] ([xshift=1mm,yshift=1mm] pic cs:otimesrcommprecd11) to ([xshift=2mm,yshift=1mm] pic cs:otimesrcommprect1);
        \draw[rounded corners, smooth=2,red, opacity=.3] ([xshift=1mm,yshift=1mm] pic cs:otimesrcommprecd11) to ([yshift=2mm] pic cs:otimesrcommprect3);
        \draw[rounded corners, smooth=2,red, opacity=.3] ([xshift=1mm,yshift=1mm] pic cs:otimesrcommprecd12) to ([xshift=-2mm,yshift=2mm] pic cs:otimesrcommprect5);
        \draw[rounded corners, smooth=2,blue, opacity=.3] ([xshift=-1mm] pic cs:otimesrcommprecd21) to ([xshift=2mm,yshift=1mm] pic cs:otimesrcommprect2);
        \draw[rounded corners, smooth=2,blue, opacity=.3] ([xshift=-1mm] pic cs:otimesrcommprecd22) to ([xshift=2mm,yshift=1mm] pic cs:otimesrcommprect2);
        \draw[rounded corners, smooth=2,blue, opacity=.3] ([xshift=-1mm] pic cs:otimesrcommprecd21) to ([yshift=2mm] pic cs:otimesrcommprect4);
        \draw[rounded corners, smooth=2,blue, opacity=.3] ([xshift=-1mm] pic cs:otimesrcommprecd22) to ([xshift=-2mm,yshift=2mm] pic cs:otimesrcommprect6);
        \draw[rounded corners, smooth=2,red, opacity=.3] ([xshift=1mm,yshift=1mm] pic cs:parrlcommprecd12) to ([xshift=2mm,yshift=1mm] pic cs:parrlcommprect1);
        \draw[rounded corners, smooth=2,red, opacity=.3] ([xshift=1mm,yshift=1mm] pic cs:parrlcommprecd11) to ([xshift=2mm,yshift=1mm] pic cs:parrlcommprect1);
        \draw[rounded corners, smooth=2,red, opacity=.3] ([xshift=1mm,yshift=1mm] pic cs:parrlcommprecd11) to ([yshift=2mm] pic cs:parrlcommprect3);
        \draw[rounded corners, smooth=2,red, opacity=.3] ([xshift=1mm,yshift=1mm] pic cs:parrlcommprecd12) to ([xshift=-2mm,yshift=2mm] pic cs:parrlcommprect5);
        \draw[rounded corners, smooth=2,blue, opacity=.3] ([xshift=-1mm] pic cs:parrlcommprecd21) to ([xshift=2mm,yshift=1mm] pic cs:parrlcommprect2);
        \draw[rounded corners, smooth=2,blue, opacity=.3] ([xshift=-1mm] pic cs:parrlcommprecd22) to ([xshift=2mm,yshift=1mm] pic cs:parrlcommprect2);
        \draw[rounded corners, smooth=2,blue, opacity=.3] ([xshift=-1mm] pic cs:parrlcommprecd21) to ([yshift=2mm] pic cs:parrlcommprect4);
        \draw[rounded corners, smooth=2,blue, opacity=.3] ([xshift=-1mm] pic cs:parrlcommprecd22) to ([xshift=-2mm,yshift=2mm] pic cs:parrlcommprect6);
        \draw[rounded corners, smooth=2,red, opacity=.3] ([xshift=1mm,yshift=1mm] pic cs:multimaplcommprecd12) to ([xshift=2mm,yshift=1mm] pic cs:multimaplcommprect1);
        \draw[rounded corners, smooth=2,red, opacity=.3] ([xshift=1mm,yshift=1mm] pic cs:multimaplcommprecd11) to ([xshift=2mm,yshift=1mm] pic cs:multimaplcommprect1);
        \draw[rounded corners, smooth=2,red, opacity=.3] ([xshift=1mm,yshift=1mm] pic cs:multimaplcommprecd11) to ([yshift=2mm] pic cs:multimaplcommprect3);
        \draw[rounded corners, smooth=2,red, opacity=.3] ([xshift=1mm,yshift=1mm] pic cs:multimaplcommprecd12) to ([xshift=-2mm,yshift=2mm] pic cs:multimaplcommprect5);
        \draw[rounded corners, smooth=2,blue, opacity=.3] ([xshift=-1mm] pic cs:multimaplcommprecd21) to ([xshift=2mm,yshift=1mm] pic cs:multimaplcommprect2);
        \draw[rounded corners, smooth=2,blue, opacity=.3] ([xshift=-1mm] pic cs:multimaplcommprecd22) to ([xshift=2mm,yshift=1mm] pic cs:multimaplcommprect2);
        \draw[rounded corners, smooth=2,blue, opacity=.3] ([xshift=-1mm] pic cs:multimaplcommprecd21) to ([yshift=2mm] pic cs:multimaplcommprect4);
        \draw[rounded corners, smooth=2,blue, opacity=.3] ([xshift=-1mm] pic cs:multimaplcommprecd22) to ([xshift=-2mm,yshift=2mm] pic cs:multimaplcommprect6);
    \end{tikzpicture}
\end{minipage}
}
    \caption{Commutative (\mcut)-step of the multiplicative fragment of $\MALL$}\label{fig:mcutMLLComm}
\end{figure*}

\begin{figure*}
\begin{align*}
    \AIC{}
    \RL{\top_r}
    \UIC{\A'\vdash\top, \B'}
    \AIC{\C}
    \RL\rmcutpar
    \BIC{\A\vdash \top, \B}
    \DP\quad&\rightsquigarrow\quad
    \AIC{}
    \RL{\top_r}
    \UIC{\A\vdash \top, \B}
    \DP\\[2ex]
    \AIC{}
    \RL{0_l}
    \UIC{\A', 0\vdash\B'}
    \AIC{\C}
    \RL\rmcutpar
    \BIC{\A, 0\vdash\B}
    \DP\quad&\rightsquigarrow\quad
    \AIC{}
    \RL{0_l}
    \UIC{\A, 0\vdash \B}
    \DP\\[2ex]
    \AIC{\A\vdash F',\B}
    \RL{r}
    \UIC{\A\vdash F, \B}
    \AIC{\C}
    \RL\rmcutpar
    \BIC{\A'\vdash F, \B'}
    \DP\quad&\rightsquigarrow \quad
    \AIC{\A\vdash F', \B}
    \AIC{\C}
    \RL{\rmcutpar}
    \BIC{\A'\vdash F', \B'}
    \RL{r}
    \UIC{\A'\vdash F, \B'}
    \DP\quad \\[2ex]
    \AIC{\A, F'\vdash \B}
    \RL{r}
    \UIC{\A, F\vdash \B}
    \AIC{\C}
    \RL\rmcutpar
    \BIC{\A',F\vdash \B'}
    \DP\quad&\rightsquigarrow \quad
    \AIC{\A, F'\vdash \B}
    \AIC{\C}
    \RL{\rmcutpar}
    \BIC{\A', F'\vdash \B'}
    \RL{r}
    \UIC{\A', F\vdash \B'}
    \DP\\[2ex]
    \AIC{\A'\vdash A_1, \B'}
    \AIC{\A'\vdash A_2, \B'}
    \RL{\with_r}
    \BIC{\A'\vdash A_1\with A_2, \B'}
    \AIC{\C}
    \RL\rmcutpar
    \BIC{\A\vdash A_1\with A_2, \B}
    \DP
    \quad&\rightsquigarrow\\
    &\hspace{-5cm}
    \AIC{\A'\vdash A_1, \B'}
    \AIC{\C}
    \RL{\rmcutpar}
    \BIC{\A\vdash A_1, \B}
    \AIC{\A'\vdash A_2, \B'}
    \AIC{\C}
    \RL{\rmcutpar}
    \BIC{\A\vdash A_2, \B}
    \RL{\with_r}
    \BIC{\A\vdash A_1\with A_2, \B}
    \DP\\[2ex]
    \AIC{\A',A_1\vdash\B'}
    \AIC{\A',A_2\vdash\B'}
    \RL{\oplus_l}
    \BIC{\A', A_1\oplus A_2\vdash\B'}
    \AIC{\C}
    \RL\rmcutpar
    \BIC{\A, A_1\oplus A_2\vdash\B}
    \DP
    \quad&\rightsquigarrow\\
    &\hspace{-5cm}
    \AIC{\A', A_1\vdash\B'}
    \AIC{\C}
    \RL{\rmcutpar}
    \BIC{\A, A_1\vdash\B}
    \AIC{\A', A_2\vdash \B'}
    \AIC{\C}
    \RL{\rmcutpar}
    \BIC{\A, A_2\vdash \B}
    \RL{\oplus_l}
    \BIC{\A, A_1\oplus A_2\vdash\B}
    \DP
\end{align*}
\caption{Commutative (\mcut)-step of the additive fragment of $\muMALLinf{}$}\label{fig:mcutALLComm}
\end{figure*}

\begin{figure*}[htpb] 
    \begin{align*}
    \AIC{\pi}
    \noLine
    \UIC{\oc\A_1, \Box\A_2\vdash A, \wn\B_1, \lozenge\B_2}
    \RL{\ocpromloz}
    \UIC{\oc\A_1, \Box\A_2\vdash \oc A, \wn\B_1, \lozenge\B_2}
    \AIC{\ocboxProofs}
    \RL{\rmcutpar}
    \BIC{\oc\A', \Box\A''\vdash \oc A, \wn\B, \lozenge\B'}
    \DP\qquad&\rightsquigarrow\qquad
    \AIC{\pi}
    \noLine
    \UIC{\oc\A_1, \Box\A_2\vdash A, \wn\B_1, \lozenge\B_2}
    \AIC{\ocboxProofs}
    \RL{\rmcutpar}
    \BIC{\oc\A', \Box\A''\vdash A, \wn\B, \lozenge\B'}
    \RL{\ocpromloz}
    \UIC{\oc\A', \Box\A''\vdash \oc A, \wn\B, \lozenge\B'}
    \DP
    \\
    \AIC{\pi}
    \noLine
    \UIC{\A\vdash A, \B}
    \RL{\boxprom}
    \UIC{\Box\A\vdash \Box A, \lozenge\B}
    \AIC{\boxProofs}
    \RL{\rmcutpar}
    \BIC{\Box\A'\vdash \Box A, \lozenge\B'}
    \DP\qquad&\rightsquigarrow\qquad\nolinebreak
    \AIC{\pi}
    \noLine
    \UIC{\A\vdash A, \B}
    \AIC{\C}
    \RL{\rmcutpar}
    \BIC{\A'\vdash A, \B'}
    \RL{\boxprom}
    \UIC{\Box\A'\vdash \Box A, \lozenge\B'}
    \DP
    \\
    \AIC{\pi}
    \noLine
    \UIC{\A\vdash \B}
    \RL{\delta_\wk}
    \UIC{\A\vdash \delta A, \B}
    \AIC{\mathcal{C}}
    \RL{\rmcutpar}
    \BIC{\A'\vdash \delta A, \B'}
    \DP\qquad&\rightsquigarrow\qquad
    \AIC{\pi}
    \noLine
    \UIC{\A\vdash \B}
    \AIC{\mathcal{C}}
    \RL{\mathsf{mcut(\iota', \perp\!\!\!\perp')}}
    \BIC{\A'\vdash \B'}
    \RL{\delta_\wk}
    \UIC{\A'\vdash \delta A, \B'}
    \DP
    \\
    \AIC{\pi}
    \noLine
    \UIC{\A\vdash \delta A, \delta A, \B}
    \RL{\delta_\cntr{}}
    \UIC{\A\vdash \delta A, \B}
    \AIC{\mathcal{C}}
    \RL{\rmcutpar}
    \BIC{\A'\vdash \delta A, \B'}
    \DP\qquad&\rightsquigarrow\qquad
    \AIC{\pi}
    \noLine
    \UIC{\A\vdash \delta A, \delta A, \B}
    \AIC{\mathcal{C}}
    \RL{\mathsf{mcut(\iota', \perp\!\!\!\perp')}}
    \BIC{\A'\vdash \delta A, \delta A, \B'}
    \RL{\delta_\cntr{}}
    \UIC{\A'\vdash \delta A, \B'}
    \DP 
    \\
    \AIC{\pi}
    \noLine
    \UIC{\A\vdash  A, \B}
    \RL{\wnde}
    \UIC{\A\vdash \wn A, \B}
    \AIC{\mathcal{C}}
    \RL{\rmcutpar}
    \BIC{\A'\vdash \wn A, \B'}
    \DP\qquad&\rightsquigarrow\qquad
    \AIC{\pi}
    \noLine
    \UIC{\A\vdash A, \B}
    \AIC{\mathcal{C}}
    \RL{\mathsf{mcut(\iota', \perp\!\!\!\perp')}}
    \BIC{\A'\vdash A, \B'}
    \RL{\wnde}
    \UIC{\A'\vdash \wn A, \B'}
    \DP
    \end{align*}
    \centering
    $\delta\in\{\wn, \lozenge\}$
    \caption{First side of commutative cut-elimination steps of \muLLmodinf{}}\label{fig:app:mullmodinfexpcommcutstep1}
    \end{figure*}

    \begin{figure*}[htpb] 
        \begin{align*}
        \AIC{\pi}
        \noLine
        \UIC{\oc\A_1, \Box\A_2, A\vdash \wn\B_1, \lozenge\B_2}
        \RL{\wnpromloz}
        \UIC{\oc\A_1, \Box\A_2, \wn A\vdash \wn\B_1, \lozenge\B_2}
        \AIC{\ocboxProofs}
        \RL{\rmcutpar}
        \BIC{\oc\A', \Box\A'', \wn A\vdash\wn\B, \lozenge\B'}
        \DP\qquad&\rightsquigarrow\qquad
        \AIC{\pi}
        \noLine
        \UIC{\oc\A_1, \Box\A_2, A\vdash\wn\B_1, \lozenge\B_2}
        \AIC{\ocboxProofs}
        \RL{\rmcutpar}
        \BIC{\oc\A', \Box\A'', A\vdash\wn\B, \lozenge\B'}
        \RL{\wnpromloz}
        \UIC{\oc\A', \Box\A'', \wn A\vdash \wn\B, \lozenge\B'}
        \DP
        \\
        \AIC{\pi}
        \noLine
        \UIC{\A, a\vdash\B}
        \RL{\diaprom}
        \UIC{\Box\A, \lozenge A\vdash\lozenge\B}
        \AIC{\boxProofs}
        \RL{\rmcutpar}
        \BIC{\Box\A', \lozenge A\vdash\lozenge\B'}
        \DP\qquad&\rightsquigarrow\qquad\nolinebreak
        \AIC{\pi}
        \noLine
        \UIC{\A, A\vdash\B}
        \AIC{\C}
        \RL{\rmcutpar}
        \BIC{\A', A\vdash\B'}
        \RL{\diaprom}
        \UIC{\Box\A', \lozenge A\vdash\lozenge\B'}
        \DP
        \\
        \AIC{\pi}
        \noLine
        \UIC{\A\vdash \B}
        \RL{\delta_\wk}
        \UIC{\A,\delta A\vdash\B}
        \AIC{\mathcal{C}}
        \RL{\rmcutpar}
        \BIC{\A', \delta A\vdash \B'}
        \DP\qquad&\rightsquigarrow\qquad
        \AIC{\pi}
        \noLine
        \UIC{\A\vdash \B}
        \AIC{\mathcal{C}}
        \RL{\mathsf{mcut(\iota', \perp\!\!\!\perp')}}
        \BIC{\A'\vdash \B'}
        \RL{\delta_\wk}
        \UIC{\A', \delta A\vdash\B'}
        \DP
        \\
        \AIC{\pi}
        \noLine
        \UIC{\A,\delta A,\delta A\vdash\B}
        \RL{\delta_\cntr{}}
        \UIC{\A, \delta A\vdash\B}
        \AIC{\mathcal{C}}
        \RL{\rmcutpar}
        \BIC{\A', \delta A\vdash \B'}
        \DP\qquad&\rightsquigarrow\qquad
        \AIC{\pi}
        \noLine
        \UIC{\A, \delta A, \delta A\vdash\B}
        \AIC{\mathcal{C}}
        \RL{\mathsf{mcut(\iota', \perp\!\!\!\perp')}}
        \BIC{\A', \delta A, \delta A\vdash\B'}
        \RL{\delta_\cntr{}}
        \UIC{\A', \delta A\vdash\B'}
        \DP 
        \\
        \AIC{\pi}
        \noLine
        \UIC{\A\vdash  A, \B}
        \RL{\ocde}
        \UIC{\A, \oc A\vdash\B}
        \AIC{\mathcal{C}}
        \RL{\rmcutpar}
        \BIC{\A', \oc A\vdash\B'}
        \DP\qquad&\rightsquigarrow\qquad
        \AIC{\pi}
        \noLine
        \UIC{\A, A\vdash\B}
        \AIC{\mathcal{C}}
        \RL{\mathsf{mcut(\iota', \perp\!\!\!\perp')}}
        \BIC{\A', A\vdash\B'}
        \RL{\ocde}
        \UIC{\A', \oc A\vdash\B'}
        \DP
        \end{align*}
        \centering
        $\delta\in\{\oc, \Box\}$
        \caption{Second side of commutative cut-elimination steps of \muLLmodinf{}}\label{fig:app:mullmodinfexpcommcutstep2}
        \end{figure*}

\begin{figure*}
            \begin{align*}
            \hspace{-3.2cm} 
            {\small\AIC{\tikzmark{appcontrPrincipwnRed12}\C_{\A,\B}}
            \AIC{\pi}
            \noLine
            \UIC{\A\vdash \delta A, \delta A, \B}
            \RL{\delta_\cntr}
            \UIC{\tikzmark{appcontrPrincipwnRed12pl}\A\vdash \delta A, \B\tikzmark{appcontrPrincipwnRed12p}}
            \AIC{\tikzmark{appcontrPrincipwnRed22}\ocboxProofs_{\delta A}}
            \RL{\rmcutpar}
            \TIC{\tikzmark{appcontrPrincipwnRed21l}\oc\A_1, \tikzmark{appcontrPrincipwnRed31l}\Box\A_2, \tikzmark{appcontrPrincipwnRed11l} \A_3\vdash \tikzmark{appcontrPrincipwnRed11}\B_1, \tikzmark{appcontrPrincipwnRed21}\wn\B_2, \tikzmark{appcontrPrincipwnRed31}\lozenge\B_3}
            \DP}
            \quad\rightsquigarrow\quad &
            {\small\AIC{\tikzmark{appcontrPrincipwnRedpostred12}\C_{\A,\B}}
            \AIC{\pi}
            \noLine
            \UIC{\A\vdash \delta A, \delta A, \B\tikzmark{appcontrPrincipwnRedpostred12p}}
            \AIC{\tikzmark{appcontrPrincipwnRedpostred22}\ocboxProofs_{\delta A}}
            \AIC{\ocboxProofs_{\delta A}\tikzmark{appcontrPrincipwnRedpostred32}}
            \RL{\mathsf{mcut(\iota', \perp\!\!\!\perp')}}
            \QIC{\oc\A_1, \Box\A_2, \oc\A_1, \Box\A_2, \A_3\vdash\tikzmark{appcontrPrincipwnRedpostred11}\B_3, \quad \tikzmark{appcontrPrincipwnRedpostred211}\wn\B_1,\tikzmark{appcontrPrincipwnRedpostred21} \lozenge\B_2, \quad \tikzmark{appcontrPrincipwnRedpostred311}\wn\B_1, \tikzmark{appcontrPrincipwnRedpostred31}\lozenge\B_2}
            \doubleLine
            \RL{\wncontr, \occontr{}}
            \UIC{\oc\A_1, \Box\A_2, \Box\A_2, \A_3\vdash \B_3, \wn\A_1, \lozenge\A_2, \lozenge\A_2}
            \doubleLine
            \RL{\diacontr, \boxcontr}
            \UIC{\oc\A_1, \Box\A_2, \A_3\vdash \B_3, \wn\A_1, \lozenge\A_2}
            \DP} 
            \\
            \begin{tikzpicture}[overlay,remember picture,-,line cap=round,line width=0.1cm]
            \draw[rounded corners, smooth=2,blue, opacity=.4] ([xshift=1mm, yshift=1mm] pic cs:appcontrPrincipwnRed21) to ([xshift=2mm, yshift=1mm] pic cs:appcontrPrincipwnRed22);
            \draw[rounded corners, smooth=2,blue, opacity=.4] ([xshift=1mm, yshift=1mm] pic cs:appcontrPrincipwnRed21l) to ([xshift=2mm, yshift=1mm] pic cs:appcontrPrincipwnRed22);
            \draw[rounded corners, smooth=2,blue, opacity=.4] ([xshift=1mm, yshift=1mm] pic cs:appcontrPrincipwnRed31) to ([xshift=2mm, yshift=1mm] pic cs:appcontrPrincipwnRed22);
            \draw[rounded corners, smooth=2,blue, opacity=.4] ([xshift=1mm, yshift=1mm] pic cs:appcontrPrincipwnRed31l) to ([xshift=2mm, yshift=1mm] pic cs:appcontrPrincipwnRed22);
            \draw[rounded corners, smooth=2,red, opacity=.4] ([xshift=1mm, yshift=1mm] pic cs:appcontrPrincipwnRed11) to ([xshift=2mm, yshift=1mm] pic cs:appcontrPrincipwnRed12);
            \draw[rounded corners, smooth=2,red, opacity=.4] ([xshift=1mm, yshift=1mm] pic cs:appcontrPrincipwnRed11l) to ([xshift=2mm, yshift=1mm] pic cs:appcontrPrincipwnRed12);
            \draw[rounded corners, smooth=2,red, opacity=.4] ([xshift=1mm, yshift=1mm] pic cs:appcontrPrincipwnRed11) to ([xshift=-1mm, yshift=1mm] pic cs:appcontrPrincipwnRed12p);
            \draw[rounded corners, smooth=2,red, opacity=.4] ([xshift=1mm, yshift=1mm] pic cs:appcontrPrincipwnRed11l) to ([xshift=-1mm, yshift=1mm] pic cs:appcontrPrincipwnRed12p);
            \draw[rounded corners, smooth=2,red, opacity=.4] ([xshift=1mm, yshift=1mm] pic cs:appcontrPrincipwnRed11) to ([xshift=0mm, yshift=1mm] pic cs:appcontrPrincipwnRed12pl);
            \draw[rounded corners, smooth=2,red, opacity=.4] ([xshift=1mm, yshift=1mm] pic cs:appcontrPrincipwnRed11l) to ([xshift=0mm, yshift=1mm] pic cs:appcontrPrincipwnRed12pl);
         \end{tikzpicture}
            \hspace{-3.2cm}{\small\AIC{\C_{\A,\B}\tikzmark{appwkPrincipwnRedab}}
            \AIC{\A\vdash \B}
            \RL{\delta_\wk}
            \UIC{\tikzmark{appwkPrincipwnReda}\A\vdash \delta A, \B\tikzmark{appwkPrincipwnRedb}}
            \AIC{\ocboxProofs_{\delta A}\tikzmark{appwkPrincipwnRedocP}}
            \RL{\rmcutpar}
            \TIC{\tikzmark{appwkPrincipwnReda1}\oc\A_1, \tikzmark{appwkPrincipwnReda2}\Box\A_2, \tikzmark{appwkPrincipwnReda3}\A_3\vdash \B_1\tikzmark{appwkPrincipwnRedb1}, \wn\B_2\tikzmark{appwkPrincipwnRedb2}, \lozenge\B_3\tikzmark{appwkPrincipwnRedb3}}
            \DP}
            \quad&\rightsquigarrow\quad
            {\small\AIC{\C_{\A, \B}}
            \AIC{\A\vdash \B}
            \RL{\mathsf{mcut(\iota', \perp\!\!\!\perp')}}
            \BIC{\A_3\vdash\B_3}
            \RL{\wnwk, \ocwk}
            \doubleLine
            \UIC{\oc\A_1, \A_3\vdash \B_1, \wn\B_2}
            \doubleLine
            \RL{\diawk, \boxwk}
            \UIC{\oc\A_1, \Box\A_2, \A_3\vdash \B_1, \wn\B_2, \lozenge\B_3}
            \DP}
            \\
            \begin{tikzpicture}[overlay,remember picture,-,line cap=round,line width=0.1cm]
               \draw[rounded corners, smooth=2,blue, opacity=.4] ([xshift=1mm, yshift=1mm] pic cs:appwkPrincipwnReda1) to ([xshift=-1mm, yshift=1mm] pic cs:appwkPrincipwnRedocP);
               \draw[rounded corners, smooth=2,blue, opacity=.4] ([xshift=1mm, yshift=1mm] pic cs:appwkPrincipwnReda2) to ([xshift=-1mm, yshift=1mm] pic cs:appwkPrincipwnRedocP);
               \draw[rounded corners, smooth=2,blue, opacity=.4] ([xshift=-1mm, yshift=1mm] pic cs:appwkPrincipwnRedb2) to ([xshift=-1mm, yshift=1mm] pic cs:appwkPrincipwnRedocP);
               \draw[rounded corners, smooth=2,blue, opacity=.4] ([xshift=-1mm, yshift=1mm] pic cs:appwkPrincipwnRedb3) to ([xshift=-1mm, yshift=1mm] pic cs:appwkPrincipwnRedocP);
               \draw[rounded corners, smooth=2,red, opacity=.4] ([xshift=1mm, yshift=1mm] pic cs:appwkPrincipwnReda3) to ([xshift=-2mm, yshift=1mm] pic cs:appwkPrincipwnRedab);
               \draw[rounded corners, smooth=2,red, opacity=.4] ([xshift=-1mm, yshift=1mm] pic cs:appwkPrincipwnRedb1) to ([xshift=-2mm, yshift=1mm] pic cs:appwkPrincipwnRedab);
               \draw[rounded corners, smooth=2,red, opacity=.4] ([xshift=1mm, yshift=1mm] pic cs:appwkPrincipwnReda3) to ([xshift=1mm, yshift=1mm] pic cs:appwkPrincipwnReda);
               \draw[rounded corners, smooth=2,red, opacity=.4] ([xshift=-1mm, yshift=1mm] pic cs:appwkPrincipwnRedb1) to ([xshift=1mm, yshift=1mm] pic cs:appwkPrincipwnReda);
               \draw[rounded corners, smooth=2,red, opacity=.4] ([xshift=1mm, yshift=1mm] pic cs:appwkPrincipwnReda3) to ([xshift=-1mm, yshift=1mm] pic cs:appwkPrincipwnRedb);
               \draw[rounded corners, smooth=2,red, opacity=.4] ([xshift=-1mm, yshift=1mm] pic cs:appwkPrincipwnRedb1) to ([xshift=-1mm, yshift=1mm] pic cs:appwkPrincipwnRedb);
            \end{tikzpicture}
            \hspace{-3.2cm}{\small\AIC{\A_1\vdash A, \B_1}
            \RL{\wnde}
            \UIC{\A_1\vdash \wn A, \B_1}
            \AIC{\oc\A_2, \Box\A_3, A\vdash \wn\B_2, \lozenge\B_3}
            \RL{\wnpromloz}
            \UIC{\oc\A_2, \Box\A_3, \wn A\vdash \wn\B_2, \lozenge\B_3}
            \AIC{\C}
            \RL{\rmcutpar}
            \TIC{\A\vdash \B}
            \DP}
            \quad&\rightsquigarrow\quad
            {\small\AIC{\A_1\vdash A, \B_1}
            \AIC{\oc\A_2, \Box\A_3, A\vdash \wn\B_2, \lozenge\B_3}
            \AIC{\C}
            \RL{\mathsf{mcut(\iota', \perp\!\!\!\perp')}}
            \TIC{\A\vdash\B}
            \DP}
            \end{align*}
            \centering
            in all these proofs, $\delta\in\{\wn, \lozenge\}$
            \caption{First side of the principal cut-elimination steps of \muLLmodinf}\label{fig:appmullmodinfexpprincipcutstep1}
\end{figure*}

\begin{figure*}
    \begin{align*}
    \hspace{-3.2cm} 
    {\small\AIC{\tikzmark{appcontrPrincipocRed12}\C_{\A,\B}}
    \AIC{\pi}
    \noLine
    \UIC{\A, \delta A, \delta A\vdash \B}
    \RL{\delta_\cntr}
    \UIC{\tikzmark{appcontrPrincipocRed12pl}\A, \delta A\vdash \B\tikzmark{appcontrPrincipocRed12p}}
    \AIC{\tikzmark{appcontrPrincipocRed22}\ocboxProofs_{\delta A}}
    \RL{\rmcutpar}
    \TIC{\tikzmark{appcontrPrincipocRed21l}\oc\A_1, \tikzmark{appcontrPrincipocRed31l}\Box\A_2, \tikzmark{appcontrPrincipocRed11l} \A_3\vdash \tikzmark{appcontrPrincipocRed11}\B_1, \tikzmark{appcontrPrincipocRed21}\wn\B_2, \tikzmark{appcontrPrincipocRed31}\lozenge\B_3}
    \DP}
    \quad\rightsquigarrow\quad &
    {\small\AIC{\tikzmark{appcontrPrincipocRedpostred12}\C_{\A,\B}}
    \AIC{\pi}
    \noLine
    \UIC{\A, \delta A, \delta A\vdash\B\tikzmark{appcontrPrincipocRedpostred12p}}
    \AIC{\tikzmark{appcontrPrincipocRedpostred22}\ocboxProofs_{\delta A}}
    \AIC{\ocboxProofs_{\delta A}\tikzmark{appcontrPrincipocRedpostred32}}
    \RL{\mathsf{mcut(\iota', \perp\!\!\!\perp')}}
    \QIC{\oc\A_1, \Box\A_2, \oc\A_1, \Box\A_2, \A_3\vdash\tikzmark{appcontrPrincipocRedpostred11}\B_3, \quad \tikzmark{appcontrPrincipocRedpostred211}\wn\B_1,\tikzmark{appcontrPrincipocRedpostred21} \lozenge\B_2, \quad \tikzmark{appcontrPrincipocRedpostred311}\wn\B_1, \tikzmark{appcontrPrincipocRedpostred31}\lozenge\B_2}
    \doubleLine
    \RL{\wncontr, \occontr{}}
    \UIC{\oc\A_1, \Box\A_2, \Box\A_2, \A_3\vdash \B_3, \wn\A_1, \lozenge\A_2, \lozenge\A_2}
    \doubleLine
    \RL{\diacontr, \boxcontr}
    \UIC{\oc\A_1, \Box\A_2, \A_3\vdash \B_3, \wn\A_1, \lozenge\A_2}
    \DP} 
    \\
    \begin{tikzpicture}[overlay,remember picture,-,line cap=round,line width=0.1cm]
    \draw[rounded corners, smooth=2,blue, opacity=.4] ([xshift=1mm, yshift=1mm] pic cs:appcontrPrincipocRed21) to ([xshift=2mm, yshift=1mm] pic cs:appcontrPrincipocRed22);
    \draw[rounded corners, smooth=2,blue, opacity=.4] ([xshift=1mm, yshift=1mm] pic cs:appcontrPrincipocRed21l) to ([xshift=2mm, yshift=1mm] pic cs:appcontrPrincipocRed22);
    \draw[rounded corners, smooth=2,blue, opacity=.4] ([xshift=1mm, yshift=1mm] pic cs:appcontrPrincipocRed31) to ([xshift=2mm, yshift=1mm] pic cs:appcontrPrincipocRed22);
    \draw[rounded corners, smooth=2,blue, opacity=.4] ([xshift=1mm, yshift=1mm] pic cs:appcontrPrincipocRed31l) to ([xshift=2mm, yshift=1mm] pic cs:appcontrPrincipocRed22);
    \draw[rounded corners, smooth=2,red, opacity=.4] ([xshift=1mm, yshift=1mm] pic cs:appcontrPrincipocRed11) to ([xshift=2mm, yshift=1mm] pic cs:appcontrPrincipocRed12);
    \draw[rounded corners, smooth=2,red, opacity=.4] ([xshift=1mm, yshift=1mm] pic cs:appcontrPrincipocRed11l) to ([xshift=2mm, yshift=1mm] pic cs:appcontrPrincipocRed12);
    \draw[rounded corners, smooth=2,red, opacity=.4] ([xshift=1mm, yshift=1mm] pic cs:appcontrPrincipocRed11) to ([xshift=-1mm, yshift=1mm] pic cs:appcontrPrincipocRed12p);
    \draw[rounded corners, smooth=2,red, opacity=.4] ([xshift=1mm, yshift=1mm] pic cs:appcontrPrincipocRed11l) to ([xshift=-1mm, yshift=1mm] pic cs:appcontrPrincipocRed12p);
    \draw[rounded corners, smooth=2,red, opacity=.4] ([xshift=1mm, yshift=1mm] pic cs:appcontrPrincipocRed11) to ([xshift=0mm, yshift=1mm] pic cs:appcontrPrincipocRed12pl);
    \draw[rounded corners, smooth=2,red, opacity=.4] ([xshift=1mm, yshift=1mm] pic cs:appcontrPrincipocRed11l) to ([xshift=0mm, yshift=1mm] pic cs:appcontrPrincipocRed12pl);
 \end{tikzpicture}
    \hspace{-3.2cm}{\small\AIC{\C_{\A,\B}\tikzmark{appwkPrincipocRedab}}
    \AIC{\A\vdash \B}
    \RL{\delta_\wk}
    \UIC{\tikzmark{appwkPrincipocReda}\A, \delta A\vdash\B\tikzmark{appwkPrincipocRedb}}
    \AIC{\ocboxProofs_{\delta A}\tikzmark{appwkPrincipocRedocP}}
    \RL{\rmcutpar}
    \TIC{\tikzmark{appwkPrincipocReda1}\oc\A_1, \tikzmark{appwkPrincipocReda2}\Box\A_2, \tikzmark{appwkPrincipocReda3}\A_3\vdash \B_1\tikzmark{appwkPrincipocRedb1}, \wn\B_2\tikzmark{appwkPrincipocRedb2}, \lozenge\B_3\tikzmark{appwkPrincipocRedb3}}
    \DP}
    \quad&\rightsquigarrow\quad
    {\small\AIC{\C_{\A, \B}}
    \AIC{\A\vdash \B}
    \RL{\mathsf{mcut(\iota', \perp\!\!\!\perp')}}
    \BIC{\A_3\vdash\B_3}
    \RL{\wnwk, \ocwk}
    \doubleLine
    \UIC{\oc\A_1, \A_3\vdash \B_1, \wn\B_2}
    \doubleLine
    \RL{\diawk, \boxwk}
    \UIC{\oc\A_1, \Box\A_2, \A_3\vdash \B_1, \wn\B_2, \lozenge\B_3}
    \DP}
    \\
    \begin{tikzpicture}[overlay,remember picture,-,line cap=round,line width=0.1cm]
       \draw[rounded corners, smooth=2,blue, opacity=.4] ([xshift=1mm, yshift=1mm] pic cs:appwkPrincipocReda1) to ([xshift=-1mm, yshift=1mm] pic cs:appwkPrincipocRedocP);
       \draw[rounded corners, smooth=2,blue, opacity=.4] ([xshift=1mm, yshift=1mm] pic cs:appwkPrincipocReda2) to ([xshift=-1mm, yshift=1mm] pic cs:appwkPrincipocRedocP);
       \draw[rounded corners, smooth=2,blue, opacity=.4] ([xshift=-1mm, yshift=1mm] pic cs:appwkPrincipocRedb2) to ([xshift=-1mm, yshift=1mm] pic cs:appwkPrincipocRedocP);
       \draw[rounded corners, smooth=2,blue, opacity=.4] ([xshift=-1mm, yshift=1mm] pic cs:appwkPrincipocRedb3) to ([xshift=-1mm, yshift=1mm] pic cs:appwkPrincipocRedocP);
       \draw[rounded corners, smooth=2,red, opacity=.4] ([xshift=1mm, yshift=1mm] pic cs:appwkPrincipocReda3) to ([xshift=-2mm, yshift=1mm] pic cs:appwkPrincipocRedab);
       \draw[rounded corners, smooth=2,red, opacity=.4] ([xshift=-1mm, yshift=1mm] pic cs:appwkPrincipocRedb1) to ([xshift=-2mm, yshift=1mm] pic cs:appwkPrincipocRedab);
       \draw[rounded corners, smooth=2,red, opacity=.4] ([xshift=1mm, yshift=1mm] pic cs:appwkPrincipocReda3) to ([xshift=1mm, yshift=1mm] pic cs:appwkPrincipocReda);
       \draw[rounded corners, smooth=2,red, opacity=.4] ([xshift=-1mm, yshift=1mm] pic cs:appwkPrincipocRedb1) to ([xshift=1mm, yshift=1mm] pic cs:appwkPrincipocReda);
       \draw[rounded corners, smooth=2,red, opacity=.4] ([xshift=1mm, yshift=1mm] pic cs:appwkPrincipocReda3) to ([xshift=-1mm, yshift=1mm] pic cs:appwkPrincipocRedb);
       \draw[rounded corners, smooth=2,red, opacity=.4] ([xshift=-1mm, yshift=1mm] pic cs:appwkPrincipocRedb1) to ([xshift=-1mm, yshift=1mm] pic cs:appwkPrincipocRedb);
    \end{tikzpicture}
    \hspace{-3.2cm}{\small\AIC{\A_1, A\vdash\B_1}
    \RL{\ocde}
    \UIC{\A_1, \oc A\vdash \B_1}
    \AIC{\oc\A_2, \Box\A_3\vdash A, \wn\B_2, \lozenge\B_3}
    \RL{\ocpromloz}
    \UIC{\oc\A_2, \Box\A_3\vdash \oc A, \wn\B_2, \lozenge\B_3}
    \AIC{\C}
    \RL{\rmcutpar}
    \TIC{\A\vdash \B}
    \DP}
    \quad&\rightsquigarrow\quad
    {\small\AIC{\A_1\vdash A, \B_1}
    \AIC{\oc\A_2, \Box\A_3, A\vdash \wn\B_2, \lozenge\B_3}
    \AIC{\C}
    \RL{\mathsf{mcut(\iota', \perp\!\!\!\perp')}}
    \TIC{\A\vdash\B}
    \DP}
    \end{align*}
    \centering
    in all these proofs, $\delta\in\{\oc, \Box\}$
    \caption{Second side of the principal cut-elimination steps of \muLLmodinf}\label{fig:appmullmodinfexpprincipcutstep2}
\end{figure*}

\subsection{Details on $(-)^\circ$-translation}
\label{app:circTranslationDef}
\begin{defi}[Translation of \muLLmodinf{} into \muLLinf{}]
    Translation of formula is defined inductively on the formula:
    \begin{itemize}
        \item Translations of $\lozenge$ and $\Box$-formulas:
        \hfill $(\lozenge A)^{\circ} := \wn A^{\circ}\quad \text{and} \quad(\Box A)^{\circ} := \oc A^{\circ}. $
        \item Translations of atomic and unit formulas and variables $f$:
        \hfill $ f^\circ := f. $
        \item Translations of other non-fixed-point connectives: \hfill
        $ c(A_1, \dots, A_n)^\circ := c(A_1^\circ, \dots, A_n^\circ).$
        \item Translations of fixed-point connectives are given by:
        \hfill $ (\delta X. F)^\circ := \delta X. F^\circ$
        (with $\delta\in\{\mu,\nu\}$).
    \end{itemize}
    
    Translation of structural rules for modalities, $(\diacontr)$, $(\diawk)$, $(\boxcontr)$ and $(\boxwk)$ are respectively $(\wncontr)$, $(\wnwk)$, $(\occontr{})$ and $(\ocwk)$:
    $$
    \AIC{\A\vdash \B}
    \RL{\diawk}
    \UIC{\A\vdash \lozenge A, \B}
    \DP\quad\rightsquigarrow^\circ\quad
    \AIC{\A^\circ\vdash \B^\circ}
    \RL{\wnwk}
    \UIC{\A^\circ\vdash \wn A^\circ, \B^\circ}
    \DP
    $$
    $$
    \AIC{\A\vdash \lozenge A, \lozenge A, \B}
    \RL{\diacontr}
    \UIC{\A\vdash \lozenge A, \B}
    \DP\quad\rightsquigarrow^\circ\quad
    \AIC{\A^\circ\vdash \wn A^\circ, \wn A^\circ, \B^\circ}
    \RL{\wncontr}
    \UIC{\A^\circ\vdash \wn A^\circ, \B^\circ}
    \DP
    $$
    $$
    \AIC{\A\vdash \B}
    \RL{\boxwk}
    \UIC{\A, \Box A\vdash \B}
    \DP\quad\rightsquigarrow^\circ\quad
    \AIC{\A^\circ\vdash \B^\circ}
    \RL{\ocwk}
    \UIC{\A^\circ, \oc A^\circ\vdash \B^\circ}
    \DP
    $$
    $$
    \AIC{\A, \Box A, \Box A\vdash \B}
    \RL{\boxcontr}
    \UIC{\A, \Box A\vdash\B}
    \DP\quad\rightsquigarrow^\circ\quad
    \AIC{\A^\circ,\oc A^\circ, \oc A^\circ\vdash \B^\circ}
    \RL{\occontr{}}
    \UIC{\A^\circ,\oc A^\circ\vdash \B^\circ}
    \DP
    $$
    Translation of the modal rules and promotion rules are given by:
    $$
    \hspace{-2cm}
    \AIC{\A\vdash A, \B}
    \RL{\boxprom}
    \UIC{\Box\A\vdash \Box A, \lozenge\B}
    \DP\quad\rightsquigarrow\quad
    \AIC{\A^\circ\vdash A^{\circ}, \B^\circ}
    \doubleLine
    \RL{\ocde, \wnde}
    \UIC{\oc \A^\circ\vdash A^{\circ}, \wn\B^{\circ}}
    \RL{\ocprom}
    \UIC{\oc \A^\circ\vdash \oc A^{\circ}, \wn\B^{\circ}}
    \DP
    $$
    $$
    \AIC{\oc\A_1, \Box\A_2 \vdash A, \wn\B_1, \lozenge\B_2}
    \RL{\ocpromloz}
    \UIC{\oc\A_1, \Box\A_2\vdash \oc A, \wn\B_1, \lozenge\B_2}
    \DP\quad\rightsquigarrow\quad
    \AIC{\oc\A_1^\circ, \oc\A_2^\circ\vdash A^{\circ}, \wn \B_1^{\circ}, \wn\B_2^\circ}
    \RL{\ocprom}
    \UIC{\oc\A_1^\circ, \oc\A_2^\circ\vdash \oc A^{\circ}, \wn \B_1^{\circ}, \wn\B_2^\circ}
    \DP
    $$
    $$
    \hspace{-2cm}
    \AIC{\A, A\vdash\B}
    \RL{\diaprom}
    \UIC{\Box\A, \lozenge A\vdash\lozenge\B}
    \DP\quad\rightsquigarrow\quad
    \AIC{\A^\circ, A^{\circ}\vdash\B^\circ}
    \doubleLine
    \RL{\ocde, \wnde}
    \UIC{\oc \A^\circ, A^{\circ}\vdash\wn\B^{\circ}}
    \RL{\wnprom}
    \UIC{\oc \A^\circ, \wn A^{\circ}\vdash\wn\B^{\circ}}
    \DP
    $$
    $$
    \AIC{\oc\A_1, \Box\A_2, A \vdash\wn\B_1, \lozenge\B_2}
    \RL{\wnpromloz}
    \UIC{\oc\A_1, \Box\A_2,\wn A\vdash \wn\B_1, \lozenge\B_2}
    \DP\quad\rightsquigarrow\quad
    \AIC{\oc\A_1^\circ, \oc\A_2^\circ, A^{\circ}\vdash\wn \B_1^{\circ}, \wn\B_2^\circ}
    \RL{\wnprom}
    \UIC{\oc\A_1^\circ, \oc\A_2^\circ, \wn A^{\circ}\vdash\wn \B_1^{\circ}, \wn\B_2^\circ}
    \DP
    $$
    Translation of other inference rules $(r)$ are $(r)$ themselves.
    
    Translation of pre-proofs are defined co-inductively using translations of rules.
\end{defi}

\subsection{Proof of \Cref{mumodredSeqTranslationFiniteness}}
\label{app:mumodredSeqTranslationFiniteness}

\begin{lem}
    Consider a \muLLmodinf{} reduction step $\pi_0\rightsquigarrow\pi_1$, there exist a finite number of \muLLinf{} proofs $\theta_0, \dots, \theta_n$ such that:
    $$\pi_0^\circ = \theta_0 \redseq \theta_1\redseq\dots\redseq\theta_{n-1} \redseq \theta_n = \pi_1^{\circ}.$$
    \end{lem}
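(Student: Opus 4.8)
The plan is to argue by cases on the shape of the \muLLmodinf{} reduction step $\pi_0 \rightsquigarrow \pi_1$, following the partition of the (\mcut)-steps into the \muMALLinf{} steps, the native \muLLinf{} exponential steps, and the new exponential/modal steps of \Cref{fig:mullmodinfexpcommcutstep,fig:mullmodinfexpprincipcutstep}. For every step \emph{except} the commutation of a $(\boxprom)$ or a $(\diaprom)$ under an (\mcut), I would check that $n=1$ already suffices, i.e. one \muLLinf{} reduction yields $\pi_1^\circ$. This is because $(-)^\circ$ is homomorphic on the \muMALLinf{} connectives and rules, maps every \muLLinf{} exponential rule to itself, maps $(\diacontr), (\diawk), (\boxcontr), (\boxwk)$ to $(\wncontr), (\wnwk), (\occontr{}), (\ocwk)$, and maps $(\ocpromloz), (\wnpromloz)$ to $(\ocprom), (\wnprom)$; hence a redex of any of these kinds in $\pi_0$ is sent to a \muLLinf{} redex of the corresponding kind in $\pi_0^\circ$ whose contractum is exactly $\pi_1^\circ$. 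The only genuine verification here is that the side-conditions are preserved: that the translated context of an $(\ocpromloz)$-commutation is still an admissible $\oc$-context for the \muLLinf{} $(\ocprom)$-commutation (true because $\Box \mapsto \oc$ and $\lozenge \mapsto \wn$), and that the modal proofs occurring in the (\mcut)-context of a contraction- or weakening-principal step translate to proofs headed by a promotion (possibly sitting under a block of derelictions), which the \muLLinf{} principal step may duplicate or erase just as it would a bare promotion.

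The only genuinely non-trivial case is the commutation of $(\boxprom)$ under an (\mcut) --- the case of $(\diaprom)$ being entirely parallel, with $(\wnprom)$ in place of the head $(\ocprom)$ and the principal $\lozenge$-formula landing in the antecedent, and the two ``left'' variants following by the $(-)^\perp$ duality. Here the translation of $(\boxprom)$, applied to a sequent $\A \vdash A, \B$, is a block $D$ of $\#(\A)+\#(\B)$ derelictions $(\ocde,\wnde)$ topped by a single $(\ocprom)$; moreover each proof of the (\mcut)-context $\boxProofs$, being headed by a $(\boxprom)$ or a $(\diaprom)$, translates to a \muLLinf{} proof headed by a (dereliction block)-then-promotion. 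Thus $\pi_0^\circ$ is an (\mcut) all of whose premises are promotion-headed, whose cut-formulas are translations of $\Box$- and $\lozenge$-formulas, and such that on one side of each cut the cut-formula was introduced by a dereliction of some block while on the other side it is the conclusion of a head promotion; note also that the head $(\ocprom)$ on the $\pi^\circ$-premise produces the translation of $\Box A$, which is a conclusion formula, not a cut-formula. The \muLLinf{} reduction sequence I would spell out is then: (i) apply the \muLLinf{} $(\ocprom)$-commutative step to push the head promotion of the $\pi^\circ$-premise below the (\mcut), which is legitimate precisely because all the other premises are promotion-headed; (ii) perform one dereliction/promotion key-step for each cut-formula --- since $\cutrel$ is a matching between head promotions and block derelictions, these steps are pairwise independent and each erases a dereliction together with the matching head promotion, leaving a cut on the de-modalised formula; (iii) commute the surviving derelictions, which are now exactly those acting on the formulas of the conclusion $\A' \vdash A, \B'$ of $\mathsf{mcut}(\pi,\C)$, one by one below the (\mcut). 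Reading off the result, the surviving derelictions reconstitute precisely the block $D'$ of $\#(\A')+\#(\B')$ derelictions that the translation of $(\boxprom)$ prescribes on $\mathsf{mcut}(\pi,\C)$, sitting under the $(\ocprom)$ of step (i); that is, the result is $\trans{(\boxprom)}$ applied to $\mathsf{mcut}(\pi^\circ,\C^\circ) = \pi_1^\circ$. Since (i)--(iii) consist of $1 + O(\#(\A)+\#(\B)+\sum_j \#(\Lambda_j)+\sum_j \#(\Sigma_j))$ reductions, the sequence $\theta_0,\dots,\theta_n$ is finite.

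The point I expect to be the main obstacle is the bookkeeping in this modal-commutation case: one must check that each intermediate proof in (i)--(iii) is a legitimate \muLLinf{} pre-proof (in particular that the $(\ocprom)$ of step (i) and the interleaved derelictions of step (iii) always meet their side-conditions), and --- more delicately --- that the $\iota$- and $\cutrel$-data of the (\mcut) obtained at the end agree, up to the implicit exchange rules, with the ones dictated by $\pi_1^\circ$. The secondary, more tedious part is tracking which derelictions of the various blocks survive steps (i)--(ii) and verifying that together they are exactly the block $D'$; this is where one uses that $\C$ arises from $\boxProofs$ by removing exactly one modal head rule per proof, so that the passive parts of $\A'$ and $\B'$ contributed by $\pi$ and by $\C$ partition the surviving derelictions correctly.
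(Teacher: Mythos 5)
Your proposal follows essentially the same route as the paper: identify the commutation of a modal rule ($\boxprom$/$\diaprom$) under the $(\mcut)$ as the only step whose translation is not a single \muLLinf{} reduction, and simulate it by first commuting the head promotion of the translated principal premise, then consuming the matched dereliction/promotion pairs on cut-formulas by key cases, and finally commuting the surviving derelictions on conclusion formulas. The paper packages this hard case differently: it proves a general auxiliary lemma, by induction on the total number of pending derelictions and promotions in the premises, which at each stage reduces whatever rule is currently at the bottom of some premise (commuting it if its principal formula is a conclusion formula, key-casing it against the matching head promotion otherwise) and rules out the stuck configuration by the acyclicity of $\cutrel$. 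That inductive formulation silently absorbs the one point your rigid ordering (i)--(ii)--(iii) glosses over: since the derelictions of a block can only be removed from the bottom up, key cases on cut-formulas and commutations of conclusion-formula derelictions must in general be \emph{interleaved} rather than performed as two separate phases; this is a matter of scheduling, not of correctness, and your own closing paragraph points at exactly this bookkeeping. The remaining claims you make --- that the only promotion whose principal formula lies in the conclusion is the one coming from the translated $\boxprom$/$\diaprom$ premise (so that exactly one promotion commutes below the cut, and it does so before any dereliction), and that the surviving derelictions reassemble into the block prescribed by $\pi_1^\circ$ --- coincide with the paper's concluding observations.
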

    \begin{proof}
        Reductions from the non-exponential part of $\muLLmodinf$ translates easily to one step of reduction in \muLLinf.
        The same is true for the exponential part except for the commutation of the modal rule. The translation of the left proof of it is of the form (we only do the case of $\diaprom$, $\boxprom$ is similar):
            $$
            \hspace{-2.5cm}
    \scalebox{.9}        {\small\AIC{\pi_1^\circ}
            \noLine
            \UIC{\oc\A_1^\circ  \vdash A_1^\circ, \B_1^\circ}
            \RL{\ocde, \wnde}
            \doubleLine
            \UIC{\oc\A_1^\circ\vdash A_1^\circ, \wn\B_1^\circ}
            \RL{\ocprom}
            \UIC{\oc\A_1^\circ\vdash \oc A_1^\circ, \wn\B_1^\circ}
            \AIC{\dots}
            \noLine
            \UIC{}
            \noLine
            \UIC{}
            \noLine
            \UIC{}
            \AIC{\pi_n^\circ}
            \noLine
            \UIC{\A_n^\circ\vdash A_n^\circ, \B_n^\circ}
            \RL{\ocde, \wnde}
            \doubleLine
            \UIC{\oc\A_n^\circ\vdash A_n^\circ, \wn\B_n^\circ}
            \RL{\ocprom}
            \UIC{\oc\A_n^\circ\vdash \oc A_n^\circ, \wn\B_n^\circ}
            \AIC{\pi_{n+1}^\circ}
            \noLine
            \UIC{\oc\A_{n+1}^\circ, A_{n+1}^\circ\vdash\B_{n+1}^\circ}
            \RL{\ocde, \wnde}
            \doubleLine
            \UIC{\oc\A_{n+1}^\circ, A_{n+1}^\circ\vdash\wn\B_{n+1}^\circ}
            \RL{\wnprom}
            \UIC{\oc\A_{n+1}^\circ, \wn A_{n+1}^\circ\vdash \wn\B_{n+1}^\circ}
            \AIC{\dots}
            \noLine
            \UIC{\hspace{-3cm}}
            \AIC{\pi_{n+m}^\circ}
            \noLine
            \UIC{\oc\A_{n+m}^\circ, A_{n+m}^\circ\vdash\B_{n+m}^\circ}
            \RL{\ocde, \wnde}
            \doubleLine
            \UIC{\oc\A_{n+m}^\circ, A_{n+m}^\circ\vdash\wn\B_{n+m}^\circ}
            \RL{\wnprom}
            \BIC{\oc\A_{n+m}^\circ, \wn A_{n+m}^\circ\vdash \wn\B_{n+m}^\circ}
            \RL{\rmcutpar}
            \QuinaryInfC{$\vdash \oc A^\circ, \wn\A^\circ$}
            \DP}
            $$
            We use a more general lemma:
            \begin{lem}
                Let $n\in\mathbb{N}$, let $d_1, \dots, d_n\in\mathbb{N}$ and let $p_1, \dots, p_n\in\{0, 1\}$. Let $\pi$ be a \muLLinf{}-proof concluded by an (\mcut)-rule, on top of which there is a list of $n$ proofs $\pi_1, \dots, \pi_n$. We ask for each $\pi_i$ to be of one of the following forms depending on $p_i$:
                \begin{itemize}
                \item If $p_i=1$, the $d_i+1$ last rules of $\pi_i$ are $d_i$ derelictions ($(\wnde)$ or $(\ocde)$) and then a promotion rule ($(\ocprom)$ or $(\wnprom)$). We ask for the principal formula of this promotion to be either a formula of the conclusion, or to be cut with a formula being principal in a proof $\pi_j$ on one of the last $d_j+p_j$ rules.
                
                \item If $p_i=0$, the $d_i$ last rules of $\pi_i$  are $d_i$ derelictions.
                \end{itemize}
                In each of these two cases, we ask for $\pi_i$ that each principal formulas of the $d_i$ derelictions to be either a formula of the conclusion of the multicut, either a cut-formula being cut with a formula appearing in $\pi_j$ such that $p_j=1$.
                We prove that $\pi$ reduces through a finite number of \mcut{}-reductions to a proof where each of the last $d_i+p_i$ rules either were eliminated by a $(\ocprom/\ocde)$-principal case, a $(\wnprom/\wnde)$-principal case or were commuted below the cut.
                \end{lem}
                \begin{proof}
                    We prove the property by induction on the sum of all the $d_i$ and of all the $p_i$:
                    \begin{itemize}
                    \item (Initialization). As the sum of the $d_i$ and $p_i$  is $0$, all $d_i$ and $p_i$ are equal to $0$, meaning that our statement is vacuously true.
                    
                    \item (Heredity). We have several cases:
                    \begin{itemize}
                    \item If the last rule of a proof $\pi_i$ is a promotion or a dereliction for which the principal formula is in the conclusion of the (\mcut), we do a commutation step on this rule obtaining $\pi'$. We apply our induction hypothesis on the proof ending with the (\mcut); and with parameters $d'_1, \dots, d'_n$ as well as $p'_1, \dots, p'_n$ and proofs $\pi'_1, \dots, \pi'_n$. To describe these parameters we have two cases:
                    \begin{itemize}
                    \item If the rule is a promotion. We take for each $j\in\llbracket 1, n\rrbracket$, $d'_j=d_j$; $p'_j = p_j$ if $j\neq i$, $p'_i=0$.
                    \item If the rule is a dereliction. We take for each $j\in\llbracket 1, n\rrbracket$, $d'_j=d_j$ if $j\neq i$, $d'_i=d_i-1$; $p'_j=p_j$.
                    \end{itemize}
                    In each cases, we take ; $\pi'_j=\pi_j$ if $j\neq i$ and $\pi'_i$ being the premise of the rule with conclusion $\pi_i$.
                    Note that $\sum d'_j +\sum p'_j =\sum d_j+ \sum p_j -1 < \sum d_j+ \sum p_j$ meaning that we can apply our induction hypothesis.
                    Combining our reduction step with the reduction steps of the induction hypothesis, we obtain the desired result.
                    
                    \item If there are no rules from the conclusion but that for an $i$ we have $d_i>0$ and $p_i=0$, meaning that the proof ends by a dereliction on a formula $F$. This means that there is proof $\pi_j$ such that $p_j=1$ and such that $F$ is cut with one of the formula of $\pi_j$. As $p_j=1$, $F$ is the principal formula of the last rule applied on $\pi_j$. We therefore can perform an promotion/dereliction principal case on the last rules from $\pi_i$ and $\pi_j$, leaving us with a proof $\pi'$ with an (\mcut) as conclusion. We apply the induction hypothesis on this proof with parameters $d'_1=d_1, \dots d'_i=d'_i-1 \dots, d'_n=d'_n$, $p'_1=p_1, \dots, p'_j=p'_j-1, \dots, p'_n=p_n$ and with the proofs being the hypotheses of the multicut.
                    Combining our steps with the steps from the induction hypotheses, we obtain the desired result.
                    
                    \item We will show that the case where there are no rules from the conclusion and that no $i$ is such that $d_i>0$ and $p_i=0$, is impossible by contradiction. We will construct an infinite sequence of proofs $(\theta_i)_{i\in\mathbb{N}}$ all different and all being hypotheses of the multi-cut.
                    As the sum of all $d_i$ and $p_i$ is not $0$, we have at least one dereliction or one promotion on a cut-formula right on top of the multicut. In both cases, we have a proof $\theta_0:=\pi_j$ ending with a promotion on a cut-formula $F$. This proof is in relation by the $\cutrel$-relation to another proof $\theta_1:=\pi_{j'}$. We know that this proof cannot be $\pi_j$ because the $\cutrel$-relation restricted to sequents is acyclic. As $d_{j'}>0$, we cannot have $p_{j'}=0$ by hypothesis. Therefore, this proof also ends with a promotion on a principal formula which is not from the conclusion. By repeating this process, we obtain the desired sequence $(\theta_i)_{i\in\mathbb{N}}$, giving us a contradiction.
                \end{itemize}
            \end{itemize}
            The statement is therefore true by induction
            \end{proof}   
            We apply this result on this proof with all the $p_i$ being equal to $1$ and with $d_i=\#(\B_i)+\#(\A_i)$.
            We can easily check that the conditions of the lemma are satisfied.
            Moreover, we notice that there will be only one promotion rule commuting under the cut and that it commutes before any dereliction, giving us the translation of the functorial promotion.
    \end{proof}

\section{Appendix on the \Cref{section:muLKmodinfcutElim}}
\label{app:muLKmodinfcutElim}

\subsection{Linear translation of rules}
\label{app:LinTranslation}

\begin{defi}[$\trans{(-)}$-translation of rules]
    The translation into \muLLmodinf{} of rules of \muLKmodinf{} are depicted in \Cref{fig:LinearRuleTranslationfixmod,fig:LinearRuleTranslationMult,fig:LinearRuleTranslationStruct}.
\end{defi}

\begin{figure*}
\begin{align*}
\AIC{\B\vdash F, \A}
\RL{\boxprom}
\UIC{\Box\B\vdash\Box F,\lozenge\A}
\DP\quad&\rightsquigarrow\quad
\AIC{\trans{\B}\vdash\wn\trans{F}, \wn\trans{\A}}
\doubleLine
\RL{\ocde, \wnprom}
\UIC{\oc\wn\trans{\B}\vdash\wn\trans{F}, \wn\trans{\A}}
\RL{\ocprom}
\UIC{\oc\wn\trans{\B}\vdash\oc\wn\trans{F}, \wn\trans{\A}}
\RL{\boxprom}
\UIC{\Box\oc\wn\trans{\B}\vdash\Box\oc\wn\trans{F}, \lozenge\wn\trans{\A}}
\doubleLine
\RL{\ocde}
\UIC{\oc\Box\oc\wn\trans{\B}\vdash\Box\oc\wn\trans{F}, \lozenge\wn\trans{\A}}
\doubleLine
\RL{\wnde, \ocpromloz}
\UIC{\oc\Box\oc\wn\trans{\B}\vdash\wn\oc\Box\oc\wn\trans{F}, \wn\oc\lozenge\wn\trans{\A}}
\DP\\
\AIC{\B, F\vdash \A}
\RL{\diaprom}
\UIC{\Box\B, \lozenge F\vdash\lozenge\A}
\DP\quad&\rightsquigarrow\quad
\AIC{\trans{\B}, \trans{F}\vdash \wn\trans{\A}}
\doubleLine
\RL{\ocde, \wnprom}
\UIC{\oc\wn\trans{\B}, \trans{F}\vdash \wn\trans{\A}}
\RL{\wnprom}
\UIC{\oc\wn\trans{\B}, \wn\trans{F}\vdash \wn\trans{\A}}
\RL{\diaprom}
\UIC{\Box\oc\wn\trans{\B}, \lozenge\wn\trans{F}\vdash \lozenge\wn\trans{\A}}
\doubleLine
\RL{\ocde}
\UIC{\oc\Box\oc\wn\trans{\B},\oc\lozenge\wn\trans{F}\vdash \lozenge\wn\trans{\A}}
\doubleLine
\RL{\wnde, \ocpromloz}
\UIC{\oc\Box\oc\wn\trans{\B}, \oc\lozenge\wn\trans{F}\vdash \wn\oc\lozenge\wn\trans{\A}}
\DP\\
\AIC{\pi}
\noLine
\UIC{\A\vdash F[X:=\delta X. F], \B}
\RL{\delta_r}
\UIC{\A\vdash \delta X. F, \B}
\DP
\quad&\rightsquigarrow\quad
\AIC{\trans{\pi}}
\noLine
\UIC{\trans{\A}\vdash \wn\trans{F}[X:=\delta X. \wn\trans{F}], \wn\trans{\B}}
\RL{\delta_r}
\UIC{\trans{\A}\vdash \delta X.\wn\trans{F}, \wn\trans{\B}}
\RL{\wnde, \ocprom}
\doubleLine
\UIC{\trans{\A}\vdash \wn\oc(\delta X.\wn\trans{F}), \wn\trans{\B}}
\DP\\
\AIC{\pi}
\noLine
\UIC{\A, F[X:=\delta X. F]\vdash \B}
\RL{\delta_l}
\UIC{\A, \delta X. F\vdash\B}
\DP
\quad&\rightsquigarrow\quad
\AIC{\trans{\pi}}
\noLine
\UIC{\trans{\A}, \trans{F}[X:=\delta X. \wn\trans{F}]\vdash \wn\trans{\B}}
\RL{\wnprom}
\UIC{\trans{\A}, \wn\trans{F}[X:=\delta X. \wn\trans{F}]\vdash \wn\trans{\B}}
\RL{\delta_l}
\UIC{\trans{\A}, \delta X.\wn\trans{F}\vdash \wn\trans{\B}}
\RL{\ocde}
\UIC{\trans{\A}, \oc(\delta X.\wn\trans{F})\vdash \wn\trans{\B}}
\DP
\end{align*}
\centering
with $\delta\in\{\mu, \nu\}$
\caption{Translation of the fixed-point and modal fragment of rules of \muLKmodinf{} into \muLLmodinf}\label{fig:LinearRuleTranslationfixmod}
\end{figure*}

\begin{figure*}
\begin{align*}
\AIC{\pi}
\noLine
\UIC{\A, F_1\vdash F_2, \B}
\RL{\rightarrow_r}
\UIC{\A\vdash F_1\rightarrow F_2, \B}
\DP\quad&\rightsquigarrow\quad
\AIC{\trans{\pi}}
\noLine
\UIC{\trans{\A}, \trans{F_1}\vdash \wn\trans{F_2}, \wn\trans{\B}}
\RL{\wnprom}
\UIC{\trans{\A}, \wn\trans{F_1}\vdash \wn\trans{F_2}, \wn\trans{\B}}
\RL{\multimap_r}
\UIC{\trans{\A}\vdash \wn\trans{F_1}\multimap \wn\trans{F_2}, \wn\trans{\B}}
\RL{\wnde, \ocprom}
\doubleLine
\UIC{\trans{\A}\vdash \wn\oc(\wn\trans{F_1}\multimap \wn\trans{F_2}), \wn\trans{\B}}
\DP\\
\AIC{\pi_1}
\noLine
\UIC{\A_1\vdash F_1, \B_1}
\AIC{\pi_2}
\noLine
\UIC{\A_2, F_2\vdash \B_2}
\RL{\rightarrow_r}
\BIC{\A_1, \A_2, F_1\rightarrow F_2\vdash\B_1, \B_2}
\DP\quad&\rightsquigarrow\quad
\AIC{\trans{\pi_1}}
\noLine
\UIC{\trans{\A_1}\vdash\wn\trans{F_1}, \wn\trans{\B_2}}
\AIC{\trans{\pi_2}}
\noLine
\UIC{\trans{\A_2}, \trans{F_2}\vdash \wn\trans{\B_2}}
\RL{\wnprom}
\UIC{\trans{\A_2}, \wn\trans{F_2}\vdash \wn\trans{\B_2}}
\RL{\multimap_l}
\BIC{\trans{\A_1}, \trans{\A_2}, \wn\trans{F_1}\multimap \wn\trans{F_2}\vdash\wn\trans{\B_1}, \wn\trans{\B_2}}
\RL{\ocde}
\UIC{\trans{\A_1}, \trans{\A_2}, \oc(\wn\trans{F_1}\multimap \wn\trans{F_2})\vdash\wn\trans{\B_1}, \wn\trans{\B_2}}
\DP\\
\AIC{\pi_1}
\noLine
\UIC{\A\vdash F_1, \B}
\AIC{\pi_2}
\noLine
\UIC{\A\vdash F_2, \B}
\RL{\land_r}
\BIC{\A\vdash F_1\land F_2, \B}
\DP
\quad&\rightsquigarrow\quad
\AIC{\trans{\pi_1}}
\noLine
\UIC{\trans{\A}\vdash \wn\trans{F_1}, \wn\trans{\B}}
\AIC{\trans{\pi_2}}
\noLine
\UIC{\trans{\A}\vdash \wn\trans{F_2}, \wn\trans{\B}}
\RL{\with_r}
\BIC{\trans{\A}\vdash \wn \trans{F_1}\with \wn\trans{F_2}, \wn\trans{\B}}
\RL{\wnde, \ocprom}
\doubleLine
\UIC{\trans{\A}\vdash \wn\oc(\wn \trans{F_1}\with \wn\trans{F_2}), \wn\trans{\B}}
\DP\\
\AIC{\pi}
\noLine
\UIC{\A, F_i\vdash\B}
\RL{\land_l^i}
\UIC{\A, F_1\land F_2\vdash \B}
\DP
\quad&\rightsquigarrow\quad
\AIC{\trans{\pi}}
\noLine
\UIC{\trans{\A}, \trans{F_i}\vdash \wn\trans{\B}}
\RL{\wnprom}
\UIC{\trans{\A}, \wn\trans{F_i}\vdash \wn\trans{\B}}
\RL{\with_l^i}
\UIC{\trans{\A}, \wn\trans{F_1}\with\wn\trans{F_2}\vdash \wn\trans{\B}}
\RL{\ocde}
\UIC{\trans{\A}, \oc(\wn\trans{F_1}\with\wn\trans{F_2})\vdash \wn\trans{\B}}
\DP\\
\AIC{\pi_1}
\noLine
\UIC{\A, F_1\vdash\B}
\AIC{\pi_2}
\noLine
\UIC{\A, F_2\vdash \B}
\RL{\lor_l}
\BIC{\A, F_1\lor F_2\vdash\B}
\DP
\quad&\rightsquigarrow\quad
\AIC{\trans{\pi_1}}
\noLine
\UIC{\trans{\A},\trans{F_1}\vdash\wn\trans{\B}}
\RL{\wnprom}
\UIC{\trans{\A},\wn\trans{F_1}\vdash\wn\trans{\B}}
\AIC{\trans{\pi_2}}
\noLine
\UIC{\trans{\A}, \trans{F_2}\vdash \wn\trans{\B}}
\RL{\wnprom}
\UIC{\trans{\A}, \wn\trans{F_2}\vdash \wn\trans{\B}}
\RL{\oplus_l}
\BIC{\trans{\A}, \wn\trans{F_1}\oplus \wn\trans{F_2}\vdash\wn\trans{\B}}
\RL{\ocde}
\UIC{\trans{\A}, \oc(\wn \trans{F_1}\oplus \wn\trans{F_2})\vdash \wn\trans{\B}}
\DP\\
\AIC{\pi}
\noLine
\UIC{\A\vdash F_i, \B}
\RL{\lor_r^i}
\UIC{\A\vdash F_1\lor F_2, \B}
\DP
\quad&\rightsquigarrow\quad
\AIC{\trans{\pi}}
\noLine
\UIC{\trans{\A}\vdash  \wn\trans{F_i}, \wn\trans{\B}}
\RL{\oplus_r^i}
\UIC{\trans{\A} \vdash \wn\trans{F_1}\oplus\wn\trans{F_2}, \wn\trans{\B}}
\RL{\wnde, \ocprom}
\doubleLine
\UIC{\trans{\A} \vdash \wn\oc(\wn\trans{F_1}\oplus\wn\trans{F_2}), \wn\trans{\B}}
\DP
\end{align*}
\caption{Translation of \LK{} connective-rules of \muLKmodinf{} into \muLLmodinf}\label{fig:LinearRuleTranslationMult}
\end{figure*}

\begin{figure*}
\begin{align*}
\AIC{}
\RL{\ax}
\UIC{F\vdash F}
\DP
\quad&\rightsquigarrow\quad
\AIC{}
\RL{\ax}
\UIC{\trans{F}\vdash \trans{F}}
\RL{\wnde}
\UIC{\trans{F}\vdash \wn\trans{F}}
\DP\\
\AIC{\pi_1}
\noLine
\UIC{\A_1\vdash F, \B_1}
\AIC{\pi_2}
\noLine
\UIC{\A_2, F\vdash\B_2}
\RL{\cut}
\BIC{\A_1, \A_2\vdash \B_1, \B_2}
\DP
\quad&\rightsquigarrow\quad
\AIC{\trans{\pi_1}}
\noLine
\UIC{\trans{\A_1}\vdash \wn \trans{F}, \wn\trans{\B_1}}
\AIC{\trans{\pi_2}}
\noLine
\UIC{\trans{\A_2}, \trans{F}\vdash \wn\trans{\B_2}}
\RL{\wnprom}
\UIC{\trans{\A_2}, \wn\trans{F}\vdash \wn\trans{\B_2}}
\RL{\cut}
\BIC{\trans{\A_1}, \trans{\A_2}\vdash \trans{\B_1}, \trans{\B_2}}
\DP\\
\AIC{}
\RL{\true_r}
\UIC{\A\vdash \true, \B}
\DP
\quad&\rightsquigarrow\quad
\AIC{}
\RL{\top_r}
\UIC{\trans{\A}\vdash \top, \wn\trans{\B}}
\RL{\wnde,\ocprom}
\doubleLine
\UIC{\trans{\A}\vdash \wn\oc\top, \wn\trans{\B}}
\DP\\
\AIC{}
\RL{\false_l}
\UIC{\A, \false\vdash \B}
\DP
\quad&\rightsquigarrow\quad
\AIC{}
\RL{0_l}
\UIC{\trans{\A}, 0\vdash\wn\trans{\B}}
\RL{\ocde}
\UIC{\trans{\A}, \oc 0\vdash\wn\trans{\B}}
\DP\\
\AIC{\pi}
\noLine
\UIC{\A_1, G, F, \A_2\vdash \B}
\RL{\exch_l}
\UIC{\A_1, F, G, \A_2\vdash\B}
\DP\quad&\rightsquigarrow\quad
\AIC{\trans{\pi}}
\noLine
\UIC{\trans{\A_1}, \trans{G}, \trans{F}, \trans{\A_2}\vdash \wn\trans{\B}}
\RL{\exch_l}
\UIC{\trans{\A_1}, \trans{F}, \trans{G}, \trans{\A_2} \vdash\wn\trans{\B}}
\DP\\
\AIC{\pi}
\noLine
\UIC{\A\vdash \B_1, G, F, \B_2}
\RL{\exch_r}
\UIC{\A\vdash\B_1, F, G, \B_2}
\DP\quad&\rightsquigarrow\quad
\AIC{\trans{\pi}}
\noLine
\UIC{\trans{\A}\vdash \wn\trans{\B_1}, \wn\trans{G}, \wn\trans{F}, \wn\trans{\B_2}}
\RL{\exch_r}
\UIC{\trans{\A}\vdash\wn\trans{\B_1}, \wn\trans{F}, \wn\trans{G}, \wn\trans{\B_2}}
\DP\\
\AIC{\pi}
\noLine
\UIC{\A\vdash\B}
\RL{\lkwk_l}
\UIC{\A, F\vdash \B}
\DP
\quad&\rightsquigarrow\quad
\AIC{\trans{\pi}}
\noLine
\UIC{\trans{\A}\vdash\wn\trans{\B}}
\RL{\ocwk}
\UIC{\trans{\A}, \trans{F}\vdash \wn\trans{\B}}
\DP\\
\AIC{\pi}
\noLine
\UIC{\A\vdash\B}
\RL{\lkwk_r}
\UIC{\A\vdash F, \B}
\DP
\quad&\rightsquigarrow\quad
\AIC{\trans{\pi}}
\noLine
\UIC{\trans{\A}\vdash\wn\trans{\B}}
\RL{\wnwk}
\UIC{\trans{\A} \vdash \wn\trans{F}, \wn\trans{\B}}
\DP\\
\AIC{\pi}
\noLine
\UIC{\A, F, F\vdash\B}
\RL{\lkcontr_l}
\UIC{\A, F\vdash \B}
\DP
\quad&\rightsquigarrow\quad
\AIC{\trans{\pi}}
\noLine
\UIC{\trans{\A}, \trans{F}, \trans{F}\vdash\wn\trans{\B}}
\RL{\occontr{}}
\UIC{\trans{\A}, \trans{F}\vdash \wn\trans{\B}}
\DP\\
\AIC{\pi}
\noLine
\UIC{\A\vdash F, F, \B}
\RL{\lkcontr_r}
\UIC{\A\vdash F, \B}
\DP
\quad&\rightsquigarrow\quad
\AIC{\trans{\pi}}
\noLine
\UIC{\trans{\A}\vdash \wn\trans{F}, \wn\trans{F}, \wn\trans{\B}}
\RL{\wncontr}
\UIC{\trans{\A} \vdash \wn\trans{F}, \wn\trans{\B}}
\DP\\
\AIC{\pi}
\noLine
\UIC{\A\vdash F, \B}
\RL{(-)^\perp_l}
\UIC{\A, F^\perp\vdash\B}
\DP
\quad&\rightsquigarrow\quad
\AIC{\trans{\pi}}
\noLine
\UIC{\trans{\A}\vdash \wn\trans{F}, \wn\trans{\B}}
\RL{(-)^\perp_l}
\UIC{\trans{\A}, (\wn\trans{F})^\perp \vdash \wn\trans{\B}}
\RL{\ocde}
\UIC{\trans{\A}, \oc((\wn\trans{F})^\perp)\vdash \wn\trans{\B}}
\DP\\
\AIC{\pi}
\noLine
\UIC{\A, F\vdash\B}
\RL{(-)^\perp_r}
\UIC{\A\vdash F^\perp, \B}
\DP
\quad&\rightsquigarrow\quad
\AIC{\trans{\pi}}
\noLine
\UIC{\trans{\A}, \trans{F} \vdash \wn\trans{\B}}
\RL{\wnprom}
\UIC{\trans{\A}, \wn\trans{F} \vdash \wn\trans{\B}}
\RL{(-)^\perp_r}
\UIC{\trans{\A} \vdash (\wn\trans{F})^\perp, \wn\trans{\B}}
\RL{\wnde, \ocprom}
\doubleLine
\UIC{\trans{\A}\vdash \wn\oc((\wn\trans{F})^\perp), \wn\trans{\B}}
\DP
\end{align*}
\caption{Translation of structural \& unit fragment of \muLKmodinf{} into \muLLmodinf}\label{fig:LinearRuleTranslationStruct}
\end{figure*}

\subsection{Details on the proof of \Cref{linearSkComp}}
\label{sec:app:linearSkComp}

\begin{lem}[Composition of $\sk{-}$ and of $\trans{(-)}$]
    \label{app:linearSkComp}
    Let $\pi$ be a \muLKmodinf{} pre-proof. We have that $\sk{\trans{\pi}}$ is equal to $\pi$.
    \end{lem}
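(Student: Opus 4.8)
The statement to prove is \Cref{linearSkComp}: for any \muLKmodinf{} pre-proof $\pi$, one has $\sk{\trans{\pi}} = \pi$. Since both $\trans{(-)}$ and $\sk{-}$ are defined coinductively by applying fixed rule-by-rule translations, the natural strategy is to argue by coinduction on the structure of $\pi$ (equivalently, to exhibit the identity as a bisimulation between $\sk{\trans{\pi}}$ and $\pi$). The plan is to reduce the proof-level statement to two local lemmas: (i) a \emph{formula-level} identity $\sk{\trans{A}} = A$ for every \muLKmodinf{} formula $A$, and more generally $\sk{\wn\trans{A}} = A$ and $\sk{\trans{\A}} = \A$, $\sk{\wn\trans{\B}} = \B$ on contexts, so that the translated-and-skeletonized sequent $\sk{\trans{(\A\vdash\B)}}$ coincides with $\A\vdash\B$; and (ii) a \emph{rule-level} identity: for each inference rule $(r)$ of \muLKmodinf{} with conclusion $s$ and premises $s_1,\dots,s_k$, the composite derivation $\sk{\trans{(r)}}$ — i.e. apply the $\trans{(-)}$-translation of $(r)$ from \Cref{fig:LinearRuleTranslationfixmod,fig:LinearRuleTranslationMult,fig:LinearRuleTranslationStruct,fig:LinearRuleTranslationMod}, then apply $\sk{-}$ to that whole block from \Cref{fig:skNonModruletranslation,fig:skModruletranslation} — collapses exactly to a single application of $(r)$ with conclusion $s$ and premises $s_1,\dots,s_k$ (up to the implicit exchanges the paper already treats silently).

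\textbf{Step 1: the formula-level identity.} First I would prove $\sk{\trans{A}} = A$ by induction on $A$. The key point is that $\trans{A}$ always has the shape $\oc(\cdots)$ with a $\wn$ on each immediate subformula slot, and $\sk{-}$ erases both $\oc$ and $\wn$: e.g. $\sk{\trans{(A_1\land A_2)}} = \sk{\oc(\wn\trans{A_1}\with\wn\trans{A_2})} = \sk{\wn\trans{A_1}}\land\sk{\wn\trans{A_2}} = \sk{\trans{A_1}}\land\sk{\trans{A_2}} = A_1\land A_2$ by the induction hypothesis, and similarly for $\lor$, $\rightarrow$, $(-)^\perp$, $\Box$ (where $\trans{\Box A} = \oc\Box\oc\wn\trans{A}$ and $\sk{\oc\Box\oc\wn\trans{A}} = \Box\sk{\trans{A}}$), $\lozenge$, $\true$, $\false$, atoms, variables, and the fixed points $\mu,\nu$ (using that $\sk{-}$ and $\trans{(-)}$ both commute with substitution of a fixed-point formula for its variable, a fact that also needs a short inductive check). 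From this, $\sk{\wn\trans{A}}=A$ is immediate, and the sequent identity $\sk{\trans{(\A\vdash\B)}} = \sk{\trans{\A}\vdash\wn\trans{\B}} = \A\vdash\B$ follows pointwise.

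\textbf{Step 2: the rule-level identity and the coinductive conclusion.} Next I would go through each rule $(r)$ of \muLKmodinf{} and check, using Step 1 to identify the formulas, that applying $\sk{-}$ to the derivation $\trans{(r)}$ yields precisely one instance of $(r)$. This is largely bookkeeping: every auxiliary linear rule introduced by $\trans{(-)}$ — the $\ocde$, $\wnde$, $\wnprom$, $\ocprom$, $\ocpromloz$, $\wnpromloz$ wrappers — is sent by $\sk{-}$ either to nothing or to an exchange (see \Cref{fig:skModruletranslation}: $\sk{-}$ of $\ocde,\wnde,\ocprom,\wnprom$ is the identity-on-premise step), while the one ``genuine'' rule in the block (a $\land_l$, $\lor_r$, $\boxprom$, etc.) is sent to the corresponding \muLKmodinf{} rule with the right formulas by Step 1. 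One must be a little careful with the rules whose $\trans{(-)}$-image uses $\with_l^i/\oplus_r^i$ and an extra $\ocde$ (the $\land/\lor$ cases), with the modal rules of \Cref{fig:LinearRuleTranslationMod} where there is an iterated $(\ocde,\wnprom)^{\#(\A)}$ block collapsing to nothing under $\sk{-}$, and with $\ax$, $\cut$, and the exchange rules; in each case the collapse is routine. Having established the local identity for every rule, the global statement follows: $\sk{\trans{(-)}}$ sends a pre-proof built coinductively from rules $r_0, r_1,\dots$ to the pre-proof built coinductively from $\sk{\trans{r_0}}, \sk{\trans{r_1}},\dots = r_0, r_1,\dots$, which is $\pi$ itself — formally, the relation ``$\sigma$ is a subtree of $\sk{\trans{\pi}}$ at address $\alpha$ and $\sigma'$ is the subtree of $\pi$ at $\alpha$'' is a bisimulation on pre-proofs, hence $\sk{\trans{\pi}}=\pi$.

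\textbf{Main obstacle.} I expect the only real friction to be the treatment of the implicit exchange rules: since the paper silently wraps rules in finitely many $(\exch)$-applications, $\sk{\trans{(r)}}$ will not be syntactically a single $(r)$ but $(r)$ surrounded by exchanges, and one has to be comfortable identifying these (which the paper's convention licenses). A secondary nuisance is making precise the commutation of $\sk{-}$, $\trans{(-)}$, and coinduction with fixed-point unfolding/substitution, needed both in Step 1 (the $\mu_r/\nu_r$ and $\mu_l/\nu_l$ cases) and to see that the back-edges of a regular $\pi$ are preserved — but this is standard and can be handled by noting that substitution commutes with both translations on the nose at the formula level.
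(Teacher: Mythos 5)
Your proposal is correct and follows essentially the same route as the paper's own proof: the paper likewise observes that $\trans{(-)}$ turns each \muLKmodinf{} rule into a derivation consisting of the corresponding rule plus only exponential decorations, which $\sk{-}$ erases, and then concludes coinductively. Your version merely spells out the formula-level identity $\sk{\trans{A}}=A$ and the per-rule bookkeeping that the paper leaves implicit.
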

    \begin{proof}
    This comes from the fact that $\trans{(-)}$-translation translates each rules $(r)$ of \muLKmodinf{} to a derivation containing the pre-image of $(r)$ by the translation $\sk{}$, adding only exponential rules. As exponential rules disappears from the proof by $\sk{}$, we get that $\sk{\trans{r}}$ is equal to $(r)$. We coinductively apply this result on pre-proofs.
    \end{proof}



\end{document}